
\documentclass{article}
\usepackage{arxiv}

\usepackage{tikz}
\usetikzlibrary{shapes,backgrounds,patterns,calc}
\usepackage{booktabs} 
\usepackage{hyperref}

\usepackage{pict2e,picture,graphicx}
\usepackage[overload]{empheq}
\usepackage{microtype}
\usepackage{graphicx}
\usepackage{subfigure}
\usepackage{booktabs}
\usepackage{amsmath}
\usepackage{cases}
\usepackage{mathtools}
\usepackage{amsthm}
\usepackage{thm-restate}
\usepackage{enumitem}
\usepackage{xcolor}
\usepackage{nicefrac, xfrac}
\usepackage[capitalize,noabbrev]{cleveref}
\usepackage{xparse}
\usepackage{hyperref}
\usepackage{bbm}
\usepackage{yfonts}
\usepackage{subfigure}
\usepackage{nicefrac} 
\usepackage{multirow}
\usepackage{bm}
\usepackage{bbm}
\usepackage{xspace}
\usepackage{amsfonts}
\usepackage{wrapfig}
\usepackage{algorithm}
\usepackage{algorithmic}

\tikzset{
	hyperlink node/.style={
		alias=sourcenode,
		append after command={
			let \p1 = (sourcenode.south west),
			\p2 = (sourcenode.south east),
			\p3 = (sourcenode.north east),
			\p4 = (sourcenode.north west),
			\p5 = (sourcenode.center)
			in
			node [inner sep=0pt, outer sep=0pt,at=(\p1)] {\pdfsavepos%
				\writeAux{}{%
					\string\expandafter\string\xdef\string\csname\space srcnd\thesrcnd.x1\string\endcsname{%
						\noexpand\hypercalcbp{\noexpand\number\pdflastxpos sp}}%
					\string\expandafter\string\xdef\string\csname\space srcnd\thesrcnd.y1\string\endcsname{%
						\noexpand\hypercalcbp{\noexpand\number\pdflastypos sp}}}%
			}
			node [inner sep=0pt, outer sep=0pt,at=(\p2)] {\pdfsavepos%
				\writeAux{}{%
					\string\expandafter\string\xdef\string\csname\space srcnd\thesrcnd.x2\string\endcsname{%
						\noexpand\hypercalcbp{\noexpand\number\pdflastxpos sp}}%
					\string\expandafter\string\xdef\string\csname\space srcnd\thesrcnd.y2\string\endcsname{%
						\noexpand\hypercalcbp{\noexpand\number\pdflastypos sp}}}%
			}
			node [inner sep=0pt, outer sep=0pt,at=(\p3)] {\pdfsavepos%
				\writeAux{}{%
					\string\expandafter\string\xdef\string\csname\space srcnd\thesrcnd.x3\string\endcsname{%
						\noexpand\hypercalcbp{\noexpand\number\pdflastxpos sp}}%
					\string\expandafter\string\xdef\string\csname\space srcnd\thesrcnd.y3\string\endcsname{%
						\noexpand\hypercalcbp{\noexpand\number\pdflastypos sp}}}%
			}
			node [inner sep=0pt, outer sep=0pt,at=(\p4)] {\pdfsavepos%
				\writeAux{}{%
					\string\expandafter\string\xdef\string\csname\space srcnd\thesrcnd.x4\string\endcsname{%
						\noexpand\hypercalcbp{\noexpand\number\pdflastxpos sp}}%
					\string\expandafter\string\xdef\string\csname\space srcnd\thesrcnd.y4\string\endcsname{%
						\noexpand\hypercalcbp{\noexpand\number\pdflastypos sp}}}%
			}
			node [inner sep=0pt, outer sep=0pt,at=(\p5)] {%
				\makebox[0pt][c]{{%
						\edef\QuadPoints{/QuadPoints [
							\csname srcnd\thesrcnd.x1\endcsname\space\csname srcnd\thesrcnd.y1\endcsname\space
							\csname srcnd\thesrcnd.x2\endcsname\space\csname srcnd\thesrcnd.y2\endcsname\space
							\csname srcnd\thesrcnd.x3\endcsname\space\csname srcnd\thesrcnd.y3\endcsname\space
							\csname srcnd\thesrcnd.x4\endcsname\space\csname srcnd\thesrcnd.y4\endcsname\space
							]}%
						\hyperlink{#1}{%
							\raisebox{-\height}{%
								\tikz \useasboundingbox (\p1) -- (\p2) -- (\p3) -- (\p4) -- cycle;
							}%
						}%
				}}%
				\stepcounter{srcnd}%
			}
		}
	}
}

\newcommand{\cC} {{\mathcal{C}}}
\newcommand{\cD} {{\mathcal{D}}}

\newcommand{\cP} {{\mathcal{P}}}

\newcommand{\cX} {{\mathcal{X}}}

\renewcommand{\vec}[1]{\bm{#1}}

\newcommand{\vc}{\vec{c}}

\newcommand{\vp}{\vec{p}}

\newcommand{\vgamma}{\vec{\gamma}}

\newcommand{\cvec}{\boldsymbol{c}}

\newcommand{\Fvec}{\boldsymbol{F}}
\newcommand{\rvec}{\boldsymbol{r}}
\newcommand{\pvec}{\boldsymbol{p}}
\newcommand{\gvec}{\boldsymbol{\gamma}}
\newcommand{\supp}{\textnormal{supp}}
\newcommand{\mcP}{\mathcal{P}}
\newcommand{\mcPr}{\mathcal{P}^\textsc{R}}
\newcommand{\mcM}{\mathcal{M}}
\newcommand{\mcMr}{\mathcal{M}^\textsc{R}}
\newcommand{\mcC}{\mathcal{C}}
\newcommand{\mcR}{\mathcal{R}}
\newcommand{\Ua}{U^{\mathfrak{a}}}
\newcommand{\Up}{U^{\mathfrak{p}}}
\newcommand{\term}[1]{\ensuremath{\textnormal{\texttt{#1}}}\xspace}
\newcommand{\OPT}{{\term{OPT}}}

\newcommand{\reg}{\mathcal{R}}

\newcommand{\cY} {\ensuremath{\mathcal{Y}}}

\usepackage{color}
\definecolor{mygreen}{rgb}{0.0, 0.5, 0.0}
\definecolor{myorange}{rgb}{0.55, 0.62, 1}

\newtheorem{observation}{Observation}

\makeatletter
\newcommand{\pushright}[1]{\ifmeasuring@#1\else\omit\hfill$\displaystyle#1$\fi\ignorespaces}
\newcommand{\pushleft}[1]{\ifmeasuring@#1\else\omit$\displaystyle#1$\hfill\fi\ignorespaces}
\newcommand{\specialcell}[1]{\ifmeasuring@#1\else\omit$\displaystyle#1$\ignorespaces\fi}
\makeatother

\newtheorem{theorem}{Theorem}[section]
\newtheorem*{theorem-non}{Theorem}
\newtheorem{definition}[theorem]{Definition}
\newtheorem{corollary}[theorem]{Corollary}

\usepackage{natbib}

\title{Regret-Minimizing Contracts:\\Agency Under Uncertainty}

\author{
	Martino Bernasconi\\
	Bocconi University\\
	\texttt{martino.bernasconi@unibocconi.it}
	\And
	Matteo Castiglioni\\
	Politecnico di Milano\\
	\texttt{matteo.castiglioni@polimi.it}
	 \And
	Alberto Marchesi\\
	Politecnico di Milano\\
	\texttt{alberto.marchesi@polimi.it}
}

\begin{document}
	
\maketitle


	
	

	\begin{abstract}
	We study the fundamental problem of designing contracts in \emph{principal-agent problems under uncertainty}.
	Previous works mostly addressed \emph{Bayesian} settings in which principal's uncertainty is modeled as a probability distribution over agent's types.
	In this paper, we study a setting in which the principal has no distributional information about agent's type.
	In particular, in our setting, the principal only knows some uncertainty set defining possible agent's action costs.
	Thus, the principal takes a \emph{robust (adversarial) approach} by trying to design contracts which minimize the \emph{(additive) regret}: the maximum difference between what the principal could have obtained had them known agent's costs and what they actually get under the selected contract.
	
	We provide a \emph{complete} characterization of the regret achievable by different classes of contracts, namely deterministic contracts, randomized contracts, and menus of deterministic/randomized contracts.
	Surprisingly, we show that \emph{deterministic contracts are sufficient to achieve worst-case-optimal regret}.
	%
	In particular, deterministic contracts attain regret at most $O(\sqrt{\delta})$ as $\delta \to 0$, where $\delta$ is a measure of the uncertainty level of the instance.
	This result crucially relies on a ``linearization'' operation inspired by \emph{linear} contracts. 
	Additionally, we prove that the regret is $\frac{1}{2}$-H\"older continuous as a function of $\delta$.
	This result is tight even when considering the regret attained by the most general class of menus of randomized contracts.
	Moreover, we analyze the relative power of different classes of contracts in minimizing the regret, showing that randomization and menus are equally useful, in different ways, in order to minimize the regret.
	
	We conclude the paper by studying the computational complexity of finding regret-minimizing contracts.
	While the problem is intractable even for deterministic contracts, we provide a general algorithmic template to efficiently compute approximately-regret-minimizing contracts.
	Then, we show how to instantiate our template in different principal-agent settings and for different classes of contracts, deriving several efficient approximation algorithms to find regret-minimizing contracts.
\end{abstract}

\section{Introduction}

%
\emph{Principal-agent problems}~\citep{grossman1992analysis} model scenarios in which a \emph{principal} interacts with an \emph{agent} that takes an action inducing some externalities on the former.
The vast majority of the works on principal-agent problems focus on \emph{hidden-action} models, where the principal does \emph{not} observe the action played by the agent, but they only observe an \emph{outcome} that is stochastically determined as an effect of such an action.
In these models, the principal gets a reward associated with the realized outcome, while the agent incurs in a cost for their action.
Thus, the goal of the principal is to induce the agent to take an action leading to desirable outcomes.
This is accomplished by committing upfront to a \emph{contract}, which defines a payment from the principal to the agent for every possible outcome, so as to incentivize the agent to take the desired action.

The problem of \emph{designing contracts} in principal-agent settings raises several interesting challenges at the intersection of computer science and economics.
These challenges have attracted a growing interest from the economics and computation community (see, \emph{e.g.},~\citep{dutting2019simple,guruganesh2021contracts,castiglioni2022bayesian}).  
This is also motivated by the fact that principal-agent problems are ubiquitous in real-world scenarios, such as, \emph{e.g.}, crowdsourcing platforms~\citep{ho2016adaptive}, blockchain-based smart contracts~\citep{cong2019blockchain}, and healthcare~\citep{bastani2016analysis}.

A crucial question concerning principal-agent problems that still needs to be fully answered is the following: \emph{How a principal should effectively design contracts when uncertain about agent's~features?}
Most of the previous works on principal-agent problems consider settings where the principal knows everything about the agent.
Recently, some works (see, \emph{e.g.},~\citep{alon2021contracts,guruganesh2021contracts,castiglioni2023designing}) tackled the question above by studying \emph{Bayesian} principal-agent problems.
In Bayesian settings, principal's uncertainty is modeled as a (known) probability distribution over agent's types, with types encoding relevant features of the agent (such as, \emph{e.g.}, action costs).
In this paper, we study
principal-agent settings in which the principal has \emph{no} distributional information about agent's costs, through the computational lens.
While some works (see, \emph{e.g.},~\citep{carroll2015robustness,walton2022general}) studied specific principal-agent settings with \emph{non-Bayesian} uncertainty, in this paper we address arbitrary uncertainty over costs by taking a more computational approach. 
%

We introduce and study a \emph{principal-agent problem under uncertainty} (called PAPU for short) in which the principal is uncertain about agent's action costs.
In particular, in a PAPU instance, the principal only knows an \emph{uncertainty set} $\mcC$ containing all the possible agent's costs.
Since no probability distribution over $\mcC$ is available, we take a \emph{robust (adversarial) approach}, following the same line pursued by~\citet{babichenko2022regret} in related Bayesian persuasion settings.
%
%
Thus, we address the problem faced by a principal who wants to design contracts $\Gamma$ that perform well against \emph{all} agent's costs in $\mcC$, \emph{i.e.}, contracts minimizing the \emph{(additive) regret} $\mcR(\Gamma)$. This is defined as follows:
\[
	\mcR(\Gamma) := \sup_{\vc\in  \mcC}  \Big\{  \OPT(\vc) - \Up(\Gamma, \vc) \Big\}.
\]
In words, the regret $\mcR(\Gamma)$ is the maximum difference---over all agent's costs $\vc$ in the uncertainty set $\mcC$---between $ \OPT(\vc) $, which is the expected utility that the principal could have achieved had them known agent's costs, and $\Up(\Gamma, \vc)$, which is what the principal actually gets under contract $\Gamma$.

\subsection{Our Contributions}

At a conceptual level, we split the paper into two parts.

\subsubsection{Part One: Worst-Case Regret Guarantees}

In the first part of the paper, we provide a \emph{complete} characterization of the minimum possible regret achievable by means of different classes of contracts, namely deterministic contracts, randomized contracts, menus of deterministic contracts, and menus of randomized contracts.
%
%
Our characterization shows how the regret depends on a quantity---called $\delta$ in the following---measuring the \emph{uncertainty level} of PAPU instances.
In particular, for any class of contracts $\cX$, we analyze $\inf_{\Gamma \in \cX} \mcR(\Gamma)$, \emph{i.e.}, the minimum regret attainable by contacts in $\cX$, by taking its \emph{worst-case} value over all the PAPU instances characterized by a given uncertainty level $\delta$.

As a preliminary step, in Section~\ref{sec:existence}, we investigate the existence of regret-minimizing contracts in each possible class of contracts.
We prove that, in the classes of contracts that do \emph{not} involve randomization, a regret-minimizing contract always exists.
Moreover, we show that, in classes involving randomization, a regret-minimizing contract may \emph{not} exist.
Such a dichotomy between deterministic and randomized contracts had already been discovered in the context of {Bayesian} principal-agent problems (see, \emph{e.g.},~\citep{castiglioni2023designing}).
Our results corroborate it, by showing that it also arises in our setting with distribution-free principal's uncertainty.

In Section~\ref{sec:worstCase}, we provide the core pillars of our characterization. 
%
%
Our first fundamental result is concerned with the regret $\mcR(\vp)$ attained by deterministic contracts $\vp \in \mcP$, and it can be informally stated as follows (see Theorem~\ref{thm:det_regret_upper} and Proposition~\ref{thm:regret_lower} for formal statements).
\begin{theorem-non}[Informal]
	Deterministic contracts achieve worst-case-optimal regret. More formally, $\inf_{\vp \in \mcP} \mcR(\vp)$ is at most $O(\sqrt{\delta} )$ as $\delta \to 0$, while there are PAPU instances where the minimum possible achievable regret is at least $\Omega(\sqrt{\delta})$ as $\delta \to 0$, even when considering menus of randomized contracts. 
\end{theorem-non}
The result above surprisingly shows that deterministic contracts are sufficient to achieve the minimum possible regret in the \emph{worst case}.
The crucial insight behind the result above resides in a particular ``linearization'' operation that makes any deterministic contract robust against \emph{all} the possible agent's costs.
In particular, it consists in adding a fraction of principal's utilities (\emph{i.e.}, rewards minus payments) to the contract.
%
%
This ensures that, if the original contract performs well against an agent with costs defined by some element of $\mcC$, then the ``linearized'' contract has good performances also when agent's costs are specified by \emph{any} other element of $\mcC$.
A similar ``linearization'' operation has been employed to deal with approximate-incentive compatibility of contracts in other settings (see, \emph{e.g.},~\citep{dutting2021complexity,castiglioni23multi}).
To the best of our knowledge, we are the first to employ an approach of this kind to tackle agent's costs uncertainty.
Notice that, despite \emph{linear} contracts~\citep{dutting2019simple} are \emph{not} sufficient to achieve worst-case-optimal regret in our setting, the result above shows that they still play a fundamental role.
This also corroborates the importance of linear contracts in designing robust payment schemes.

%
In Section~\ref{sec:worstCase}, we also analyze how the regret achievable by deterministic contracts varies as a function of the uncertainty level.
%
We consider a family of PAPU instances with (convex) uncertainty sets $\mcC_\delta(\vc_0)$ obtained by suitably ``scaling'' a base uncertainty set $\mcC$ with respect to some $\cvec_0 \in \mcC$, where $\delta \in [0,1]$ is a parameter that allows to control the \emph{varying} uncertainty level of the instances.
%
%
%
An example of set $\mcC_\delta(\vc_0)$ is the $L_\infty$-norm ball centered in $\vc_0$, with a (varying) radius $\delta$ representing the uncertainty level.
%
%
The ``linearization'' operation introduced above allows us to derive our second fundamental result of Section~\ref{sec:worstCase}, informally stated as follows (Theorem~\ref{thm:change_diameter} is its formal version).
\begin{theorem-non}[Informal]
	Given a family of PAPU instances with (convex) uncertainty sets $\mcC_\delta(\vc_0)$ obtained by ``scaling'' a base uncertainty set $\mcC$ with respect to $\vc_0 \in \mcC$ and according to a parameter $\delta \in [0,1]$ controlling the uncertainty level of the instances, the regret $\inf_{\vp \in \mcP} \mcR_{\delta, \vc_0}(\vp)$ attainable by means of deterministic contracts is $\frac{1}{2}$-H\"older continuous as a function of $\delta$.
	%
\end{theorem-non}
Intuitively, the result above shows that, if one modifies the uncertainty level of a given PAPU instance by suitably ``scaling'' its uncertainty set, then the regret attainable by deterministic contracts changes proportionally to the square root of the uncertainty level variation.  
%
%
This may be of interest in several real-world applications.
For instance, when the principal builds the uncertainty set by collecting data on agent's costs through experiments, it is natural to assume that the uncertainty set progressively shrinks as new data becomes available.
In such settings, the result above formalizes how the performance of deterministic contracts varies as the uncertainty set shrinks.
This may allow the principal to thoroughly evaluate the possible benefits of acquiring new data.

Finally, in Section~\ref{sec:robustness}, we complete our characterization by analyzing the relationship between different classes of contracts in terms of regret.
In particular, we show that, in order to minimize regret, \emph{both} randomization and menus may be useful.
Indeed, we show that there are PAPU instances in which the regret attained by menus of deterministic contracts is $\Omega(\sqrt{\delta})$ (as $\delta \to 0$) more than the regret achievable by means of randomized contracts, while there are other PAPU instances in which the opposite holds.
This shows that randomization and menus are incomparable in terms of regret-minimization power, and either one of them may be needed depending on the setting.

\subsubsection{Part Two: Computing Regret-Minimizing Contracts}
In the second part of the paper, we abandon the worst-case analysis perspective to focus on the computational problem of finding a regret-minimizing contract in a given PAPU instance.

As a preliminary step, we show that, in general instances, the problem of computing a regret-minimizing contract is \textsf{APX}-hard, even when considering the easiest case of deterministic contracts.
Thus, in the remaining of the paper, we study under which circumstances such a negative result can be circumvented.
This raises considerable computational challenges, since the uncertainty set may contain an infinite number of possible agent's costs, and, even when one can restrict to a finite subset of them, this may contain a ``prohibitively large'' number of elements.

In Section~\ref{sec:beyond}, we introduce a general template that allows to compute approximately-regret-minimizing contracts.
The template is generic, as it works for any given class of contracts $\cX$.
At a high level, it works by constructing a finite $\epsilon$-cover of the uncertainty set $\mcC$.
This is a subset of possible agent's costs such that, for every $\vc \in \mcC$, there is an element in the subset within $\epsilon$ ``distance'' from $\vc$.
Then, the template computes a contract in $\cX$ minimizing the regret against all agents with costs in the finite $\epsilon$-cover.
Finally, it applies a ``linearization'' procedure to the obtained contract.
This is similar in nature to the operation used in Section~\ref{sec:worstCase} to prove that deterministic contracts achieve worst-case-optimal regret, but it generally  applies to any class of contracts.

The main result related to our template can be informally stated as follows (see Theorem~\ref{thm:regret_template}).
\begin{theorem-non}[Informal]
	There exists a general template that computes an approximately-regret-minimizing contract in a given class $\cX$.
	In particular, the template can be instantiated to obtain a polynomial-time approximation algorithm whenever:
	\begin{enumerate}
		\item The uncertainty set admits a suitable ``small'' finite $\epsilon$-cover of the uncertainty set $\mcC$.
		\item There exists a polynomial-time algorithm to compute a contract in $\cX$ which minimizes the regret against all agents with costs in a given finite subset of $\mcC$.
	\end{enumerate}
\end{theorem-non}

Finally, in Section~\ref{sec:apply_template}, we showcase some PAPU instances in which it is possible to efficiently build a suitable ``small'' finite $\epsilon$-cover of $\mcC$.
Moreover, for every possible class of contracts $\cX$, we provide an algorithm to efficiently compute an approximately-regret-minimizing contract against all agents with costs in a given ``small'' finite subset of $\mcC$.
These last two pieces allow us to instantiate our template in different principal-agent settings and for different classes of contracts, deriving several efficient approximation algorithms to find regret-minimizing contracts.

\subsection{Related Works}

Next, we survey the previous works that are most related to this paper.

\paragraph{Works on Principal-Agent Problems Without Uncertainty}
The computational study of classical \emph{single-agent} principal-agent problems was initiated by~\citet{babaioff2014contract}, who study the complexity of contracts in terms of the number of different payments that they specify.
Later on, \citet{dutting2019simple} revitalized the research field by employing the computational lens to analyze the efficiency in terms of principal’s utility of \emph{linear} (\emph{i.e.}, commission-based) contracts.
Subsequent works studied the problem of computing optimal (for the principal) contracts in various settings.
\citet{dutting2021complexity} address problems in which the outcome space is combinatorial, while~\citet{dutting2022combinatorial}~and~\citet{deo2024supermodular} focus on settings where the agent can take any subset of a given set of unobservable actions.
Moreover, while~\citet{dutting2022combinatorial} focus on submodular principal's rewards,~\citet{dutting2024combinatorial} go beyond them by considering supermodular ones.
A few works addressed \emph{multi-agent} principal-agent problems.
In participial, \citet{babaioff2012combinatorial,emek2012computing} study settings in which agents have binary actions.
\citet{dutting2023multi} consider a binary-action setting with submodular principal's rewards, while \citet{deo2024supermodular} address the supermodular case.
Finally, \citet{castiglioni23multi} study particular multi-agent settings where externalities among the agents are somehow limited.

\paragraph{Works on Principal-Agent Problems With Uncertainty}
Recently, some works initiated the study of \emph{Bayesian} principal-agent problems in which the principal knows a probability distribution over a set of possible agent's types.
\citet{guruganesh2021contracts,GuruganeshPower23} study the relative power (in terms of principal's utility maximization) of different classes of contracts, including menus and linear contracts.
\citet{castiglioni2022bayesian} compare linear contracts with the best that can be achieved by contracts computable in polynomial time.
\citet{alon2021contracts,alon2023bayesian} study one-dimensional settings where agent's costs scale linearly with their type.
\citet{castiglioni2023designing} introduce the class of menus of randomized contracts in Bayesian settings, showing that they are superiority to other contracts and can be computed efficiently.
Finally, \citet{gan2022optimal} study a generalization of Bayesian principal-agent problems, proving that optimal randomized mechanisms can be computed efficiently.
Some works~\citep{carroll2015robustness,carroll2019robustness,dutting2019simple,walton2022general, yu2020robust} consider principal-agent problems with \emph{non-Bayesian} uncertainty, though restricted to very specific ``types'' of uncertainty.
%
In this paper, we address arbitrary uncertainty over costs, by taking a more computational approach. 
%
%
Finally, there are works which model uncertainty in principal-agent problems by casting them into \emph{online learning} frameworks; see~\citep{han2023learning,cohen2022learning,zhu2022online,bacchiocchi2023learning,ho2015adaptive}.

\paragraph{Related Results in Different Settings}
Some previous works studied problems related to those tackled in this paper, thought in settings different from principal-agent problems.
In particular, \citet{babichenko2022regret} study \emph{Bayesian persuasion} problems in which the sender is ignorant of receiver's utilities.
They adopt a {robust} approach, by seeking for signaling schemes for the sender that perform well for \emph{all} possible receiver’s utilities.
To do so, they introduce a notion of \emph{regret} that is the counterpart of ours in Bayesian persuasion settings.
Moreover,~\citet{10.1145/3580507.3597680} study the problem of computing {robust} leader's commitments in \emph{Stackelberg games}.
The authors adopt an approach that is substantially different from ours, since robustness is modeled by accounting for the possibility that the follower may \emph{not} perfectly best respond to leader's commitments.

\section{Preliminaries}\label{sec:preliminaries}

In this section, we introduce all the preliminary concepts needed in the rest of the paper.
First, in Section~\ref{sec:papu}, we introduce all the elements of the principal-agent settings studied in this paper.
Then, in Section~\ref{sec:regret_min}, we define the regret-minimization problem tackled by the principal in such settings.

\subsection{Principal-Agent Problems Under Uncertainty}\label{sec:papu}

An instance of \emph{principal-agent problem under uncertainty} (PAPU) is a tuple $(A,\Omega,\rvec,\mcC,\Fvec)$, where:
$A$ is a finite set of $n$ agent's actions,
$\Omega$ is a finite set of $m$ outcomes of agent's actions,
$\rvec \in [0,1]^m$ is a reward vector whose components $r_\omega$ encode principal's rewards for each outcome $\omega \in \Omega$,
$\mcC \subseteq [0,1]^n$ is an uncertainty set of possible cost vectors for the agent, and
$\Fvec \in [0,1]^{n \times m}$ is probability matrix---satisfying $\sum_{\omega \in \Omega} F_{a,\omega} = 1$ for all $a \in A$---with an entry $F_{a,\omega} $ for each action $a \in A$ and outcome $\omega \in \Omega$ encoding the probability that $\omega$ is realized when the agent plays $a$.\footnote{In the rest of this work, we assume that rewards and costs are in $[0,1]$. All the results in this paper can be easily generalized to the case of an arbitrary range of positive numbers, by applying a suitable normalization.}
Differently from classical principal-agent problems~\citep{dutting2019simple}, in the PAPU agent's costs are \emph{not} fully specified, but they could be determined by any cost vector $\cvec \in \mcC$ in the uncertainty set, with each component $c_a$ of $\cvec$ representing the cost suffered by the agent when playing action $a \in A$.
%
%
%
%
In the following, we will also make use of the notion of \emph{uncertainty level} of an instance of the PAPU, which is defined as the $L_\infty$-diameter of its uncertainty set $\cC$; formally, $d(\mcC)\coloneqq\sup_{\vc,\vc'\in\cC}\|\vc-\vc'\|_\infty$.
%
%

%
In the PAPU, the principal can incentivize the agent to take desirable actions by \emph{committing to a menu of randomized contacts}~\citep{castiglioni2023designing}.
A \emph{contract} is a payment scheme that defines a monetary transfer from the principal to the agent for each possible realized outcome.
Formally, a contract is defined as a vector $\pvec \in \mcP := \mathbb{R}_+^m$ with each component $p_\omega $ encoding a payment from the principal to the agent when the realized outcome is $\omega \in \Omega$.
A \emph{randomized} contract $\gvec \in \Delta_{\supp(\gvec)}$ is a discrete probability distribution over (deterministic) contracts, with $\supp(\gvec) \subseteq \mcP$ being its support and $\gamma_{\pvec}$ denoting the probability assigned to $\pvec \in \supp(\gvec)$.
Finally, a \emph{menu} of randomized contracts is defined as a finite set $\Gamma = \left\{  \gvec^1, \ldots, \gvec^K \right\}$ such that $\gvec^i \in \Delta_{\supp(\gvec^i)}$ for every $i \in [K]$.\footnote{In this paper, we denote by $[x]$ the set of the first $x$ natural numbers, namely $[x] := \{ 1,\ldots, x \}$.}

The interaction between the principal and the agent goes as follows:
\begin{enumerate}
	\item[(i)] The principal publicly commits to a menu of randomized contacts $\Gamma = \left\{ \gvec^1, \ldots, \gvec^K \right\}$.
	\item[(ii)] The agent selects a randomized contract $\gvec \in \Gamma$ from the menu.
	\item[(iii)] The principal samples a contract $\pvec \sim \gvec$ and publicly commits to $\pvec$.
	\item[(iv)] The agent plays an action $a \in A$.
	\item[(v)] An outcome $\omega \in \Omega$ is sampled according to the probabilities $F_{a,\omega}$.
\end{enumerate}
When the principal commits to a randomized contract $\gvec \in \Delta_{\supp(\gvec)}$ at step~(i), then the interaction directly goes to step~(iii), while, if they commit to a contract $\pvec \in \mcP$, then the next step is~(iv).

In this paper, we also consider a particular subclass of menus, which are called \emph{menus of deterministic contracts} and are sets $\Pi = \left\{  \pvec^1, \ldots, \pvec^K \right\}$ such that $\pvec^i \in \mcP$ for every $i \in [K]$.\footnote{Notice that a menu of deterministic contracts $\Pi = \left\{  \pvec^1, \ldots ,\pvec^K \right\}$ can be seen as a special menu of randomized contracts $\Gamma = \left\{ \gvec^1, \ldots, \gvec^K \right\}$ in which each $\gvec^i$ has only $\pvec^i$ in its support $\supp(\gvec^i)$. Moreover, a deterministic, respectively randomized, contract is a special menu of deterministic, respectively randomized, contracts with $K = 1$.}
When the principal commits to a menu of deterministic contracts at step (i) of the interaction, then at step (ii) the agent selects a contract $\pvec \in \Pi$ from the menu and the interaction immediately goes to step (iv).
%
In the following, we denote by $\mcPr$ the set of all the randomized contracts, while we denote by $\mcM$ and $\mcMr$ the sets of all the menus of deterministic and randomized contracts, respectively.

The relationship among sets $\mcP$, $\mcPr$, $\mcM$, and $\mcMr$ is depicted in Figure~\ref{fig:difContracts}(\emph{Left}) in Section~\ref{sec:robustness}.

%

\subsection{Regret-Minimizing Contracts}\label{sec:regret_min}

In the PAPU, agent's costs are \emph{unknown} to the principal, who only knows that they could be defined by any cost vector in the uncertainty set $\mcC$.
By taking a \emph{robust (adversarial) approach}---similar to the one adopted by~\citet{babichenko2022regret} in related Bayesian persuasion settings---the performance of a menu of randomized contracts $\Gamma \in \mcMr$ is evaluated by means of the notion of \emph{(additive) regret}.
This is defined as the maximum difference over the uncertainty set $\mcC$ between what the principal could have obtained had them known actual agent's costs and what they actually get by committing to $\Gamma$.
In the following, we formally define such a notion of regret.

After observing a contract $\pvec \in \mcP$, at step (iv) of the interaction an agent whose costs are defined by $\cvec \in \mcC$ plays an action that is:
\begin{enumerate}
	\item \emph{incentive compatible} (IC), \emph{i.e.}, it maximizes agent's expected utility over all the actions in $A$;
	\item \emph{individually rational} (IR), \emph{i.e.}, it has non-negative expected utility (if there is no IR action, the agent abstains from playing, preserving the \emph{status quo}).
\end{enumerate}
As customary in the literature (see, \emph{e.g.},~\citep{dutting2019simple}), we assume w.l.o.g. that every instance of the PAPU has an action $a \in A$ such that $c_a = 0$ for all $\cvec \in \mcC$.
This ensures that any action that is IC is also IR, thus allowing to focus w.l.o.g.~on incentive compatibility only.
For ease of notation, we let $\Ua(\pvec,\cvec, a):=\sum_{\omega\in\Omega}F_{a,\omega}\,  p_\omega  - c_a$ be the expected utility that an agent with cost vector $\cvec \in \mcC$ gets by playing an action $a \in A$ under contract $\pvec\in\mcP$.
Then, given a contract $\pvec \in \mcP$ and a cost vector $\cvec \in \mcC$, we formally denote by $A^\star (\pvec,\cvec) := \arg\max_{a \in A}   \Ua(\pvec,\cvec,a)$ the set of all IC agent's actions, which are commonly referred to as \emph{best responses}.
%
%
Moreover, we assume that the agent breaks ties in favor of the principal when indifferent among multiple best responses, as it is standard in the literature (see, \emph{e.g.},~\citep{dutting2019simple,castiglioni2023designing}).
Formally, we denote by $a^\star(\pvec, \cvec) \in \arg\max_{a \in A^\star(\pvec,\cvec)} \left\{ \sum_{\omega \in \Omega} F_{a,\omega} \left( r_\omega - p_\omega  \right) \right\}$ the best response that is actually played by an agent with cost vector $\cvec \in \mcC$ under contract $\pvec \in \mcP$.
In the following, for ease of notation, we denote by $\Up(\pvec,\cvec)$ principal's expected utility under a contract $\pvec \in \mcP$ when facing an agent with cost vector $\cvec \in \mcC$; formally, it holds $\Up(\pvec,\cvec) := \sum_{\omega \in \Omega} F_{a^\star(\pvec, \cvec) , \omega} \left( r_\omega - p_\omega  \right) $.

Given a menu of randomized contracts $\Gamma \in \mcMr$, at step (ii) of the interaction an agent whose costs are defined by $\cvec \in \mcC$ selects a randomized contract $\gvec \in \Gamma$ that maximizes their expected utility, which is equal to $\Ua(\gvec, \vc)\coloneqq\sum_{\pvec \in \supp(\gvec)} \gamma_{\pvec} \, \Ua(\pvec,\cvec, a^\star(\pvec,\cvec))$.
We denote by $S^\star(\Gamma,\cvec) \subseteq \Gamma$ the set of all such randomized contracts, while, by assuming ties are broken in favor of the principal, we let $\gvec^\star(\Gamma, \cvec) \in \arg\max_{\gvec \in S^\star(\Gamma,\cvec)} \big\{ \sum_{\pvec \in \supp(\gvec)} \gamma_{\pvec} \, \Up(\pvec,\cvec) \big\}$ be the one actually selected by the agent.
In the following, for ease of notation and by slightly abusing it, we let $\Up(\gvec, \cvec) := \sum_{\pvec \in \supp(\gvec)} \gamma_{\pvec} \, \Up(\pvec,\cvec)$ be principal's expected utility by committing to a randomized contract $\gvec \in \mcPr$ against an agent with cost vector $\cvec \in \mcC$, while we let $\Up(\Gamma, \cvec) := \Up(\gvec^\star(\Gamma, \cvec), \cvec)$ be their expected utility by committing to a menu of randomized contracts $\Gamma \in \mcMr$.
For menus of deterministic contracts $\Pi \in \mcM$, we adopt a similar notation, by writing $\Up(\Pi, \cvec) $. 
%
%
Moreover, we let $\OPT(\cvec) := \max_{\vp \in \cP} \Up(\vp,\cvec)$ be the \emph{optimal} value of principal's expected utility against an agent with cost vector $\cvec \in \mcC$.\footnote{Notice that, given a cost vector $\cvec \in\mcC$, the value $\OPT(\cvec)$ can be defined w.l.o.g. by maximizing over the (deterministic) contracts in $\mcP$, rather than over the more general set of menus of randomized contracts $\mcMr$. This is because the principal can always maximize their expected utility by committing to a single (deterministic) contract, once agent's costs are fixed.}

We are now ready to introduce the formal definition of regret-minimizing contract:
%
%
\begin{definition}[Regret-minimizing contracts]\label{def:regret}
	Given a PAPU instance and a menu of randomized contracts $\Gamma \in \mcMr$, its \emph{(additive) regret} is defined as $\mcR(\Gamma) := \sup_{\cvec \in \mcC} \left\{ \OPT(\cvec) - \Up(\Gamma,\cvec) \right\}$.
	Then, a \emph{regret-minimizing} menu of randomized contracts is one achieving regret $\inf_{\Gamma \in \mcMr} \mcR(\Gamma)$.
	Analogously, we define regret-minimizing menus of deterministic contracts, randomized contracts, and deterministic contracts, with their corresponding regrets being $\mcR(\Pi)$, $\mcR(\gvec)$, and $\mcR(\pvec)$, respectively.
\end{definition}
Let us observe that the notion of regret introduced in Definition~\ref{def:regret} also provides an upper bound on the performance of a contract when some distributional information on how agent's costs are determined is available, as it is the case in \emph{Bayesian} settings~\cite{castiglioni2023designing,guruganesh2021contracts,alon2021contracts}.
Indeed, by letting $\cD$ be a probability distribution over $\mcC$, we have:
\[
	\mcR(\Gamma)= \sup_{\cvec \in \mcC} \left\{ \OPT(\cvec) - \Up(\Gamma,\cvec) \right\} \geq \mathbb{E}_{\vc \sim \cD} \left[  \OPT(\cvec) - \Up(\Gamma,\cvec) \right] \geq \OPT(\cD) -  \mathbb{E}_{\vc \sim \cD} \left[  \ \Up(\Gamma,\cvec) \right] ,
\]
where $\OPT(\cD)$ is principal's expected utility in an optimal contract against an agent with costs determined by the distribution $\cD$.
Notice that this implies that all the positive results (in terms of regret minimization) provided in this paper carry over to Bayesian principal-agent problems.

\section{On the Existence of Regret-Minimizing Contracts}\label{sec:existence}

In this section, we start our analysis by investigating the existence of regret-minimizing contracts in general PAPU instances.
Our crucial result is that regret-minimizing deterministic contracts always exist (see Section~\ref{sec:exist}), while there are PAPU instances in which, when focusing on randomized contracts, the minimum possible regret value is unattainable (see Section~\ref{sec:exist}).

\subsection{When Do Regret-Minimizing Contracts Exist?}\label{sec:exist}

We start proving that regret-minimizing menus of deterministic contracts always exist.
In particular, we prove that, for any fixed $K \in \mathbb{N}_{>0}$, there always exists a regret-minimizing menu of deterministic contracts of size $K$, when regret minimization is performed over menus of size $K$.
%
%
%
\begin{restatable}{theorem}{theoremExistenceOne}\label{thm:exists_menu_fixed_size}
	Given a PAPU instance and a positive integer $K \in \mathbb{N}_{>0}$, there always exists a menu of deterministic contracts $\Pi^\star = \left\{  \pvec^1, \ldots, \pvec^K \right\} \in \mcM$ of size $K$ that minimizes the regret $\mcR(\Pi)$ over all the menus of deterministic contracts $\Pi\in \mcM$ of size $K$, namely $\mcR(\Pi^\star)=\inf_{\Pi\in \mcM: |\Pi|=K} \mcR(\Pi)$.
	%
\end{restatable}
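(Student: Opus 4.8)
The plan is to prove existence of a regret-minimizing menu of size $K$ via a compactness argument, showing that the regret functional $\mcR(\cdot)$ attains its infimum over the (suitably restricted) set of menus of size $K$. The central difficulty is that the relevant space of contracts is the noncompact set $\mcP = \mathbb{R}_+^m$, so a naive application of Weierstrass fails; moreover, even after restricting to a compact domain, the principal's utility $\Up(\pvec,\cvec)$ is \emph{not} continuous in $\pvec$, since the agent's tie-breaking best response $a^\star(\pvec,\cvec)$ can jump as $\pvec$ crosses the boundaries between best-response regions. Both of these obstacles must be addressed.

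\paragraph{Step 1: Restrict to a compact domain.} First I would argue that one may assume w.l.o.g.\ that every payment $p_\omega$ lies in a bounded box, say $[0,1]^m$. The intuition is that since rewards are in $[0,1]$, paying more than the maximal reward for any outcome can never help the principal: any payment exceeding $1$ only decreases $\Up$ (for every fixed cost vector) without expanding the set of implementable best responses in a useful way. Formally, one shows that for any menu $\Pi$ there is a menu $\Pi'$ with all payments capped at $1$ (or some explicit constant depending only on the instance) satisfying $\mcR(\Pi') \le \mcR(\Pi)$. This reduces the optimization to the compact set $([0,1]^m)^K$.

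\paragraph{Step 2: Handle discontinuity via the finite partition into best-response regions.} For a fixed cost vector $\cvec$, the payment space $[0,1]^m$ partitions into finitely many polyhedral regions on which the agent's best-response action $a^\star(\pvec,\cvec)$ is constant, and on each such region $\Up(\pvec,\cvec)$ is a fixed affine function of $\pvec$, hence continuous; the only discontinuities occur on the region boundaries, and the principal-favorable tie-breaking makes $\Up(\cdot,\cvec)$ \emph{upper semicontinuous} there. The key observation is that $\OPT(\cvec) - \Up(\Pi,\cvec)$ is therefore lower semicontinuous in the menu $\Pi$ (for each fixed $\cvec$), because the selected contract $\gvec^\star$ and its best response are chosen to maximize principal's utility under indifference. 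The plan is to verify this lower semicontinuity carefully: the map $\Pi \mapsto \Up(\Pi,\cvec)$ is upper semicontinuous since it is a finite max (over menu entries) of upper semicontinuous functions, each of which is itself an inner max over best responses that respects principal-favorable tie-breaking.

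\paragraph{Step 3: Take the supremum over costs and apply Weierstrass.} Since $\mcR(\Pi) = \sup_{\cvec\in\mcC}\{\OPT(\cvec) - \Up(\Pi,\cvec)\}$ is a supremum of functions that are each lower semicontinuous in $\Pi$, the regret $\mcR(\Pi)$ is itself lower semicontinuous in $\Pi$ on the compact domain $([0,1]^m)^K$. A lower semicontinuous function on a compact set attains its infimum, which yields the desired minimizer $\Pi^\star$. The main obstacle I anticipate is Step 2: one must pin down exactly why principal-favorable tie-breaking delivers upper semicontinuity of $\Up$ rather than mere measurability, and confirm that taking the sup over possibly infinitely many $\cvec\in\mcC$ in Step 3 preserves lower semicontinuity (which it does, as suprema of lower semicontinuous functions are lower semicontinuous, regardless of the index set). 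A subtle point to double-check is that $\OPT(\cvec)$ is a well-defined finite quantity and that its appearance as an additive constant in each fixed-$\cvec$ term does not disrupt the semicontinuity argument, which it does not since it is independent of $\Pi$.
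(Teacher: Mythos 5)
Your overall architecture is the same as the paper's: show that $\Pi \mapsto \mcR(\Pi)$ is lower semicontinuous---because principal-favorable tie-breaking makes $\vp \mapsto \Up(\vp,\vc)$ upper semicontinuous, $\OPT(\vc)$ is a constant in $\Pi$, and a supremum of LSC functions over an arbitrary index set is LSC---then restrict to a compact set of payment-bounded menus and apply Weierstrass. Your Steps 2 and 3 are exactly the paper's proof, including the observation that the noncompactness of $\mcP = \mathbb{R}_+^m$ is the real obstacle.

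The genuine gap is in Step 1 as you primarily state it. The claim that ``paying more than the maximal reward for any outcome can never help'' is false: in hidden-action problems, payments are pinned down by likelihood ratios, not by rewards, and a payment exceeding $1$ on a rare but informative outcome can be strictly necessary. Concretely, take outcomes $(\omega_0,\omega_1,\omega_2)$ with rewards $(0,1,0)$, a zero-cost action $a_1$ with distribution $(0.1,\,0.9,\,0)$, and an action $a_2$ of cost $0.05$ with distribution $(0,\,0.99,\,0.01)$. The contract $\vp = (0,0,5)$ incentivizes $a_2$ (tie broken in the principal's favor) and yields utility $0.99 - 0.05 = 0.94$, whereas every contract with payments in $[0,1]^m$ yields at most $0.9$; so with $\mcC$ a singleton containing $(0, 0.05)$, capping at $1$ raises the minimum regret from $0$ to at least $0.04$, and your claimed reduction $\mcR(\Pi') \le \mcR(\Pi)$ fails. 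Your parenthetical fallback (``some explicit constant depending only on the instance'') is precisely what the paper proves, and it is the part that needs an argument: assuming w.l.o.g.\ that every outcome has positive probability under some action, let $q_{\min}$ be the smallest positive entry of $\Fvec$; if any contract in the menu pays more than $P > 2/q_{\min}$ on some outcome, the agent can secure expected utility at least $q_{\min}P - 1$, so \emph{whichever} contract they select carries expected payment at least that much, making the principal's utility negative against \emph{every} $\vc \in \mcC$---hence such menus cannot beat the all-zero menu and the infimum may be taken over the compact set of $P$-capped menus. A smaller inaccuracy in your Step 2: $\Up(\Pi,\vc)$ is \emph{not} a finite max over menu entries of principal utilities; the agent selects the entry maximizing the \emph{agent's} utility, with ties resolved in the principal's favor. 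Upper semicontinuity still holds, but via a limit-point argument (along $\Pi_t \to \Pi$, any accumulation point of agent-optimal contract--action pairs remains agent-optimal at $\Pi$, and favorable tie-breaking can only raise $\Up(\Pi,\vc)$), not via a max of USC functions.
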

Theorem~\ref{thm:exists_menu_fixed_size} instantiated with $K=1$ immediately gives the following corollary:
\begin{corollary}\label{cor:existence_det}
	Any PAPU instance admits a regret-minimizing deterministic contract.
\end{corollary}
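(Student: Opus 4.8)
The corollary is exactly the $K=1$ instance of Theorem~\ref{thm:exists_menu_fixed_size}, so the plan is to prove the theorem and read the corollary off directly; I describe the argument for general $K$, the single-contract case being identical with a vector in place of a matrix. The idea that unlocks everything is to \emph{reparametrize a menu by its expected transfers} rather than by its raw payments. For a menu $\Pi=\{\pvec^1,\dots,\pvec^K\}$, set $t^i_a:=\sum_{\omega\in\Omega}F_{a,\omega}\,p^i_\omega$ for every $i\in[K]$ and $a\in A$. Since $\Ua(\pvec^i,\cvec,a)=t^i_a-c_a$ and the principal's utility from the pair $(i,a)$ equals $R_a-t^i_a$ with $R_a:=\sum_{\omega}F_{a,\omega}r_\omega$, both the agent's behavior (choice of contract, then best response, with ties broken for the principal) and the resulting $\Up(\Pi,\cvec)$ depend on $\Pi$ \emph{only} through the matrix $t=(t^i_a)$. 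Hence $\mcR(\cdot)$ descends to a function $\mcR(t)$ on the transfer space, and the feasible transfers of a single contract form the set $T:=\{\Fvec\pvec:\pvec\in\mathbb{R}_+^m\}$, which is a closed polyhedral cone by Minkowski--Weyl. It therefore suffices to minimize $\mcR(t)$ over $t\in T^K$ and then recover payments via any $\pvec^i\ge\vzero$ with $\Fvec\pvec^i=t^{\star,i}$.

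The crucial step is to confine the search to a \emph{compact} region of transfer space. First, the all-zero menu has $\Up(\vzero,\cvec)\ge0$ while $\OPT(\cvec)\le1$, so the optimal value $\rho^\star:=\inf_{|\Pi|=K}\mcR(\Pi)$ is at most $1$. Now fix any menu with $\mcR(\Pi)\le2$ and any $\cvec\in\mcC$: the selected pair $(i^\star,a^\star)$ yields $R_{a^\star}-t^{i^\star}_{a^\star}=\Up(\Pi,\cvec)\ge\OPT(\cvec)-2\ge-2$, hence $t^{i^\star}_{a^\star}\le3$; and since $(i^\star,a^\star)$ maximizes the agent's utility, $t^i_a-c_a\le t^{i^\star}_{a^\star}-c_{a^\star}\le3$ for \emph{every} $(i,a)$. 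As this holds for all $\cvec\in\mcC$ and costs are nonnegative, $t^i_a\le3+\inf_{\cvec\in\mcC}c_a\le4$ for all $i,a$. Thus every near-optimal menu lives in the compact set $\mathcal{T}:=T^K\cap[0,4]^{K\times n}$, and $\inf_{t\in\mathcal{T}}\mcR(t)=\rho^\star$.

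It remains to show that $\mcR$ attains its minimum on $\mathcal{T}$, for which I would prove that $t\mapsto\mcR(t)$ is lower semicontinuous. This reduces to showing that $t\mapsto\Up(t,\cvec)$ is \emph{upper} semicontinuous for each fixed $\cvec$---precisely the point where breaking ties in the principal's favor is essential. Concretely, the agent's optimal utility $u(\cvec)=\max_{i,a}(t^i_a-c_a)$ is continuous, and if $(i_k,a_k)$ is the selected pair along $t_k\to t_0$ then, passing to a subsequence along which $(i_k,a_k)$ is constant (possible since there are finitely many pairs) and noting that this pair remains agent-optimal in the limit, one gets $\lim_k\Up(t_k,\cvec)=R_a-(t_0)^i_a\le\Up(t_0,\cvec)$. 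Then $\mcR(t)=\sup_{\cvec\in\mcC}\{\OPT(\cvec)-\Up(t,\cvec)\}$ is a supremum of lower-semicontinuous functions, hence lower semicontinuous, and Weierstrass' theorem yields a minimizer $t^\star\in\mathcal{T}$; recovering payments produces a menu $\Pi^\star$ of size $K$ with $\mcR(\Pi^\star)=\rho^\star$, and $K=1$ gives Corollary~\ref{cor:existence_det}. I expect the transfer reparametrization to be the decisive move and the main obstacle to be exactly what it resolves: in the raw payment formulation the domain is non-compact and the natural fix of capping payments fails, since truncating a payment can flip the agent to a worse action and strictly increase the regret, so one genuinely needs the transfer view both to bound the relevant quantities and to expose the favorable-tie-breaking semicontinuity that Weierstrass requires.
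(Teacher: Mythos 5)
Your proof is correct, and its load-bearing ingredients are the same two the paper uses: lower semicontinuity of the regret, obtained from upper semicontinuity of $\Up(\cdot,\vc)$ via the principal-favorable tie-breaking (your subsequence argument over the finitely many pairs $(i,a)$ is a correct expansion of what the paper asserts in one line), followed by Weierstrass on a compact set that provably contains all near-optimal solutions. Where you genuinely differ is the compactification. The paper stays in raw payment space: after assuming w.l.o.g.\ that every outcome is reachable with positive probability under some action, it argues there is a threshold $P>0$ such that any contract placing some payment $p_\omega > P$ makes $\Up(\vp,\vc)$ negative for \emph{every} $\vc \in \mcC$ --- the agent can secure expected transfer at least $F_{\min}P - 1$ (with $F_{\min}$ the smallest positive entry of $\Fvec$) by playing an action reaching $\omega$, so whatever it actually plays it collects at least that much --- hence the infimum is unchanged by restricting to payments in $[0,P]^m$, which is compact. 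You instead pass to expected-transfer space, bound the transfers of any menu with regret at most $2$ by $4$, and use closedness of the finitely generated cone $T = \Fvec\,\mathbb{R}_+^m$ both for compactness of $T^K \cap [0,4]^{K\times n}$ and to pull the minimizer back to nonnegative payments. Your route buys a cleaner semicontinuity argument (everything is a function of finitely many reals $t^i_a$) and dispenses with the outcome-removal normalization; the paper's route is shorter, needing no Minkowski--Weyl and no pullback. One inaccuracy in your closing remark: you assert that ``capping payments fails'' because truncating a payment can flip the agent to a worse action, but the paper's argument never truncates a given contract --- it shows any menu \emph{containing} an over-threshold payment has regret exceeding the infimum uniformly over $\mcC$ (by the transfer-extraction argument just described), so the cap is sound; your criticism applies to a straw man, not to the paper's proof. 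Both compactifications work, and your corollary-from-$K=1$ derivation matches the paper's.
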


One may n\"aively think that it is always possible to add elements to a menu of deterministic contracts so as to increase principal's expected utility.
This would imply that \Cref{thm:exists_menu_fixed_size} is \emph{not} general enough, since it only works for menus of a fixed size $K$.
However, we show that this is \emph{not} the case.
Indeed, there exists a maximum menu size beyond which utility cannot be further increased.
%
%
More precisely, when looking for regret-minimizing menus of deterministic contracts, one can restrict the attention w.l.o.g. to menus with size equal to the number of agent's actions $n$.
\begin{restatable}{theorem}{propositionExistenceOne}\label{thm:finite_support_menus}
	Given any PAPU instance, it holds that $\inf_{\Pi\in \mcM: |\Pi|=n} \mcR(\Pi) = \inf_{\Pi\in \mcM} \mcR(\Pi)$.
\end{restatable}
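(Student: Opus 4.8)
The inequality $\inf_{\Pi\in\mcM:\,|\Pi|=n}\mcR(\Pi)\ge \inf_{\Pi\in\mcM}\mcR(\Pi)$ is immediate, since menus of size $n$ form a subset of $\mcM$. The plan is therefore to establish the reverse inequality by showing that every finite menu $\Pi=\{\pvec^1,\dots,\pvec^K\}\in\mcM$ admits a replacement $\Pi'$ of size exactly $n$ with $\mcR(\Pi')\le\mcR(\Pi)$; taking the infimum over all $\Pi$ then yields the claim.

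The heart of the argument is a \emph{$\beta$-characterization} of menus. For a menu $\Pi$ I define, for each action $a\in A$, the scalar $\beta_a:=\max_{i\in[K]}\sum_{\omega\in\Omega}F_{a,\omega}\,p^i_\omega$, i.e.\ the largest expected payment that $\Pi$ offers on action $a$ (well defined since menus are finite). I would first prove that, for every cost vector $\cvec\in\mcC$, the agent's optimal utility equals $\max_{a\in A}(\beta_a-c_a)$, and that the principal's utility $\Up(\Pi,\cvec)$ depends on $\Pi$ \emph{only} through the vector $\beta=(\beta_a)_{a\in A}$. The mechanism is that the item actually selected by the agent must pay exactly $\beta_{\tilde a}$ in expectation on the action $\tilde a$ it induces, so that $\Up(\Pi,\cvec)=\sum_{\omega}F_{\tilde a,\omega}\,r_\omega-\beta_{\tilde a}$ for some $\tilde a\in\arg\max_{a}(\beta_a-c_a)$, with the principal-favorable tie-break selecting the global maximizer maximizing $\sum_\omega F_{a,\omega}r_\omega-\beta_a$.

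For the construction, I would pick for each action $a$ an item $i(a)\in\arg\max_{i}\sum_\omega F_{a,\omega}p^i_\omega$ attaining $\beta_a$, and set $\Pi':=\{\pvec^{i(a)}:a\in A\}$, a menu of at most $n$ items. Since each $\pvec^{i(a)}$ is one of the original contracts, for every $a'$ we have $\sum_\omega F_{a',\omega}p^{i(a)}_\omega\le\beta_{a'}$, with equality when $a=a'$; hence the vector $\beta'$ associated to $\Pi'$ equals $\beta$. By the $\beta$-characterization this gives $\Up(\Pi',\cvec)=\Up(\Pi,\cvec)$ for all $\cvec$, once I verify that $\Pi'$ can realize the same tie-broken action: item $\pvec^{i(a^\star)}$ makes $a^\star$ a within-item best response, because $\sum_\omega F_{a^\star,\omega}p^{i(a^\star)}_\omega-c_{a^\star}=\beta_{a^\star}-c_{a^\star}=\max_a(\beta_a-c_a)$ dominates every $\sum_\omega F_{a',\omega}p^{i(a^\star)}_\omega-c_{a'}\le\beta_{a'}-c_{a'}$. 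Thus $\mcR(\Pi')=\mcR(\Pi)$, and if $|\Pi'|<n$ I pad with duplicates of an existing contract, which leave every $\beta_a$ and the entire selection behavior unchanged, producing a menu of size exactly $n$ with identical regret.

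The main obstacle is the tie-breaking bookkeeping inside the $\beta$-characterization: showing that the two-level, principal-favorable tie-break (first over menu items, then over best responses within the chosen item) collapses to a function of $\beta$ alone, and that $\Pi'$ attains \emph{exactly} the same tie-broken action. I would control this through the forced chain of (in)equalities establishing that the selected item pays precisely $\beta_{\tilde a}$ on its induced action $\tilde a$ and that $\tilde a$ is necessarily a global maximizer of $\beta_a-c_a$. Everything else—nonnegativity of payments and the IR/zero-cost action—is inherited directly from $\Pi$ and needs no additional care.
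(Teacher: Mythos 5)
Your proof is correct, and it reaches the result by a genuinely different (though closely related) route. The paper argues on the side of the agents' \emph{realized selections}: it partitions cost vectors into sets $\mcC_a$ according to the tie-broken action $a$ they end up playing, shows by an exchange argument that every contract actually selected by a type in $\mcC_a$ must carry one common expected payment $P_a$ on $a$ (so those types are indifferent among the selected contracts), and then retains one representative per \emph{induced} action. You instead prove a stronger structural invariant before touching the menu: the agent's utility equals $\max_a(\beta_a-c_a)$ for the payment envelope $\beta_a=\max_{i}\sum_\omega F_{a,\omega}p^i_\omega$, every agent-optimal (item, action) pair pays exactly $\beta_{\tilde a}$ on its induced action $\tilde a\in\arg\max_a(\beta_a-c_a)$, and the two-level principal-favorable tie-break collapses to $\Up(\Pi,\vc)=\max_{\tilde a\in\arg\max_a(\beta_a-c_a)}\bigl(\sum_\omega F_{\tilde a,\omega}r_\omega-\beta_{\tilde a}\bigr)$; your forced equality chain is the same inequality the paper exploits, repackaged as a full characterization. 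Where the two overlap they coincide: whenever $\mcC_a\neq\emptyset$ one necessarily has $P_a=\beta_a$ (otherwise a type in $\mcC_a$ would deviate to an item paying more on $a$), but your retention rule $i(a)\in\arg\max_i\sum_\omega F_{a,\omega}p^i_\omega$ is defined for \emph{all} actions, including those no type induces, and yields exact pointwise equality $\Up(\Pi',\vc)=\Up(\Pi,\vc)$ with no bookkeeping about which type selects which item --- precisely the step the paper glosses with ``indifferent among all the contracts,'' and which you verify explicitly, including the lower-bound direction that the principal's preferred global maximizer of $\beta_a-c_a$ is realizable within $\Pi'$. One wrinkle you share with the paper: since menus are formally \emph{sets}, ``padding with duplicates'' does not literally raise $|\Pi'|$ to exactly $n$ (and the paper's own construction likewise only guarantees size at most $n$); both arguments are repaired by reading menus as tuples with possible repetitions, as the fixed-size-$K$ formulation in Theorem~\ref{thm:exists_menu_fixed_size} implicitly allows, so this does not affect correctness relative to the paper's own standard of rigor.
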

Theorem~\ref{thm:finite_support_menus}, together with Theorem~\ref{thm:exists_menu_fixed_size}, immediately allows us to prove the following:
\begin{corollary}\label{cor:det_exist}
	Any PAPU instance admits a regret-minimizing menu of deterministic contracts.
\end{corollary}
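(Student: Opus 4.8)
The plan is to prove the nontrivial inequality $\inf_{\Pi \in \mcM : |\Pi| = n} \mcR(\Pi) \le \inf_{\Pi \in \mcM} \mcR(\Pi)$; the reverse inequality is immediate, since the left-hand side minimizes over a subset of $\mcM$. To this end, I would show that every menu of deterministic contracts $\Pi = \{\vp^1, \dots, \vp^K\}$, regardless of its size $K$, is weakly dominated by a menu $\Pi'$ with $|\Pi'| \le n$, in the sense that $\Up(\Pi', \vc) \ge \Up(\Pi, \vc)$ for every $\vc \in \mcC$; this yields $\mcR(\Pi') \le \mcR(\Pi)$ and hence the claimed inequality.

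The key structural observation is that both the agent's behavior and the principal's utility depend on a menu only through a vector of $n$ numbers, one per action. For a contract $\vp^i$ and an action $a$, write $P^i_a := \sum_{\omega} F_{a,\omega}\, p^i_\omega$ for the expected payment conditional on $a$, and $R_a := \sum_\omega F_{a,\omega}\, r_\omega$ for the expected reward. Since an agent facing $\Pi$ first selects a contract and then best-responds, their utility equals $\max_{i \in [K]} \max_{a \in A}[P^i_a - c_a] = \max_{a \in A}[\hat P_a - c_a]$, where $\hat P_a := \max_{i \in [K]} P^i_a$ is the upper envelope of the per-action payments. I would then establish the \emph{crucial claim}: $\Up(\Pi, \vc) = \max_{a \in A^\star(\vc)}[R_a - \hat P_a]$, where $A^\star(\vc) := \arg\max_{a}[\hat P_a - c_a]$, so that the principal's utility against every cost vector is a function of the envelope $\hat P = (\hat P_a)_a$ alone.

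With the claim in hand, the reduction is immediate: for each action $a$ pick a contract $\vp^{i(a)} \in \Pi$ attaining $P^{i(a)}_a = \hat P_a$, and set $\Pi' := \{\vp^{i(a)} : a \in A\}$, so that $|\Pi'| \le n$. Because $\Pi' \subseteq \Pi$ we have $\max_{j} P^j_a \le \hat P_a$, while the contract $\vp^{i(a)}$ forces $\max_j P^j_a \ge \hat P_a$; hence $\Pi'$ has exactly the same envelope as $\Pi$, and the claim gives $\Up(\Pi', \vc) = \Up(\Pi, \vc)$ for all $\vc$, whence $\mcR(\Pi') = \mcR(\Pi)$. To reach size exactly $n$ one pads $\Pi'$ with additional contracts whose per-action payments stay weakly below the envelope (e.g.\ scaled-down copies of contracts already present, or $\vzero$), which leaves $\hat P$ unchanged; combined with Theorem~\ref{thm:exists_menu_fixed_size}, this also yields Corollary~\ref{cor:det_exist}.

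The main obstacle is the crucial claim, whose proof must carefully track the two layers of tie-breaking. First, if at the agent's optimum some action $a \in A^\star(\vc)$ is played under contract $i$, then $P^i_a - c_a$ must equal the optimal agent utility $\hat P_a - c_a$, forcing $P^i_a = \hat P_a$ and hence principal utility exactly $R_a - \hat P_a$ regardless of which envelope-attaining contract realizes $a$; one must also verify that such an $a$ is indeed a best response under that contract, which holds since $\hat P_a - c_a \ge \hat P_{a'} - c_{a'} \ge P^{i}_{a'} - c_{a'}$ for every $a'$. Second, because the agent breaks ties in the principal's favor across all utility-maximizing (contract, action) pairs, and every such pair yields principal utility $R_a - \hat P_a$ for some $a \in A^\star(\vc)$, the realized value is $\max_{a \in A^\star(\vc)}[R_a - \hat P_a]$, exactly as claimed.
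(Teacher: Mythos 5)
Your proposal is correct and takes essentially the same route as the paper: the paper also obtains the corollary by combining Theorem~\ref{thm:exists_menu_fixed_size} with Theorem~\ref{thm:finite_support_menus}, whose proof reduces an arbitrary menu to one of size $n$ by observing that all agents best-responding with a given action $a$ must receive the same expected payment $P_a$ under their chosen contracts and then keeping a single contract per action --- and that common value $P_a$ is exactly your upper envelope $\hat P_a = \max_i \sum_{\omega} F_{a,\omega}\, p^i_\omega$. Your crucial claim $\Up(\Pi,\vc)=\max_{a}\left[R_a-\hat P_a\right]$ over envelope-optimal actions simply makes explicit, with careful tie-breaking bookkeeping, the step the paper dismisses as ``easy to check,'' so the two arguments coincide in substance.
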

%
%
%
%
Corollary~\ref{cor:det_exist} shows that, if randomization is \emph{not} involved, then the regret-minimization problem is well defined, as it is the case for the utility maximization problem that is studied in related {Bayesian} principal-agent settings~\citep{castiglioni2023designing}.
This stands in sharp contrast to settings in which randomization is at play, as we show in the following.

\subsection{When Regret-Minimizing Contracts Do \emph{Not} Exist?}\label{sec:not_exist}

Next, we show that randomization makes the minimum possible regret value unattainable.
Notice that this is also the case for the utility maximization problem studied in {Bayesian} principal-agent problems, as shown by~\citet{castiglioni2023designing}.

Our results rely on a particular PAPU instance, which we introduce in the following definition: 
\begin{definition}[``Non-existence'' instance of the PAPU]\label{def:nonexist_inst}
	The \emph{``non-existence'' PAPU instance} is defined by a tuple $(A,\Omega,\rvec,\mcC,\Fvec)$ such that:
	\begin{itemize}
		\item The agent has only two actions, namely $A := \{ a_1, a_2 \}$, and each action deterministically leads to an outcome, namely $\Omega := \{ \omega_1, \omega_2 \}$ with $F_{a_1, \omega_1}=1$, $F_{a_1, \omega_2}=0$, $F_{a_2, \omega_1}=0$, and $F_{a_2, \omega_2}=1$.
		\item The principal gets positive reward in $\omega_{2}$ only, namely $r_{\omega_{1}} = 0$ and $r_{\omega_2}=1$.
		\item The principal has no knowledge about the cost of action $a_2$, namely $\mcC=\left\{ \cvec \in [0,1]^2 \mid c_{a_1}=0 \right\}$.
	\end{itemize}
	%
	%
\end{definition}

A careful analysis of the instance in Definition~\ref{def:nonexist_inst} allows us to prove the following result:
\begin{restatable}{proposition}{propositionExistenceTwo}\label{prop:nonexist_rand}
	There exists a PAPU instance in which $\inf_{\gvec \in \mcPr}\reg(\gvec)$ does not admit a minimum.
\end{restatable}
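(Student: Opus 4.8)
\textbf{The plan is to} analyze the ``non-existence'' PAPU instance of Definition~\ref{def:nonexist_inst} directly and exhibit a sequence of randomized contracts whose regret decreases toward an infimum that is never attained. The instance is essentially one-dimensional: since $c_{a_1} = 0$ always and action $a_i$ deterministically produces outcome $\omega_i$, the only uncertain quantity is the cost $c := c_{a_2} \in [0,1]$ of the rewarding action. I would first compute $\OPT(\vc)$ as a function of $c$. A deterministic contract here is characterized (up to the irrelevant payment $p_{\omega_1}$) by the payment $p := p_{\omega_2}$ for the good outcome; the agent plays $a_2$ iff $p - c \ge 0$, i.e.\ iff $p \ge c$, and in that case the principal's utility is $1 - p$. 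Hence $\OPT(c) = \max\{0, \sup_{p \ge c}(1-p)\} = 1 - c$ for $c < 1$ (achieved in the limit $p \downarrow c$, with the tie-break giving $\Up = 1-c$ exactly at $p=c$), so $\OPT(c) = 1 - c$ for all $c \in [0,1]$.

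\textbf{Next I would} pin down the regret of an arbitrary randomized contract $\gvec \in \mcPr$. For a single deterministic contract with payment $p$, the principal earns $1 - p$ against costs $c \le p$ and earns $0$ against costs $c > p$ (the agent plays $a_1$), so $\OPT(c) - \Up(p, c) = (1-c) - (1-p)\mathbbm{1}[c \le p] = (1-c)\mathbbm{1}[c > p] + (p - c)\mathbbm{1}[c \le p]$. Taking the supremum over $c \in [0,1]$: for $c$ slightly above $p$ the regret approaches $1 - p$, while for $c \le p$ it is at most $p$; so a single contract has regret $\max\{p,\, \sup_{c>p}(1-c)\} = \max\{p, 1-p\} \ge 1/2$. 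The key phenomenon is that the ``jump'' of size $1-p$ at $c = p^+$ prevents any single contract from doing well. Randomization smooths this jump: I expect that a randomized contract spreading mass over many payment levels $p_1 < p_2 < \cdots$ can make the worst-case shortfall small, because for any fixed $c$ the contracts with $p_j \ge c$ still earn $1 - p_j$. I would show that by placing payments densely in, say, a small interval near $0$ with carefully chosen probabilities, one drives $\mcR(\gvec)$ down to an infimum value $\rho^\star$ (which I anticipate to be $0$, since one can make both the ``upper'' regret and the residual jump simultaneously small).

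\textbf{The crux — and the main obstacle — is the non-attainability argument:} showing that no single randomized contract $\gvec$ (a \emph{finite} distribution over contracts, by definition of $\mcPr$ and $\supp(\gvec)$) achieves the infimum $\rho^\star$. Here I would exploit that any $\gvec \in \mcPr$ has finite support $\{p_1, \ldots, p_t\}$. Let $\bar p = \max_j p_j$ be the largest payment in the support, carrying probability $\gamma > 0$. For costs $c$ just above $\bar p$, \emph{every} contract in the support pays less than $c$, so the agent plays $a_1$ and the principal gets $0$ from all of them, giving $\Up(\gvec, c) = 0$ and thus $\OPT(c) - \Up(\gvec,c) = 1 - c \to 1 - \bar p$ as $c \downarrow \bar p$. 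Meanwhile, for $c = \bar p$ itself the principal collects at least $\gamma(1 - \bar p)$ from the top contract, so the finitely many support points force a strictly positive shortfall of at least roughly $\min\{\bar p,\ \gamma(1-\bar p),\ 1 - \bar p\} > 0$ that a genuinely continuous ``contract'' could avoid. Making this quantitative — bounding $\mcR(\gvec)$ strictly away from $\rho^\star$ using only finiteness of the support, uniformly enough to conclude the infimum is unattained — is the delicate step; I would likely argue that for every finitely-supported $\gvec$ there is a strictly better randomized contract obtained by refining the support near $\bar p$, so the infimum is approached but never reached.
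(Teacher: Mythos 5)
Your setup is sound and matches the paper's instance (Definition~\ref{def:nonexist_inst}): $\OPT(\vc)=1-c_{a_2}$, and a single contract with payment $p$ on $\omega_2$ has regret $\max\{p,\,1-p\}\ge \nicefrac 1 2$. But the heart of the proof is wrong in two places. First, your anticipated infimum $\rho^\star=0$ is false, and it contradicts a bound you yourself derive: for any $\gvec\in\mcPr$ with largest support payment $\bar p$, you correctly note $\mcR(\gvec)\ge \sup_{c>\bar p}(1-c)=1-\bar p$, so a contract with support ``densely in a small interval near $0$'' has regret close to $1$, not $0$ (already the cost $c=\nicefrac 1 2$ forces regret $\nicefrac 1 2$ if all payments lie below it). The true infimum is $\nicefrac 1 e$. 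The paper's Part~I gets the upper bound by discretizing the density $\gamma(p)\propto \frac{1}{1-p}$ over the \emph{whole} interval $[0,\,1-\nicefrac 1 e]$: the limiting continuous object satisfies $\int_c^{1-\nicefrac 1 e}(1-p)\frac{1}{1-p}\,dp = 1-\nicefrac 1 e - c$, so its regret is exactly $\nicefrac 1 e$ at every $c$ in the support, and the $k$-point discretization achieves $\mcR(\gvec^k)\le \nicefrac 1 e + O(\nicefrac 1 k)$. Spreading mass over a long range with more weight on high payments is essential; your proposed construction fails.

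Second, and more fundamentally, you have no lower-bound argument. Your logic that ``every finitely supported $\gvec$ admits a strictly better one'' would indeed imply non-attainment, but you give no mechanism for it, and the quantity $\min\{\bar p,\ \gamma(1-\bar p),\ 1-\bar p\}$ is not a valid certificate: it is a purely local statement near $\bar p$ and can be made arbitrarily small without the regret approaching the infimum (it does not compare $\mcR(\gvec)$ to the unknown-to-you value $\rho^\star$). The paper closes this gap with an averaging argument: it constructs a prior $\cD$ over costs (an atom of mass $\nicefrac 1 e$ at $c_{a_2}=0$ plus density $\frac{1}{e(1-c_{a_2})^2}$ on $(0,\,1-\nicefrac 1 e]$) under which \emph{every} deterministic contract has expected regret at least $\nicefrac 1 e$; by the reduction of the Bayesian problem to deterministic contracts (\citealp{castiglioni2023designing}), the same holds for every randomized contract, so $\mcR(\gvec)\ge \mathbb{E}_{\vc\sim\cD}[\OPT(\vc)-\Up(\gvec,\vc)]\ge \nicefrac 1 e$. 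Strictness then follows because equality would force $c\mapsto \OPT(\vc)-\Up(\gvec,\vc)$ to be constant on $\supp(\cD)$, which is impossible: $\OPT(\vc)$ is linear and strictly decreasing while $\Up(\gvec,\cdot)$ is piecewise constant for any finite-support $\gvec$. Together with Part~I this shows $\inf_{\gvec\in\mcPr}\mcR(\gvec)=\nicefrac 1 e$ and that no $\gvec$ attains it. Without an analogue of this uniform lower bound (or an actual strict-improvement construction), your proposal does not establish the proposition.
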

%
%
%
%
We conjecture that the result in Proposition~\ref{prop:nonexist_rand} also holds for the class of menus of randomized contracts $\mcMr$, and this can be proved by using again the instance in Definition~\ref{def:nonexist_inst}.
We leave resolving this conjecture as an open problem to be addressed in future works.

Moreover, the instance in Definition~\ref{def:nonexist_inst} also allows us to show that the $\sup$ in the definition of regret (see Definition~\ref{def:regret}) may \emph{not} admit a maximum when randomization is involved.
\begin{restatable}{proposition}{propositionExistenceFour}\label{prop:sup_no_max}
	There exists a PAPU instance and a randomized contract $\gvec \in \mcPr$ such that the problem $\sup_{\cvec \in \mcC} \left\{ \OPT(\cvec) - \Up(\gvec,\cvec)  \right\}$ does not admit a maximum.
\end{restatable}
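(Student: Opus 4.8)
The plan is to instantiate the ``non-existence'' PAPU instance of Definition~\ref{def:nonexist_inst} and exhibit a single randomized contract $\gvec \in \mcPr$ whose associated objective $c \mapsto \OPT(\cvec) - \Up(\gvec,\cvec)$ has a supremum that is approached only through a one-sided limit, and hence is never attained. Since in that instance every $\cvec \in \mcC$ is pinned down by its single free coordinate (as $c_{a_1}=0$), I identify $\mcC$ with the compact interval $[0,1]$ and write $c := c_{a_2}$.

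First I would work out the two elementary building blocks. For the optimum, a direct computation gives $\OPT(\cvec) = 1-c$: inducing $a_2$ requires paying $p_{\omega_2} \ge c$ (cheapest incentive-compatible payment, with $p_{\omega_1}=0$), yielding principal's utility $1-c \ge 0$, which dominates the utility $0$ obtained by inducing $a_1$. For a single deterministic contract $\pvec = (p_{\omega_1}, p_{\omega_2})$, the deterministic action--outcome structure gives agent utilities $p_{\omega_1}$ for $a_1$ and $p_{\omega_2}-c$ for $a_2$; hence the agent plays $a_2$ exactly when $c \le p_{\omega_2}-p_{\omega_1} =: t$, where the boundary $c=t$ is \emph{included} because $t \le 1$ forces the principal-favorable tie-break toward $a_2$. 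Consequently $\Up(\pvec,\cvec)$ equals $1-p_{\omega_2}$ on $[0,t]$ and $-p_{\omega_1}$ on $(t,1]$: a nonincreasing step function that is left-continuous at its threshold, taking the larger value there.

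The key structural observation I would then record is that, for any randomized contract $\gvec$, the quantity $\Up(\gvec,\cvec) = \sum_{\pvec \in \supp(\gvec)} \gamma_{\pvec}\,\Up(\pvec,\cvec)$ inherits these properties: it is a nonincreasing, left-continuous step function of $c$ with finitely many downward jumps as $c$ increases. Since $\OPT(\cvec)=1-c$ is continuous, the objective $g(c) := \OPT(\cvec) - \Up(\gvec,\cvec)$ is left-continuous but has an \emph{upward} jump immediately to the right of every threshold of $\gvec$; equivalently, $g$ fails upper semicontinuity precisely at the thresholds, where $\lim_{c \to t^{+}} g(c) > g(t)$. This asymmetry---driven entirely by the principal-favorable tie-breaking convention---is what makes a maximum impossible, and isolating it is the conceptual heart of the argument.

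Finally I would pick an explicit $\gvec$ for which the global supremum of $g$ is realized only through such right-limits. For instance, take $\gvec$ placing probability $\nicefrac{1}{2}$ on each of $\pvec^1 = (0,\nicefrac{1}{4})$ and $\pvec^2 = (0,\nicefrac{1}{2})$; a short case analysis on the three intervals $[0,\nicefrac{1}{4}]$, $(\nicefrac{1}{4},\nicefrac{1}{2}]$, $(\nicefrac{1}{2},1]$ gives $g(c) < \nicefrac{1}{2}$ for \emph{every} $c \in [0,1]$, while $g(c) \to \nicefrac{1}{2}$ as $c \to \nicefrac{1}{4}^{+}$ (and again as $c \to \nicefrac{1}{2}^{+}$). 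Hence $\sup_{\cvec \in \mcC}\{\OPT(\cvec) - \Up(\gvec,\cvec)\} = \nicefrac{1}{2}$ is not attained, which proves the claim. The only point demanding genuine care is the boundary bookkeeping of the tie-breaking rule at each threshold: getting its direction right is exactly what produces the upward right-jumps, since the opposite convention would render $g$ upper semicontinuous on the compact set $[0,1]$ and force the supremum to be attained.
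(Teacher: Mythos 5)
Your proposal is correct and takes essentially the same route as the paper: the paper also uses the ``non-existence'' instance of Definition~\ref{def:nonexist_inst} with a uniform mixture of two contracts paying only on $\omega_2$ (it picks $\vp^1=(0,\nicefrac14)$, $\vp^2=(0,\nicefrac34)$, giving an unattained supremum of $\nicefrac58$ approached as $c_{a_2}\to\nicefrac14^+$, versus your $(0,\nicefrac14)$, $(0,\nicefrac12)$ with supremum $\nicefrac12$), and both arguments rest on the same piecewise computation of $\OPT(\vc)-\Up(\gvec,\vc)$ and the upward right-jumps created by the principal-favorable tie-breaking. Your explicit semicontinuity framing and the verified case analysis are sound, so no gap.
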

%

\section{Worst-Case Regret Guarantees of Deterministic Contracts} \label{sec:worstCase}

In this section, we provide our two main results for the first part of the paper, \emph{by characterizing the minimum possible regret achievable by means of deterministic contracts in the worst case}.

In Section~\ref{sec:sub_worst_case_opt}, we show our first main result: \emph{deterministic contracts are sufficient to achieve worst-case-optimal regret}.
In particular, given a PAPU instance, we provide a \emph{constructive} (and efficient) proof for the existence of a deterministic contract with regret at most $O(\sqrt{d(\mcC)})$ as $d(\mcC) \to 0$, where we recall that $d(\mcC)$ is the $L_\infty$-diameter of the set $\mcC$, representing the uncertainty level.
Moreover, we exhibit a family of instances in which attaining regret at least $\Omega(\sqrt{d(\mcC)})$ as $d(\mcC) \to 0$ is unavoidable, \emph{even when the principal is allowed to commit to menus of randomized contracts}.
This shows that deterministic contracts allow to achieve the minimum possible regret (up to constant factors) in the worst case over all the PAPU instances with the same uncertainty level.
%
%


In Section~\ref{sec:sub_dep_uncertain}, we perform a more fine-grained analysis, by studying how the regret attained by deterministic contracts varies as a function of the uncertainty level.
In particular, we prove that, in any PAPU instance with a suitably-defined varying (convex) uncertainty set parametrized by some $\delta \in [0,1]$---encoding the uncertainty level of the instance---, the minimum possible regret attainable by means of deterministic contracts is $\frac{1}{2}$-H\"older continuous as a function of $\delta$.

\subsection{Deterministic Contracts Achieve Worst-Case-Optimal Regret}\label{sec:sub_worst_case_opt}

Our strategy for demonstrating that deterministic contracts achieve worst-case-optimal regret centers around a simple and intuitive operation.
This consists in slightly modifying a contract by adding a small ``linear'' component to the payments, representing a fraction of principal's utilities.   
Formally, a contract $\pvec \in \mcP$ is modified as $\pvec+\alpha(\rvec-\pvec)$ for a given fraction $\alpha \in [0,1]$.
Notice that the components of the vector $\rvec - \pvec$ encode principal's utilities under contract $\pvec$, in all the possible outcomes.
While approaches similar to ours have been adopted to deal with approximately-incentive compatible contracts (see, \emph{e.g.},~\citep{dutting2021complexity,castiglioni2023designing}), to the best of our knowledge, we are the first to employ an approach of this kind to tackle costs uncertainty.
%

First, we prove the following fundamental lemma:
\begin{restatable}{lemma}{lemmaWorstUnoRaw}\label{lem:delta_opt_raw}
	Given a PAPU instance and $\alpha \in [0,1]$, for every deterministic contract $\vp \in \mcP$ and pair of cost vectors $\cvec, \tilde\cvec\in\mcC$, the contract $\tilde\vp := \vp+ \alpha (\rvec-\vp)$  satisfies
	$
	\Up(\tilde\pvec,\tilde\cvec) \ge \Up(\pvec,\cvec)- \left( \frac{2 \| \vc - \tilde \vc \|_\infty}{\alpha} + \alpha \right).
	$
\end{restatable}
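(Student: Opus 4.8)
The plan is to exploit the special structure of the linearized contract $\tilde\pvec = \pvec + \alpha(\rvec - \pvec)$ in order to express everything in terms of the original contract $\pvec$. Writing $R(e) := \sum_{\omega \in \Omega} F_{e,\omega}(r_\omega - p_\omega)$ for the principal's utility under $\pvec$ when the agent plays action $e$, and $P(e) := \sum_{\omega \in \Omega} F_{e,\omega}\, p_\omega$ for the agent's expected payment under $\pvec$, the first observation I would establish is that, for every action $e$, the principal's utility under $\tilde\pvec$ equals $(1-\alpha) R(e)$. This follows at once from $r_\omega - \tilde p_\omega = (1-\alpha)(r_\omega - p_\omega)$, i.e.\ the linearization scales the per-action principal utility uniformly by $(1-\alpha)$. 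In particular, letting $a := a^\star(\pvec,\cvec)$ and $b := a^\star(\tilde\pvec, \tilde\cvec)$ be the two relevant best responses, we obtain $\Up(\pvec,\cvec) = R(a)$ and $\Up(\tilde\pvec,\tilde\cvec) = (1-\alpha)R(b)$, so the whole inequality reduces to lower-bounding $R(b)$ in terms of $R(a)$.

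Next I would rewrite the agent's incentives under the linearized contract. A direct computation gives $\Ua(\tilde\pvec,\tilde\cvec, e) = P(e) + \alpha R(e) - \tilde c_e$ for every action $e$. The heart of the argument is then to combine two best-response conditions in a \emph{crossed} fashion. Since $b$ is a best response to $\tilde\pvec$ under $\tilde\cvec$, testing it against $a$ yields $P(b) + \alpha R(b) - \tilde c_b \geq P(a) + \alpha R(a) - \tilde c_a$. Since $a$ is a best response to $\pvec$ under $\cvec$, testing it against $b$ yields $P(a) - c_a \geq P(b) - c_b$, that is $P(a) - P(b) \geq c_a - c_b$. The crucial point is that the payment terms $P(a)$ and $P(b)$ cancel once these are chained: substituting the second inequality into the first gives $\alpha R(b) \geq \alpha R(a) + (c_a - \tilde c_a) + (\tilde c_b - c_b)$, and since each of $|c_a - \tilde c_a|$ and $|c_b - \tilde c_b|$ is at most $\|\cvec - \tilde\cvec\|_\infty$, this becomes $R(b) \geq R(a) - \frac{2\|\cvec - \tilde\cvec\|_\infty}{\alpha}$.

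Finally I would convert this bound on $R(b)$ into the claimed bound on $\Up(\tilde\pvec,\tilde\cvec) = (1-\alpha)R(b)$. Writing $(1-\alpha)R(b) = R(b) - \alpha R(b)$ and using $R(b) \leq 1$, which holds because $r_\omega - p_\omega \leq r_\omega \leq 1$ as payments are non-negative and rewards lie in $[0,1]$, yields $\Up(\tilde\pvec,\tilde\cvec) \geq R(a) - \frac{2\|\cvec - \tilde\cvec\|_\infty}{\alpha} - \alpha = \Up(\pvec,\cvec) - \left(\frac{2\|\cvec - \tilde\cvec\|_\infty}{\alpha} + \alpha\right)$, as desired. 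I expect the main conceptual obstacle to be spotting that the two best responses $a$ and $b$, which may well be different actions, should be linked by applying the incentive-compatibility inequalities in this crossed manner, so that the uncontrolled payment terms telescope away and only the cost differences, bounded by the $L_\infty$ distance between $\cvec$ and $\tilde\cvec$, remain. A minor point worth verifying is that the agent never abstains: the assumed zero-cost action guarantees that every best response is individually rational, so $b$ is genuinely played and the identities above are valid.
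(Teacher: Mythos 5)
Your proof is correct and follows essentially the same route as the paper's: the same scaling identity $\Up(\tilde\pvec,\tilde\cvec)=(1-\alpha)R(b)$, the same cancellation of expected payments via the two incentive-compatibility inequalities, and the same final use of $R(b)\le 1$ to absorb the $(1-\alpha)$ factor into an additive $\alpha$ loss. The only cosmetic difference is that the paper packages your ``crossed'' step as a standalone claim that $a^\star(\pvec,\cvec)$ remains a $2\|\cvec-\tilde\cvec\|_\infty$-approximate best response under $(\pvec,\tilde\cvec)$, whereas you chain the two best-response conditions and the two cost-difference bounds directly; the underlying algebra is identical.
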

Intuitively, Lemma~\ref{lem:delta_opt_raw} states that, if the principal gets expected utility $\Up(\vp,\cvec)$ by committing to $\vp$ against an agent with cost vector $\cvec$, then, when facing an agent with cost vector $\tilde \cvec$, committing to the contract $\tilde \vp$ obtained by adding an $\alpha$ fraction of principal's utilities to $\vp$ results in expected utility at least $\Up(\vp,\cvec)$, minus an additive loss depending on $\alpha$ and the $L_\infty$-distance between $\vc$ and~$\tilde \vc$.
Notice that the result in Lemma~\ref{lem:delta_opt_raw} holds no matter how $\cvec, \tilde\cvec\in\mcC$ and $\vp \in \mcP$ are chosen.

By taking $\vp $ in Lemma~\ref{lem:delta_opt_raw} to be equal to a contract maximizing principal's expected utility against an agent with cost vector $\cvec$, and by setting $\alpha = \sqrt{2 d(\mcC)}$, we immediately prove that the modified contract $\tilde \vp$ provides the principal with expected utility at most $2 \sqrt{2 d(\mcC)}$ less than $\Up(\pvec,\cvec) = \OPT(\cvec)$ against an agent with cost vector $\tilde \cvec$, for any $\tilde \vc \in \mcC$.
This is made formal by the following lemma:
\begin{restatable}{lemma}{lemmaWorstUno}\label{lem:delta_opt}
	Given a PAPU instance, for every cost vector $\cvec\in\mcC$ and contract $\vp^\star \in \mcP$ that is optimal for the principal against an agent with cost vector $\vc$, namely $\Up(\pvec^\star,\cvec) = \OPT(\cvec)$, it holds that
	$
	\Up(\tilde\pvec,\tilde\cvec) \ge \OPT(\cvec)-2\sqrt{2 d(\mcC)}
	$
	for every $\tilde \vc \in \mcC$, where we let $\tilde\vp := \vp^\star+\sqrt{2d(\mcC)}(\rvec-\vp^\star)$.
	%
	%
	%
	%
\end{restatable}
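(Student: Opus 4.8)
The plan is to obtain Lemma~\ref{lem:delta_opt} as an immediate specialization of the more general Lemma~\ref{lem:delta_opt_raw}, which already does all the heavy lifting. Concretely, I would apply Lemma~\ref{lem:delta_opt_raw} with the deterministic contract $\vp$ taken to be the optimal contract $\vp^\star$ against an agent with cost vector $\cvec$ (so that $\Up(\vp^\star,\cvec)=\OPT(\cvec)$ by hypothesis), and with the fraction set to $\alpha := \sqrt{2\,d(\mcC)}$. Since $\tilde\vp = \vp^\star + \sqrt{2d(\mcC)}(\rvec-\vp^\star)$ is exactly the contract $\vp^\star + \alpha(\rvec-\vp^\star)$ produced by Lemma~\ref{lem:delta_opt_raw} for this choice of $\alpha$, the conclusion of that lemma gives, for every $\tilde\cvec\in\mcC$,
\[
	\Up(\tilde\vp,\tilde\cvec) \;\ge\; \OPT(\cvec) - \left( \frac{2\,\|\vc-\tilde\vc\|_\infty}{\sqrt{2d(\mcC)}} + \sqrt{2d(\mcC)} \right).
\]
It then remains only to bound the parenthetical additive loss by $2\sqrt{2d(\mcC)}$.

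The key step is the choice of $\alpha$ together with the diameter bound. First, since $d(\mcC)$ is the $L_\infty$-diameter of $\mcC$ and both $\vc,\tilde\vc\in\mcC$, we have $\|\vc-\tilde\vc\|_\infty \le d(\mcC)$, so the first term is at most $\frac{2\,d(\mcC)}{\sqrt{2d(\mcC)}} = \sqrt{2d(\mcC)}$. Adding the second term yields the loss bound $\sqrt{2d(\mcC)} + \sqrt{2d(\mcC)} = 2\sqrt{2d(\mcC)}$, which is precisely what is claimed. The particular value $\alpha=\sqrt{2d(\mcC)}$ is not arbitrary: it is the minimizer of $\alpha \mapsto \frac{2d(\mcC)}{\alpha} + \alpha$ over $\alpha>0$ (by AM–GM, the two terms are balanced exactly when $\alpha^2 = 2d(\mcC)$), so this choice is optimal for the generic bound in Lemma~\ref{lem:delta_opt_raw} and explains where the factor $\sqrt{2d(\mcC)}$ comes from.

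The only point requiring care — and the closest thing to an obstacle — is verifying that the hypothesis $\alpha\in[0,1]$ of Lemma~\ref{lem:delta_opt_raw} is met, i.e.\ that $\sqrt{2d(\mcC)}\le 1$. This holds whenever $d(\mcC)\le \tfrac12$, which covers the regime of interest $d(\mcC)\to 0$. When $d(\mcC) > \tfrac12$ the stated inequality is vacuous: the loss term satisfies $2\sqrt{2d(\mcC)} > 2 \ge \OPT(\cvec)$ (as $\OPT(\cvec)\le 1$ since rewards lie in $[0,1]$), while the linearized contract $\tilde\vp=(1-\alpha)\vp^\star+\alpha\rvec$ guarantees the principal nonnegative utility, so $\OPT(\cvec)-\Up(\tilde\vp,\tilde\cvec)\le \OPT(\cvec) < 2\sqrt{2d(\mcC)}$ trivially. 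Thus the bound holds in both regimes, and no genuinely new estimate beyond Lemma~\ref{lem:delta_opt_raw} is needed.
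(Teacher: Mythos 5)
Your main argument is exactly the paper's proof: apply Lemma~\ref{lem:delta_opt_raw} with $\vp=\vp^\star$ and $\alpha=\sqrt{2d(\mcC)}$, use $\|\vc-\tilde\vc\|_\infty\le d(\mcC)$ to bound the additive loss by $2\sqrt{2d(\mcC)}$, and note (as the paper does) that this $\alpha$ minimizes $\frac{2d(\mcC)}{\alpha}+\alpha$. On this regime the proposal is correct and essentially verbatim the paper's argument.

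Your extra care about the hypothesis $\alpha\in[0,1]$ is a legitimate observation the paper silently skips (since $\mcC\subseteq[0,1]^n$ allows $d(\mcC)$ up to $1$, so $\sqrt{2d(\mcC)}$ can exceed $1$), but your patch for $d(\mcC)>\tfrac12$ contains a wrong step: the claim that $\tilde\vp=(1-\alpha)\vp^\star+\alpha\rvec$ ``guarantees the principal nonnegative utility'' does not hold. Per outcome, the principal's utility under $\tilde\vp$ is $r_\omega-\tilde p_\omega=(1-\alpha)(r_\omega-p^\star_\omega)$, which for $\alpha>1$ has the \emph{opposite} sign of the original utility and is negative whenever $p^\star_\omega<r_\omega$; even for $\alpha\in(0,1)$ nonnegativity fails on outcomes with $p^\star_\omega>r_\omega$. (For $\alpha>1$ the vector $\tilde\vp$ may not even lie in $\mcP=\mathbb{R}^m_+$, another corner the lemma statement glosses over.) The vacuity conclusion itself is salvageable by a different estimate: for any best response $a'$ of the agent,
\[
\Up(\tilde\vp,\tilde\vc)=(1-\alpha)\sum_{\omega\in\Omega}F_{a',\omega}\left(r_\omega-p^\star_\omega\right)\ge(1-\alpha)\sum_{\omega\in\Omega}F_{a',\omega}\,r_\omega\ge 1-\alpha\ge 1-\sqrt{2},
\]
where the first inequality uses $(1-\alpha)<0$ and $p^\star_\omega\ge 0$, and the second uses $r_\omega\le 1$; since $d(\mcC)>\tfrac12$ gives $\OPT(\vc)-2\sqrt{2d(\mcC)}<1-2<1-\sqrt{2}$, the stated bound holds trivially. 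So replace your nonnegativity assertion with this one-line computation and the edge case is genuinely covered.
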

Notice that, without our ``linearization'' operation, principal's expected utility under any contract is highly susceptible to small modifications in the cost vector, as the best response of the agent changes abruptly as a function of their costs.
Thus, Lemmas~\ref{lem:delta_opt_raw}~and~\ref{lem:delta_opt} strongly suggest for the consistent adoption of incorporating a small linear component into contracts.

The following lemma establishes a relation between optimal principal's expected utilities against two agents with different cost vectors in the uncertainty set $\mcC$.
\begin{restatable}{lemma}{lemmaWorstDue}\label{lem:holder_opt}
	Given a PAPU instance, for every $\vc,\tilde\vc\in\cC$, it holds that
	$
	|\OPT(\vc)-\OPT(\tilde\vc)|\le 2\sqrt{2 d(\mcC)}.
	$
\end{restatable}
%

Then, Lemmas~\ref{lem:delta_opt}~and~\ref{lem:holder_opt} allow to prove the following result:

\begin{restatable}{theorem}{theoremWorst}\label{thm:det_regret_upper}
	Given a PAPU instance, it holds that
	$
	\inf_{\vp\in \mcP}\reg(\vp) \le 4\sqrt{2 d(\mcC)}.
	$
\end{restatable}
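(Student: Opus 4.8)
The plan is to exhibit a single deterministic contract whose regret is at most $4\sqrt{2 d(\mcC)}$; since the infimum over all of $\mcP$ can only be smaller, this suffices. All the analytic work has already been carried out in Lemmas~\ref{lem:delta_opt}~and~\ref{lem:holder_opt}, so the proof reduces to combining them with the right choice of anchor cost vector.

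First I would fix an arbitrary reference cost vector $\cvec_0 \in \mcC$ and let $\vp^\star \in \mcP$ be a contract that is optimal for the principal against $\cvec_0$, i.e., $\Up(\vp^\star, \cvec_0) = \OPT(\cvec_0)$ (such a contract exists by definition of $\OPT$). I would then apply the ``linearization'' operation of Lemma~\ref{lem:delta_opt}, setting $\tilde\vp := \vp^\star + \sqrt{2 d(\mcC)}\,(\rvec - \vp^\star)$. This is my candidate contract, and the goal becomes to bound $\mcR(\tilde\vp) = \sup_{\tilde\cvec \in \mcC}\{\OPT(\tilde\cvec) - \Up(\tilde\vp, \tilde\cvec)\}$.

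The key step is to control, for an arbitrary $\tilde\cvec \in \mcC$, the quantity $\OPT(\tilde\cvec) - \Up(\tilde\vp, \tilde\cvec)$ by decomposing it through the reference value $\OPT(\cvec_0)$:
\[
\OPT(\tilde\cvec) - \Up(\tilde\vp, \tilde\cvec) = \underbrace{\big(\OPT(\tilde\cvec) - \OPT(\cvec_0)\big)}_{\textnormal{(I)}} + \underbrace{\big(\OPT(\cvec_0) - \Up(\tilde\vp, \tilde\cvec)\big)}_{\textnormal{(II)}}.
\]
Term~(I) is at most $2\sqrt{2 d(\mcC)}$ by the $\tfrac12$-H\"older-type bound of Lemma~\ref{lem:holder_opt} applied to $\cvec_0$ and $\tilde\cvec$. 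Term~(II) is exactly the loss controlled by Lemma~\ref{lem:delta_opt}: with anchor $\cvec_0$ and optimal contract $\vp^\star$, that lemma guarantees $\Up(\tilde\vp, \tilde\cvec) \ge \OPT(\cvec_0) - 2\sqrt{2 d(\mcC)}$, i.e. term~(II)$\,\le 2\sqrt{2 d(\mcC)}$. Summing the two bounds gives $\OPT(\tilde\cvec) - \Up(\tilde\vp, \tilde\cvec) \le 4\sqrt{2 d(\mcC)}$.

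Since this bound holds uniformly over all $\tilde\cvec \in \mcC$, taking the supremum yields $\mcR(\tilde\vp) \le 4\sqrt{2 d(\mcC)}$, and therefore $\inf_{\vp \in \mcP} \mcR(\vp) \le \mcR(\tilde\vp) \le 4\sqrt{2 d(\mcC)}$. The only point requiring care -- and the closest thing to an obstacle in an otherwise short argument -- is that the anchor $\cvec_0$ must be chosen once and held fixed: Lemma~\ref{lem:delta_opt} compares $\Up(\tilde\vp, \tilde\cvec)$ to $\OPT$ evaluated at the \emph{anchor} $\cvec_0$ rather than at the \emph{evaluation point} $\tilde\cvec$, so the H\"older bound of Lemma~\ref{lem:holder_opt} is precisely what bridges $\OPT(\cvec_0)$ and $\OPT(\tilde\cvec)$. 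No dependence on the particular choice of $\cvec_0$ survives in the final bound.
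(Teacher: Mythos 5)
Your proposal is correct and follows essentially the same route as the paper's own proof: fix an anchor cost vector, take a contract optimal against it, linearize it via Lemma~\ref{lem:delta_opt}, and bridge $\OPT(\cvec_0)$ to $\OPT(\tilde\cvec)$ with Lemma~\ref{lem:holder_opt}; your explicit two-term decomposition is exactly the paper's chain of inequalities written out. The only cosmetic difference is that the paper calls the anchor $\cvec$ where you call it $\cvec_0$, and your closing remark about holding the anchor fixed correctly identifies the one subtlety in the argument.
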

%

Next, we show that the result provided in Theorem~\ref{thm:det_regret_upper} is tight (up to constant factors), by providing a family of PAPU instances (see Definition~\ref{def:hard}) in which the principal incurs in a regret at least of $\Omega(\sqrt{d(\mcC)})$ as ${d(\mcC)} \to  0$, even when employing menus of randomized contracts.
\begin{definition}[‘‘Hard'' instances of the PAPU]\label{def:hard}
	The \emph{``hard'' PAPU instances} are parametrized by a parameter $\delta \in [0, \nicefrac{1}{2}]$ and are defined by tuples $(A,\Omega,\rvec,\mcC,\Fvec)$ such that:
	\begin{itemize}
		\item The agent has two actions and there are two outcomes, \emph{i.e.}, $A := \{ a_1, a_2 \}$ and $\Omega := \{  \omega_1, \omega_2 \}$.
		\item Action $a_1$ deterministically leads to outcome $\omega_{1}$, namely $F_{a_1, \omega_1} = 1$ and $F_{a_1, \omega_2} = 0$, while action $a_2$ is such that $F_{a_2, \omega_1} = 1 -\alpha$ and $F_{a_2, \omega_2} = \alpha$, with $\alpha = \sqrt{2 \delta}$.
		\item The principal gets positive reward in $\omega_1$ only, namely $r_{\omega_1}=1$ and $r_{\omega_2}=0$.
		\item The uncertainty set is $\mcC := \left\{ \cvec \in [0,1]^2 \mid c_{a_1} \in\{0,\delta\} \wedge c_{a_2}=0  \right\}$, and, thus, $d(\mcC) = \delta$.
	\end{itemize}
	%
	%
\end{definition}
%
%
\begin{restatable}{proposition}{propostionWorst}\label{thm:regret_lower}
	There exists a family of PAPU instances $\mathcal{I}_\delta := (A,\Omega,\rvec,\mcC,\Fvec)$ parametrized by a parameter $\delta \in [0, \nicefrac{1}{2}]$ in which $d(\mcC) = \delta$ and $\inf_{\Gamma \in \mcMr} \mcR(\Gamma) \geq \sqrt{2\delta}/4$.
	%
	%
\end{restatable}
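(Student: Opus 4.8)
The plan is to take the ``hard'' instances of Definition~\ref{def:hard} and show that \emph{every} menu of randomized contracts suffers regret at least $\alpha/4=\sqrt{2\delta}/4$, where $\alpha:=\sqrt{2\delta}$. The starting observation is that here $\mcC$ contains only two cost vectors: $\vc^0$ with $c_{a_1}=0$ and $\vc^1$ with $c_{a_1}=\delta$ (both with $c_{a_2}=0$). Hence for any $\Gamma\in\mcMr$ the regret is just $\mcR(\Gamma)=\max\{R_0,R_1\}$ with $R_0:=\OPT(\vc^0)-\Up(\Gamma,\vc^0)$ and $R_1:=\OPT(\vc^1)-\Up(\Gamma,\vc^1)$. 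First I would pin down the two optima. Paying nothing induces $a_1$ under $\vc^0$ (ties broken toward the principal), so $\OPT(\vc^0)=1$; under $\vc^1$, inducing $a_1$ requires a payment gap $p_{\omega_1}-p_{\omega_2}\ge \delta/\alpha=\alpha/2$, and since $1-\alpha/2>1-\alpha$ this dominates letting the agent play $a_2$, giving $\OPT(\vc^1)=1-\alpha/2$.

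The main tool is the welfare identity $\Up(\pvec,\vc)=\big(\sum_\omega F_{a^\star,\omega}r_\omega-c_{a^\star}\big)-\Ua(\pvec,\vc,a^\star)$ for the principal-favorable best response $a^\star=a^\star(\pvec,\vc)$, which extends to randomized contracts by linearity. Let $\gvec^0:=\gvec^\star(\Gamma,\vc^0)$ and $\gvec^1:=\gvec^\star(\Gamma,\vc^1)$ be the items selected by the two types, let $q_0,q_1$ be the probabilities that the respective agent plays $a_1$ under these items, and let $u_{00}:=\Ua(\gvec^0,\vc^0)$, $u_{11}:=\Ua(\gvec^1,\vc^1)$. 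Plugging the per-type welfares ($1$ vs.\ $1-\alpha$ for $\vc^0$, and $1-\delta$ vs.\ $1-\alpha$ for $\vc^1$) into the identity yields the exact expressions $R_0=(1-q_0)\alpha+u_{00}$ and $R_1=\tfrac{\alpha}{2}-q_1(\alpha-\delta)+u_{11}$.

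The whole bound then hinges on one structural fact: whenever $a_1$ is a best response for type $\vc^1$ under a deterministic $\pvec$, the payment satisfies $p_{\omega_1}\ge\alpha/2$. This has two consequences. First, $u_{11}\ge q_1(\alpha/2-\delta)$, and substituting into $R_1$ gives $R_1\ge\tfrac{\alpha}{2}(1-q_1)$; so if $q_1\le\tfrac12$ we are done with $R_1\ge\alpha/4$. Second, a type-$\vc^0$ agent evaluating $\gvec^1$ can always play $a_1$ and collect $p_{\omega_1}$, so on the probability-$q_1$ event where $a_1$ is a $\vc^1$-best-response one has $p_{\omega_1}\ge\alpha/2$, yielding $\Ua(\gvec^1,\vc^0)\ge(\alpha/2)q_1$. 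For the remaining case $q_1>\tfrac12$ I would invoke incentive compatibility of the menu: since type $\vc^0$ prefers $\gvec^0$, $u_{00}=\Ua(\gvec^0,\vc^0)\ge\Ua(\gvec^1,\vc^0)\ge(\alpha/2)q_1>\alpha/4$, whence $R_0\ge u_{00}>\alpha/4$. Combining both cases gives $\mcR(\Gamma)\ge\alpha/4=\sqrt{2\delta}/4$ for all $\Gamma$.

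I expect the main obstacle to be exactly the separating/menu case. For a single randomized contract the two types face the \emph{same} contract, so the monotonicity $q_0\ge q_1$ directly transfers the ``expensive $a_1$'' cost from type $\vc^1$ to type $\vc^0$; with a genuine menu the quantities $q_1,u_{11}$ live on $\gvec^1$ while $q_0,u_{00}$ live on $\gvec^0$, and the coupling must be recovered through the IC inequality, which is precisely where the payment-$\ge\alpha/2$ bound re-enters. A secondary point requiring care is that every inequality must be stated under the principal-favorable tie-breaking, so that the resulting upper bounds on $\Up$ (equivalently, the lower bounds on the regret) remain valid; this is handled cleanly because the welfare identity is an exact accounting in which the tie-breaking is absorbed into the definitions of $q_0,q_1$.
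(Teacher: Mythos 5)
Your proof is correct and, while it is built on the same ``hard'' instances of Definition~\ref{def:hard}, it reaches the bound by a genuinely different route than the paper. The paper argues structurally: a revelation-principle step restricts attention to two-item menus, a dominance comparison of expected payments on $\omega_1$ collapses the menu to a single randomized contract, and further dominance arguments within the regions $p_{\omega_1}\ge\delta/\alpha$ and $p_{\omega_1}<\delta/\alpha$ shrink the support to $\vp^1=(\delta/\alpha,0)$ and $\vp^2=(0,0)$; the regret then becomes the maximum of two linear functions of a single mixing weight, whose min-max value is $\sqrt{2\delta}/4$. You instead bound an arbitrary $\Gamma\in\mcMr$ directly: the exact welfare decomposition $\Up = \textnormal{(welfare)} - \Ua$ yields $R_0=(1-q_0)\alpha+u_{00}$ and $R_1=\alpha/2-q_1(\alpha-\delta)+u_{11}$, and the single structural fact that $a_1$ being a $\vc^1$-best response forces $p_{\omega_1}\ge\alpha/2$ is used twice --- once to lower-bound the rent $u_{11}\ge q_1(\alpha/2-\delta)$, giving $R_1\ge(\alpha/2)(1-q_1)$, and once, routed through the menu's incentive-compatibility constraint $u_{00}\ge\Ua(\gvec^1,\vc^0)$, to get $u_{00}\ge(\alpha/2)q_1$ --- so the case split at $q_1=1/2$ gives $\max\{R_0,R_1\}\ge\alpha/4=\sqrt{2\delta}/4$ with the same constant. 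I verified the ingredients: $\OPT(\vc^0)=1$ and $\OPT(\vc^1)=1-\alpha/2$ are right, both regret identities are exact, the payment bound and its two consequences hold (including $\alpha/2\ge\delta$ on the range $\delta\le 1/2$, so the rent bound is nonvacuous), and your treatment of principal-favorable tie-breaking is sound because the welfare identity is an exact accounting for whichever best response defines $q_0,q_1$. Comparatively, the paper's reduction has the virtue of exhibiting the (near-)optimal robust randomized contract --- the uniform mixture over $\vp^1,\vp^2$, reused in Proposition~\ref{lem:RbetterthenMD} --- whereas your argument dispenses entirely with the ``w.l.o.g.'' support- and payment-restriction steps (which the paper only sketches), handles arbitrary menus, arbitrary supports, and payments on both outcomes in one uniform accounting, and is therefore the more self-contained and more readily generalizable of the two.
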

%
%
Notice that a direct consequence of Proposition~\ref{thm:regret_lower} is that the regret attainable by menus of randomized contracts is \emph{not} Lipschitz continuous as a function of the uncertainty level.
In the following Section~\ref{sec:sub_dep_uncertain}, we show that, in particular PAPU instances with a varying uncertainty set, the regret attained by deterministic contracts satisfies a weaker continuity property when seen as as function of the uncertainty level, namely $\frac{1}{2}$-H\"older continuity.

\subsection{The Regret of Deterministic Contracts as a Function of the Uncertainty Level}\label{sec:sub_dep_uncertain}

Next, we provide a more fine grained analysis of the regret achievable by deterministic contracts, by showing how this varies as a function of the uncertainty level $d(\mcC)$.
%
In the rest of this subsection, we assume that all the uncertainty sets $\mcC \subseteq [0,1]^n$ are convex.

%
%
%
%

Our analysis relies on a suitably-defined ``scaling'' of the uncertainty sets of PAPU instances.
In particular, given a convex uncertainty set $\mcC \subseteq [0,1]^n$, we denote by $\mcC_\delta(\cvec_0) \subseteq [0,1]^n$ the \emph{$\delta$-scaling} of the set $\mcC$ centered in $\cvec_0$.
Formally, for every $\delta \in [0,1]$ and $\cvec_0 \in \mcC$, it holds:
%
\[
	\mcC_\delta(\cvec_0) := \left\{ \delta\cvec+\cvec_0(1-\delta) \mid \cvec\in \mcC\right\}.
\]
Intuitively, the $\delta$-scaling $\mcC_\delta(\cvec_0)$ of an uncertainty set $\mcC$ is obtained by moving the points in $\mcC$ towards $\cvec_0$, by taking their convex combinations of weight $\delta$ with $\cvec_0$.
Notice that $\cvec_0$ is a ``center point'' for the $\delta$-scaling, since it is an invariant for such a transformation, as $\delta\cvec_0+\cvec_0(1-\delta) =\cvec_0$.

The following lemma characterizes the properties of the $\delta$-scaling transformation:
\begin{restatable}{lemma}{lemmaWorstTre}\label{lem:scaling}
	Given a convex uncertainty set $\mcC \subseteq [0,1]^n$ with $L_\infty$-diameter $d(\mcC)$, we have:
	\begin{enumerate}
		\item for every $\delta, \delta' \in[0,1]$ such that $\delta \leq \delta'$ and $\cvec_0\in \mcC$, it holds $ \mcC_\delta(\cvec_0)\subseteq  \mcC_{\delta'}(\vc_0)$;
		\item for every $\delta\in[0,1]$ and $\cvec_0\in \mcC$, the $L_\infty$-diameter of $\mcC_\delta(\cvec_0)$ is $\sup_{\cvec,\cvec'\in \mcC_\delta(\cvec_0)}\|\cvec-\cvec'\|_\infty=\delta d(\mcC)$.
	\end{enumerate}
\end{restatable}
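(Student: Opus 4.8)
The plan is to prove the two claimed properties of the $\delta$-scaling transformation directly from its definition, both of which reduce to elementary manipulations of convex combinations. The definition gives $\mcC_\delta(\cvec_0) = \{\delta \cvec + (1-\delta)\cvec_0 \mid \cvec \in \mcC\}$, so each point of the scaled set is a convex combination (since $\delta \in [0,1]$) of a point $\cvec \in \mcC$ with the fixed center $\cvec_0 \in \mcC$. By convexity of $\mcC$, every such combination lies in $\mcC$, which already confirms $\mcC_\delta(\cvec_0) \subseteq \mcC$; the two parts of the lemma sharpen this.

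For the monotonicity claim (Part 1), I would fix $\delta \le \delta'$ and $\cvec_0 \in \mcC$ and take an arbitrary point $\vy = \delta \cvec + (1-\delta)\cvec_0 \in \mcC_\delta(\cvec_0)$. The goal is to exhibit some $\cvec' \in \mcC$ with $\vy = \delta' \cvec' + (1-\delta')\cvec_0$. Solving for $\cvec'$ formally gives $\cvec' = \cvec_0 + \tfrac{\delta}{\delta'}(\cvec - \cvec_0)$, which is itself a convex combination of $\cvec$ and $\cvec_0$ because $\tfrac{\delta}{\delta'} \in [0,1]$ when $\delta \le \delta'$. By convexity of $\mcC$, this $\cvec'$ belongs to $\mcC$, so $\vy \in \mcC_{\delta'}(\cvec_0)$, establishing the inclusion. (The edge case $\delta' = 0$ forces $\delta = 0$ and the inclusion is trivial, so one can assume $\delta' > 0$.)

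For the diameter claim (Part 2), I would take two arbitrary points of $\mcC_\delta(\cvec_0)$, write them as $\delta \cvec + (1-\delta)\cvec_0$ and $\delta \cvec' + (1-\delta)\cvec_0$ for $\cvec, \cvec' \in \mcC$, and compute their difference, which is exactly $\delta(\cvec - \cvec')$. Taking $L_\infty$-norms gives $\|\delta(\cvec-\cvec')\|_\infty = \delta \|\cvec - \cvec'\|_\infty$, and taking the supremum over all $\cvec, \cvec' \in \mcC$ yields $\delta \cdot d(\mcC)$ by the definition of $d(\mcC)$. Since every pair of points in $\mcC_\delta(\cvec_0)$ arises this way, the supremum of pairwise distances is precisely $\delta \, d(\mcC)$, as claimed.

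I do not expect a genuine obstacle here: both parts are direct consequences of the linearity of the scaling map and the convexity assumption, with the only care needed being the verification that $\tfrac{\delta}{\delta'} \in [0,1]$ in Part 1 and the clean factoring of $\delta$ out of the norm in Part 2. If anything, the mildest subtlety is handling the degenerate $\delta = 0$ (respectively $\delta' = 0$) cases separately, where $\mcC_\delta(\cvec_0) = \{\cvec_0\}$ has diameter $0 = 0 \cdot d(\mcC)$, consistent with the stated formula.
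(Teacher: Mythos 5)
Your proof is correct and follows essentially the same route as the paper's: for Part~1 you construct exactly the same witness $\cvec' = \frac{\delta'-\delta}{\delta'}\cvec_0 + \frac{\delta}{\delta'}\cvec$ (the paper's $\hat{\cvec}$) and invoke convexity, and for Part~2 you factor $\delta$ out of the $L_\infty$-norm of the difference of two scaled points. If anything, your treatment is slightly more careful than the paper's, since you explicitly note the degenerate case $\delta'=0$ and argue the diameter as an exact equality, whereas the paper's displayed computation only writes the ``$\le$'' direction (the equality being implicit in the linearity of the scaling map).
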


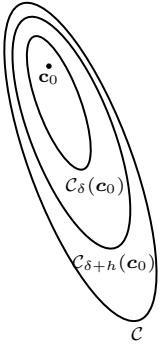
\begin{wrapfigure}[16]{l}{0.25\textwidth}
	\begin{center}
		\vspace{-0.5cm}
		\tikzset{every picture/.style={line width=0.75pt}} 

\begin{tikzpicture}[x=0.75pt,y=0.75pt,yscale=-0.65,xscale=0.65]
	
	\draw   (148.24,36.27) .. controls (155.77,10.91) and (181.41,25) .. (205.5,67.76) .. controls (229.59,110.51) and (243.01,165.73) .. (235.48,191.1) .. controls (227.95,216.46) and (202.31,202.37) .. (178.22,159.62) .. controls (154.13,116.86) and (140.71,61.64) .. (148.24,36.27) -- cycle ;
	\draw   (158.67,44.17) .. controls (163.11,31.36) and (177.2,41.76) .. (190.15,67.38) .. controls (203.09,93) and (209.98,124.14) .. (205.53,136.94) .. controls (201.09,149.75) and (187,139.35) .. (174.06,113.73) .. controls (161.12,88.11) and (154.23,56.97) .. (158.67,44.17) -- cycle ;
	\draw    (174.41,62.04) ;
	\draw [shift={(174.41,62.04)}, rotate = 0] [color={rgb, 255:red, 0; green, 0; blue, 0 }  ][fill={rgb, 255:red, 0; green, 0; blue, 0 }  ][line width=0.75]      (0, 0) circle [x radius= 1.34, y radius= 1.34]   ;
	\draw   (143.04,25.88) .. controls (153.63,-4.62) and (187.21,20.14) .. (218.04,81.18) .. controls (248.88,142.23) and (265.3,216.44) .. (254.71,246.94) .. controls (244.12,277.44) and (210.54,252.69) .. (179.71,191.64) .. controls (148.87,130.6) and (132.46,56.38) .. (143.04,25.88) -- cycle ;
	
	\draw (174.41,65.44) node [anchor=north] [inner sep=0.75pt]    {\scriptsize$\mathbf{c}_{0}$};
	\draw (185,145.59) node [anchor=north west][inner sep=0.75pt]    {\scriptsize $\mathcal{C}_{\delta }(\vc_0)$};
	\draw (243,262.4) node [anchor=north] [inner sep=0.75pt]    {\scriptsize $\mathcal{C}$};
	\draw (225,205.4) node [anchor=north] [inner sep=0.75pt]    {\scriptsize $\mathcal{C}_{\delta +h}(\vc_0)$};

\end{tikzpicture}
	\end{center}
	\caption{Uncertainty set $\mcC$ and two of its ``scalings''.}
	\label{fig:Cdelta}
\end{wrapfigure}
Intuitively, point~(1) in Lemma~\ref{lem:scaling} shows that the $\delta$-scaling transformation preserves the ``shape'' of the uncertainty set $\mcC$, while point~(2) states that the $L_\infty$-diameter of the $\delta$-scaling $\mcC_\delta(\cvec_0)$ shrinks proportionally to $\delta$ with respect to the $L_\infty$-diameter $d(\mcC)$ of the original set $\mcC$.
\Cref{fig:Cdelta} illustrates some examples of $\delta$-scaling transformations.

%

A representative example of $\delta$-scaling is the case of $L_\infty$-norm balls.
Given an uncertainty set $\mcC := \left\{ \cvec \in [0,1]^n \mid || \cvec - \cvec_0 ||_\infty \leq D  \right\}$ defined as an $L_\infty$-norm ball contained in $[0,1]^n $ with radius $D \in [0,1]$ and center $\cvec_0 \in [0,1]^n$, the $\delta$-scaling $\mcC_\delta(\cvec_0)$ of $\mcC$ centered in $\cvec_0$ is the $L_\infty$-norm ball with the same center and radius $\delta D$.
%
%
As we also discuss in Section~\ref{sec:eps_cover}, an uncertainty set with this shape may arise in all settings in which agent's costs are estimated from data, in order to build some confidence region in which the true agent's costs are contained with sufficiently high probability.
In this case, the parameter $\delta$ of the $\delta$-scaling transformation represents how the confidence region shrinks after observing some new data that allows to refine the estimation of agent's costs.

Next, we introduce our main result relating the regret achievable by deterministic contracts with the uncertainty level.
Given a PAPU instance $(A,\Omega,\rvec,\mcC,\Fvec)$, for every $\delta \in [0,1]$ and $\vc_0 \in \mcC$, we let $\mcR_{\delta,\cvec_0}(\pvec)$ be the regret of a given deterministic contract $\pvec \in \mcP$ in a new PAPU instance obtained by applying a $\delta$-scaling centered in $\cvec_0$ to the uncertainty set $\mcC$.
Formally, we have that $\mcR_{\delta,\cvec_0}(\pvec) := \sup_{\cvec \in \mcC_\delta(\cvec_0)} \left\{ \OPT(\cvec) - \Up(\pvec,\cvec)  \right\}$.
Notice that, by definition of $\delta$-scaling, it holds that $\mcR_{1,\cvec_0}(\pvec) = \mcR(\pvec)$ for every $\cvec_0 \in \mcC$, while, by applying Lemma~\ref{lem:scaling}, the $L_\infty$-diameter of the uncertainty set of the new PAPU instance is $\delta d(\mcC)$.
Then, we can state the following result:
%
%
\begin{restatable}{theorem}{theoremWorstDue}\label{thm:change_diameter}
	Given an instance $(A,\Omega,\rvec,\mcC,\Fvec)$ of the PAPU, for every scaling parameter $\delta \in [0,1]$, offset $h \in [0,1]$ such that $\delta + h \leq 1$, and center $\cvec_0 \in \mcC$, the following holds:
	\[
		0\le\inf_{\pvec \in \mcP} \mcR_{\delta+h, \cvec_0}(\pvec) - \inf_{\pvec \in \mcP} \mcR_{\delta, \cvec_0} (\pvec) \leq 4\sqrt{2h}.
	\]
\end{restatable}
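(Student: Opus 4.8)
The plan is to prove the two inequalities separately. The left inequality is pure monotonicity and is essentially free: since $\delta \le \delta + h$, Lemma~\ref{lem:scaling}(1) gives $\mcC_\delta(\cvec_0) \subseteq \mcC_{\delta+h}(\cvec_0)$, so for every fixed $\pvec \in \mcP$ the supremum defining $\mcR_{\delta+h,\cvec_0}(\pvec)$ is taken over a superset of the one defining $\mcR_{\delta,\cvec_0}(\pvec)$, whence $\mcR_{\delta+h,\cvec_0}(\pvec) \ge \mcR_{\delta,\cvec_0}(\pvec)$. Taking the infimum over $\pvec$ on both sides yields $\inf_{\pvec \in \mcP}\mcR_{\delta+h,\cvec_0}(\pvec) \ge \inf_{\pvec \in \mcP}\mcR_{\delta,\cvec_0}(\pvec)$, i.e. the difference is nonnegative.

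For the right inequality I would first dispose of the regime $h > \nicefrac12$ trivially: principal's regret always lies in $[0,1]$ (it is nonnegative by definition of $\OPT$, and the zero contract $\vzero$ already guarantees regret at most $1$, so the infimum regret is at most $1$), hence the left-hand difference is at most $1 < 4\sqrt{2h}$. Assume henceforth $h \le \nicefrac12$, so that $\alpha := \sqrt{2h} \in [0,1]$. Let $\pvec^\star$ be a regret-minimizing deterministic contract for the $\delta$-scaled instance, which exists by Corollary~\ref{cor:existence_det} applied to the PAPU instance with uncertainty set $\mcC_\delta(\cvec_0)$; thus $\mcR_{\delta,\cvec_0}(\pvec^\star) = \inf_{\pvec \in \mcP}\mcR_{\delta,\cvec_0}(\pvec)$. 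I then linearize, setting $\tilde\pvec := \pvec^\star + \alpha(\rvec - \pvec^\star)$. The crux is a geometric observation: every point of the larger scaled set is close to a point of the smaller one, \emph{with distance controlled by $h$ alone, not by $\delta$}. Indeed, writing an arbitrary $\tilde\cvec \in \mcC_{\delta+h}(\cvec_0)$ as $\tilde\cvec = (\delta+h)\cvec + (1-\delta-h)\cvec_0$ for some $\cvec \in \mcC$, the point $\hat\cvec := \delta\cvec + (1-\delta)\cvec_0 \in \mcC_\delta(\cvec_0)$ satisfies $\tilde\cvec - \hat\cvec = h(\cvec - \cvec_0)$, so $\|\tilde\cvec - \hat\cvec\|_\infty \le h\, d(\mcC) \le h$.

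Next I would bound $\mcR_{\delta+h,\cvec_0}(\tilde\pvec)$ pointwise. Fix $\tilde\cvec$ and its partner $\hat\cvec$. Applying Lemma~\ref{lem:delta_opt_raw} with contract $\pvec^\star$, cost vectors $\hat\cvec, \tilde\cvec$, and the chosen $\alpha$ gives $\Up(\tilde\pvec, \tilde\cvec) \ge \Up(\pvec^\star, \hat\cvec) - \big(\tfrac{2\|\hat\cvec - \tilde\cvec\|_\infty}{\alpha} + \alpha\big) \ge \Up(\pvec^\star, \hat\cvec) - 2\sqrt{2h}$, where the final step uses $\|\hat\cvec - \tilde\cvec\|_\infty \le h$. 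To control $\OPT(\tilde\cvec)$ I apply the same lemma to an optimal contract for $\tilde\cvec$ (as the ``source'' cost) and target cost $\hat\cvec$: linearizing it produces a contract witnessing $\OPT(\hat\cvec) \ge \OPT(\tilde\cvec) - 2\sqrt{2h}$, i.e. $\OPT(\tilde\cvec) \le \OPT(\hat\cvec) + 2\sqrt{2h}$ — this is the distance-based form of Lemma~\ref{lem:holder_opt}. Combining, $\OPT(\tilde\cvec) - \Up(\tilde\pvec, \tilde\cvec) \le \big(\OPT(\hat\cvec) - \Up(\pvec^\star, \hat\cvec)\big) + 4\sqrt{2h} \le \mcR_{\delta,\cvec_0}(\pvec^\star) + 4\sqrt{2h}$, using $\hat\cvec \in \mcC_\delta(\cvec_0)$. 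Taking the supremum over $\tilde\cvec \in \mcC_{\delta+h}(\cvec_0)$ gives $\mcR_{\delta+h,\cvec_0}(\tilde\pvec) \le \inf_{\pvec \in \mcP}\mcR_{\delta,\cvec_0}(\pvec) + 4\sqrt{2h}$, and since $\inf_{\pvec \in \mcP}\mcR_{\delta+h,\cvec_0}(\pvec) \le \mcR_{\delta+h,\cvec_0}(\tilde\pvec)$ the claimed upper bound follows.

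The main obstacle, and the conceptual heart of the proof, is the geometric step: one must show that each $\tilde\cvec$ in the larger set has a partner in the smaller set at $L_\infty$-distance only $O(h)$, \emph{uniformly in $\delta$}. This is precisely what allows the error to scale as $\sqrt{h}$ rather than as $\sqrt{(\delta+h)\,d(\mcC)}$, and it is why I must invoke the distance-based linearization estimate (Lemma~\ref{lem:delta_opt_raw}) twice rather than the coarser diameter-based Lemma~\ref{lem:holder_opt}, whose bound over $\mcC_{\delta+h}(\cvec_0)$ would be $\Theta(\sqrt{\delta+h})$ and hence useless here. Everything else reduces to a careful but routine double application of the linearization lemma and the monotonicity supplied by Lemma~\ref{lem:scaling}.
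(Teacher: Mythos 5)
Your proof is correct and follows essentially the same route as the paper's: the left inequality via Lemma~\ref{lem:scaling}(1), and the right one by linearizing a regret-minimizing contract of the $\delta$-scaled instance with $\alpha=\sqrt{2h}$ and pairing each $\tilde\cvec\in\mcC_{\delta+h}(\cvec_0)$ with $\hat\cvec=\delta\cvec+(1-\delta)\cvec_0\in\mcC_\delta(\cvec_0)$ at $L_\infty$-distance at most $h$ (the paper's map $\pi$), then applying Lemma~\ref{lem:delta_opt_raw} and the distance-based form of Lemma~\ref{lem:holder_opt} exactly as you do. Your explicit dispatch of the regime $h>\nicefrac{1}{2}$ (where $\sqrt{2h}>1$, so Lemma~\ref{lem:delta_opt_raw} does not literally apply) is a small edge case the paper glosses over, and your reading of Lemma~\ref{lem:holder_opt} as a distance-based estimate---rather than one over the diameter of $\mcC_{\delta+h}(\cvec_0)$, which would only give $O(\sqrt{\delta+h})$---is precisely how the paper implicitly uses it.
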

Intuitively, Theorem~\ref{thm:change_diameter} establishes that the regret of deterministic contracts is $\frac{1}{2}$-H\"older continuous as a function of the parameter $h$ controlling how the uncertainty level varies.
Specifically, the theorem states that the regret attainable by deterministic contracts increases of at most $4 \sqrt{2h}$ by switching from the instance with uncertainty set defined by the $\delta$-scaling of $\mcC$, where $d(\mcC) = \delta$, to the instance in which the uncertainty set is the $(\delta+h)$-scaling of $\mcC$, where $d(\mcC) = \delta + h$. 
%

Notice that the bound in Theorem~\ref{thm:change_diameter} is tight due to \cref{thm:regret_lower}.
Indeed, since $\mcR_{0,\cvec_0} (\pvec) = 0$ for every $\cvec_0 \in \mcC$ by definition, if we let $\delta = 0$ in Theorem~\ref{thm:change_diameter}, then we get that $\mcR_{h, \cvec_0}(\pvec) \leq 4 \sqrt{2h}$ for every $\cvec_0 \in \mcC$ and $h \in [0,1]$.
This is tight (up to constant factors) by \cref{thm:regret_lower} with $\delta = h$.
%
%

%
%
%
%

\section{The Regret Across Different Classes of Contracts} \label{sec:robustness}

In this section, we describe the relationship between different classes of contracts in terms of regret, by adopting a {worst-case approach} as in Section~\ref{sec:worstCase}.
In the previous Section~\ref{sec:worstCase}, we proved that deterministic contracts achieve a regret of the order of $O(\sqrt{d(\mcC)})$ as $d(\mcC) \to 0$, which is optimal when considering the worst case over all instances with the same uncertainty level.
Next, we study \emph{how the worst-case ``regret gap'' between different classes of contracts depends on the uncertainty level $d(\mcC)$}, deriving the set of results graphically depicted in Figure~\ref{fig:difContracts}(\emph{Right}).

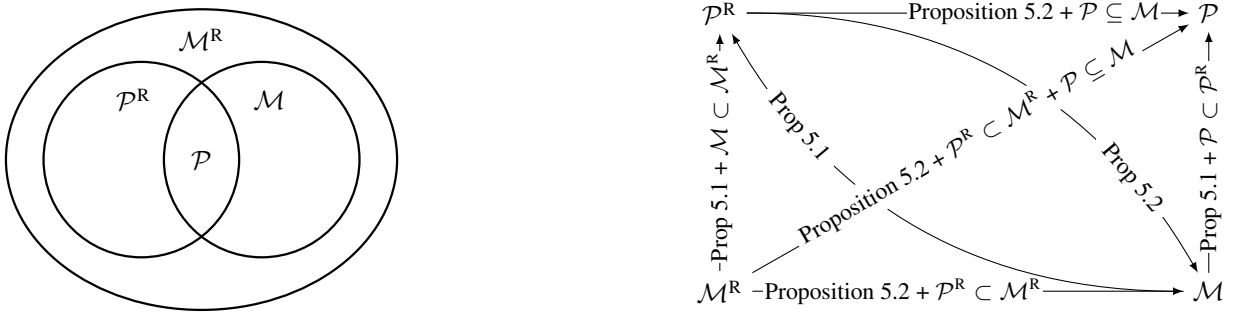
\begin{figure}[!htp]
	\centering
	\subfigure{
		\centering
		\begin{tikzpicture}
	
	\draw[line width=0.3mm, black] (1.6,1.6) ellipse (2.6cm and 2cm);
	\node at (1.6, 3.2) {$\mcMr$};
	
	\draw[line width=0.3mm, black] (0.8,1.6) circle (1.3cm);
	\node at (0.7, 2.4) {$\mcPr$};
	
	\draw[line width=0.3mm, black] (2.4,1.6) circle (1.3cm);
	\node at (2.5, 2.4) {$\mcM$};
	
	\node at (1.6, 1.6) {$\cP$};
\end{tikzpicture}}
	\hfill
	\subfigure{
		\begin{tikzpicture}
	\def\dhorizontal{6.5cm}
	\def\dvertical{3.7cm}
	\centering
	\node (R) at (3,6) {$\mcPr$};
	\node (D) at ($(R)+(\dhorizontal,0)$) {$\cP$};
	\node (MR) at ($(R)+(0,-\dvertical)$) {$\mcMr$};
	\node (MD) at ($(D)+(0,-\dvertical)$) {$\mcM$};
	
	
	\path[->,>=latex] (R) edge              node [inner sep=0.1ex, pos=0.65, fill=white, align=center]  {\small\Cref{lem:RbetterthenMD} + $\mcP\subseteq \mcM$} (D)
	(R)    edge [bend left] node  [pos=0.8, align=center, inner sep=0.1ex, fill=white]        {\rotatebox[origin=c]{-50}{\small Prop~\ref{lem:RbetterthenMD}}} (MD)
	(MR)    edge node [align=center, pos=0.35, inner sep=0.1ex, fill=white]          {\small\Cref{lem:RbetterthenMD} + $\mcPr\subset\mcMr$} (MD)  
	(MD)	    edge [bend left] node [align=center, pos=0.8, inner sep=0.1ex, fill=white]  {\rotatebox[origin=cb]{-50}{\small Prop~\ref{lem:MDbetterthenR}}} (R)
	(MR) edge              node [above, align=center, anchor=base, pos=0.5,inner sep=0.5ex, fill=white, rotate=30]  {\rotatebox[origin=c]{0}{\small\Cref{lem:RbetterthenMD} + $\mcPr\subset\mcMr$ + $\mcP\subseteq \mcM$}} (D)
	(MR) edge              node [align=center, pos=0.48, inner sep=0.1ex, fill=white]  {\rotatebox[origin=c]{90}{\small Prop~\ref{lem:MDbetterthenR} + $\mcM \subset \mcMr$}} (R)
	(MD) edge              node [align=center, pos=0.48, inner sep=0.1ex, fill=white]  {\rotatebox[origin=c]{90}{\small Prop~\ref{lem:MDbetterthenR} + $\mcP\subset\mcPr$}} (D);
\end{tikzpicture}}
		\caption{(\emph{Left}) Diagram depicting the ``inclusion'' relationship between different classes of contracts. (\emph{Right}) Diagram depicting the relationship between different classes of contracts in terms of ``regret gap''. An arrow from class $\cX$ to class $\cY$ means that there exists a family of PAPU instances parametrized by $\delta$ in which the uncertainty level is $d(\mcC) = \delta$, the regret attained by contracts of class $\cX$ is at most $R_\delta$, while all the contracts of class $\cY$ achieve regret at least $R_\delta+\Omega(\sqrt\delta)$ as $\delta \to 0$.}
		\label{fig:difContracts}
\end{figure}

For ease of presentation, in Figure~\ref{fig:difContracts}(\emph{Left}) we showcase the ``inclusion'' relationship between different classes of contracts.
In particular, the most general class is $\mcMr$, which is the one of menus of randomized contracts.
Moreover, both the class of menus of deterministic contracts $\mcM$ and the one of randomized contracts $\mcPr$ are special subclasses of $\mcMr$.
Finally, the class of deterministic contracts $\mcP$ exactly coincides with the ``intersection'' of $\mcPr$ and $\mcM$.

The results in this section revolve around the following two propositions. 
These are crucial in order to establish the worst-case ``regret gap'' between randomized contracts $\mcPr$ and menus of deterministic contracts $\mcM$, which are the only two classes that are \emph{not} comparable in terms of ``inclusion'' relationship (see Figure~\ref{fig:difContracts}(\emph{Right})).
The first proposition introduces a family of PAPU instances parametrized by $\delta \in [0,1]$ in which the uncertainty level is $\delta$ and the regret attainable by menus of deterministic contracts is $\Omega(\sqrt{\delta})$ better than the one of randomized contracts as $\delta \to 0$.
\begin{restatable}{proposition}{propositionCompareOne}\label{lem:MDbetterthenR}
	There exists a family of PAPU instances $\mathcal{I}_\delta := (A,\Omega,\rvec,\mcC,\Fvec)$ parametrized by $\delta \in [0,1]$ in which $d(\mcC) = \delta$, the regret $\inf_{\Pi \in \mcM} \mcR(\Pi)$ attained by menus of deterministic contracts is at most $R_\delta$, and randomized contracts attain regret $\inf_{\gvec \in \mcPr} \mcR(\gvec)$ at least $ R_\delta + \Omega(\sqrt{\delta})$ as $\delta \to 0$. 
	%
\end{restatable}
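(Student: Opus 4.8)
\textbf{Proof plan for Proposition~\ref{lem:MDbetterthenR}.}

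The plan is to construct an explicit family of PAPU instances parametrized by $\delta$ in which menus of deterministic contracts can exploit \emph{screening} — offering multiple contracts from which different agent types self-select — while a single randomized contract cannot replicate this advantage. The key conceptual point is that a menu of deterministic contracts lets the agent's choice depend on their (unknown-to-the-principal) cost vector $\vc$, effectively tailoring the payment scheme to each $\vc \in \mcC$, whereas a randomized contract commits to one distribution over payments that the agent best-responds to in expectation, losing this per-type adaptivity.

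First I would design an instance with a small action set (likely two or three actions) and an uncertainty set $\mcC$ containing two ``extreme'' cost vectors $\vc^{(1)}, \vc^{(2)}$ at $L_\infty$-distance exactly $\delta$, chosen so that the principal's optimal contract against $\vc^{(1)}$ differs sharply from the optimal one against $\vc^{(2)}$. I would then verify that a menu $\Pi = \{\pvec^1, \pvec^2\}$ of size two, with $\pvec^i$ tuned to be (approximately) optimal and IC-preferred against $\vc^{(i)}$, achieves small regret $R_\delta$: since the agent breaks ties in favor of the principal and selects the menu entry maximizing their own utility, each cost type picks the contract designed for it, so $\OPT(\vc) - \Up(\Pi,\vc)$ stays bounded by $R_\delta$ across all of $\mcC$ (using Theorem~\ref{thm:det_regret_upper} or a direct computation to control the intermediate cost vectors). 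The upper bound on the menu's regret should follow from a routine case analysis over $\vc \in \mcC$.

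The harder direction is the lower bound $\inf_{\gvec \in \mcPr} \mcR(\gvec) \ge R_\delta + \Omega(\sqrt{\delta})$. Here I would fix an arbitrary randomized contract $\gvec \in \mcPr$ and argue that no single distribution over payments can be simultaneously near-optimal against both $\vc^{(1)}$ and $\vc^{(2)}$. The mechanism behind this is that, for a fixed $\gvec$, the agent's best response $a^\star(\pvec,\vc)$ is governed by a utility comparison that shifts with $\vc$; to keep regret small against one extreme, $\gvec$ must concentrate payment in a way that forces a suboptimal best response — or a suboptimal \emph{expected} principal utility — against the other extreme. I expect the $\sqrt{\delta}$ scaling to emerge from the same square-root tradeoff underlying Lemma~\ref{lem:delta_opt}: incentivizing the high-reward action robustly against a cost perturbation of size $\delta$ costs the principal on the order of $\sqrt{\delta}$ in utility (as seen in the ``hard'' instance of Definition~\ref{def:hard}, where $\alpha = \sqrt{2\delta}$). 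Quantifying this for every $\gvec$ — rather than a single deterministic contract — is the main obstacle, since I must rule out that \emph{randomizing} over payments recovers the screening power of a menu.

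To surmount this, the plan is to reduce the randomized case to a statement about the agent's induced action distribution: I would use the definition $\Up(\gvec,\vc) = \sum_{\pvec \in \supp(\gvec)} \gamma_{\pvec}\, \Up(\pvec,\vc)$ together with the linearity of agent utilities $\Ua(\pvec,\vc,a)$ in $\vc$ to show that the per-type best response cannot vary ``as freely'' as a menu permits. Concretely, I would pick the instance geometry so that for the randomized contract to perform well against $\vc^{(1)}$, a large $\gamma$-mass must be placed on payments inducing one action, and symmetrically for $\vc^{(2)}$, and these two requirements are mutually exclusive up to an $\Omega(\sqrt\delta)$ loss. Finally I would evaluate $\mcR(\gvec) = \sup_{\vc \in \mcC}\{\OPT(\vc) - \Up(\gvec,\vc)\}$ at the two extreme cost vectors, combine the two lower bounds on the per-type losses (taking the worse one, or averaging), and conclude that any $\gvec$ incurs regret at least $R_\delta + \Omega(\sqrt\delta)$, completing the separation.
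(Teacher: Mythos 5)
Your proposal has the right high-level shape (an explicit family, a menu upper bound via screening, a universal lower bound over $\mcPr$), but it is missing the idea that actually drives the paper's separation, and the mechanism you state in its place would not survive scrutiny. The paper's instance has \emph{three} cost vectors, not two, and five actions: $\vc^1$ and $\vc^2$ each make one ``deterministic'' action cheap (requiring payment on outcome $\omega_1$, resp.\ $\omega_2$), while the third vector $\vc^3$ makes a zero-cost ``spread'' action $a_5$ optimal, which places probability $\approx \frac{1}{2}$ on \emph{each} of $\omega_1,\omega_2$. Crucially, the lower bound does \emph{not} show that serving $\vc^1$ and serving $\vc^2$ are mutually exclusive for a randomized contract --- they are not. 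The paper shows that regret below $\sqrt{\delta}/9$ forces mass greater than $\frac{17}{27}$ on contracts with $p_{\omega_1}\ge \frac{\sqrt{\delta}/10}{1/2+\sqrt{\delta}/4}$ and, symmetrically, mass greater than $\frac{17}{27}$ on contracts with $p_{\omega_2}$ above the same threshold; both requirements can hold simultaneously, since a single contract may pay high on both outcomes. The conflict is monetized only because $\frac{17}{27}+\frac{17}{27}=\frac{34}{27}>1$: the \emph{third} type plays $a_5$ and harvests the hedged payments on both outcomes, leaking expected payment roughly $\frac{34}{27}\cdot\frac{\sqrt{\delta}}{10}>\sqrt{\delta}/9$, whereas under the menu the $\vc^3$-agent self-selects a contract with a single positive payment, so the principal leaks only $R_\delta=\sqrt{\delta}/10$. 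Without such an ``indifferent harvesting type'' (or an equivalent device, e.g.\ an action whose agent utility aggregates payments across both rewarded outcomes, which is also what makes single deterministic contracts expensive here), your claimed exclusivity has no source.

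There is a second, independent gap in your upper bound. You assert that with two extreme types ``each cost type picks the contract designed for it,'' so screening is essentially free and a routine case analysis bounds the menu's regret by $R_\delta$. The paper's own two-type instance (Definition~\ref{def:hard}, used in Proposition~\ref{lem:RbetterthenMD}) shows this presumption fails in general: there, the screening IC forces every menu of deterministic contracts to overpay the low-cost type, and the separation goes in the \emph{opposite} direction, with randomized contracts beating menus by $\Omega(\sqrt{\delta})$. So any two-type design must simultaneously verify (i) that the screening ICs hold with only negligible overpayment (in the paper's construction this is arranged by paying $\frac{\sqrt{\delta}/10}{1/2-\sqrt{\delta}/4}$, slightly above the IC threshold, so each type strictly prefers its own contract) and (ii) that every deterministic contract loses $\Omega(\sqrt{\delta})$ against at least one cost vector in $\mcC$ --- neither of which your sketch constructs or checks. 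Your final aggregation step (lower-bounding the sup over $\vc$ by the $\gamma$-weighted average of per-contract losses at the chosen cost vectors) is sound and is essentially how the paper finishes, but it only bites once a correct instance with the above two properties is in hand.
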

The second proposition shows a ``reversed'' result, in which the roles of randomized contracts and menus of deterministic contracts are exchanged.
To do so, it reuses the family of PAPU instances introduced in Definition~\ref{def:hard} to show that deterministic contracts are worst-case optimal.
\begin{restatable}{proposition}{propositionCompareTwo}\label{lem:RbetterthenMD}
	There exists a family of PAPU instances $\mathcal{I}_\delta := (A,\Omega,\rvec,\mcC,\Fvec)$ parametrized by $\delta \in [0,\nicefrac{1}{2}]$ in which $d(\mcC) = \delta$, the regret $\inf_{\gvec \in \mcPr} \mcR(\gvec)$ attained by randomized contracts is at most $R_\delta$, and menus of deterministic contracts attain regret $\inf_{\Pi \in \mcM} \mcR(\Pi)$ at least $ R_\delta + \Omega(\sqrt{\delta})$ as $\delta \to 0$. 
	%
\end{restatable}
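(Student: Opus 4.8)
The plan is to reuse the ``hard'' family from Definition~\ref{def:hard}, whose uncertainty set contains only the two cost vectors $\cvec^0 := (0,0)$ and $\cvec^\delta := (\delta,0)$, so that $\mcR(\cdot) = \max\{\OPT(\cvec^0) - \Up(\cdot,\cvec^0),\, \OPT(\cvec^\delta) - \Up(\cdot,\cvec^\delta)\}$ is a maximum over two points (no $\sup$ subtleties arise). First I would record the two benchmark values. Against $\cvec^0$ the zero contract already induces $a_1$ (ties broken toward the principal), so $\OPT(\cvec^0) = 1$. Against $\cvec^\delta$, inducing $a_1$ requires paying $p_{\omega_1} \ge \delta/\alpha = \sqrt{\delta/2}$ (recall $\alpha = \sqrt{2\delta}$), yielding utility $1-\sqrt{\delta/2}$, which beats the best ``let the agent take $a_2$'' value $1-\alpha$; hence $\OPT(\cvec^\delta) = 1-\sqrt{\delta/2}$. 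The target gap value is $R_\delta := \sqrt{2\delta}/4 = \tfrac12\sqrt{\delta/2}$.

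For the randomized upper bound I would exhibit a single $\gvec \in \mcPr$ that mixes, each with probability $\tfrac12$, the zero contract $(0,0)$ and the contract $(\sqrt{\delta/2},0)$. A direct computation of the agent's best response under each realized contract gives $\Up(\gvec,\cvec^0) = 1 - \tfrac12\sqrt{\delta/2}$ and $\Up(\gvec,\cvec^\delta) = 1 - \tfrac32\sqrt{\delta/2}$, so the regret against both cost vectors equals $\tfrac12\sqrt{\delta/2} = \sqrt{2\delta}/4$. This establishes $\inf_{\gvec \in \mcPr}\mcR(\gvec) \le R_\delta$.

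The core of the argument---and the step I expect to be the main obstacle---is the matching lower bound $\inf_{\Pi \in \mcM}\mcR(\Pi) \ge \sqrt{\delta/2} = R_\delta + \sqrt{2\delta}/4$, i.e.\ that menus of deterministic contracts cannot screen the two types. Fix any $\Pi \in \mcM$ and let $\pvec^{(0)}, \pvec^{(\delta)} \in \Pi$ be the contracts selected by the two types. I would argue by contradiction, assuming $\mcR(\Pi) < \sqrt{\delta/2}$. From $\Up(\Pi,\cvec^0) > 1-\sqrt{\delta/2}$: since making the $\cvec^0$-agent play $a_2$ yields at most $1-\alpha < 1-\sqrt{\delta/2}$, the agent must play $a_1$ under $\pvec^{(0)}$, forcing its payment in $\omega_1$ to satisfy $p_{\omega_1}^{(0)} < \sqrt{\delta/2}$. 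From $\Up(\Pi,\cvec^\delta) > 1-\sqrt{2\delta} = 1-\alpha$: again $a_2$ yields at most $1-\alpha$, so the $\cvec^\delta$-agent must play $a_1$ under $\pvec^{(\delta)}$, and its incentive constraint $\Ua(\pvec^{(\delta)},\cvec^\delta,a_1) \ge \Ua(\pvec^{(\delta)},\cvec^\delta,a_2)$ forces $p_{\omega_1}^{(\delta)} - p_{\omega_2}^{(\delta)} \ge \delta/\alpha = \sqrt{\delta/2}$, hence $p_{\omega_1}^{(\delta)} \ge \sqrt{\delta/2}$.

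The contradiction then comes from the \emph{low-cost type's temptation to mimic}. Since the $\cvec^0$-agent selects $\pvec^{(0)}$, it weakly prefers it to $\pvec^{(\delta)}$: writing $V(\pvec,\cvec) := \max_{a \in A}\Ua(\pvec,\cvec,a)$, menu incentive compatibility gives $V(\pvec^{(0)},\cvec^0) \ge V(\pvec^{(\delta)},\cvec^0)$. But $V(\pvec^{(0)},\cvec^0) = p_{\omega_1}^{(0)}$ (the $\cvec^0$-agent plays $a_1$), while $V(\pvec^{(\delta)},\cvec^0) \ge \Ua(\pvec^{(\delta)},\cvec^0,a_1) = p_{\omega_1}^{(\delta)} \ge \sqrt{\delta/2}$, so $p_{\omega_1}^{(0)} \ge \sqrt{\delta/2}$, contradicting $p_{\omega_1}^{(0)} < \sqrt{\delta/2}$. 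Thus $\mcR(\Pi) \ge \sqrt{\delta/2}$ for every menu, and combining with the randomized bound yields the claimed $\Omega(\sqrt\delta)$ gap $\mcR(\Pi) \ge R_\delta + \sqrt{2\delta}/4$. The delicate points to get right are the tie-breaking conventions (ensuring $a_1$ is actually played when the agent is indifferent) and the reduction from an arbitrary menu to the single contract selected by each type; handling these carefully makes the single-crossing/mimicking step clean.
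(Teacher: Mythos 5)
Your proposal is correct and follows essentially the same route as the paper: the same ``hard'' family from Definition~\ref{def:hard}, the same uniform mixture of $(0,0)$ and $(\sqrt{\nicefrac{\delta}{2}},0)$ achieving $R_\delta = \sqrt{2\delta}/4$, and the same screening-impossibility argument for menus (forcing $p_{\omega_1}\ge\sqrt{\nicefrac{\delta}{2}}$ in the high-cost type's contract and then letting the zero-cost type's mimicking incentive yield the contradiction). The only differences are cosmetic: you work directly with the contracts selected from an arbitrary menu instead of first invoking the paper's revelation-principle reduction to two-element menus, and your contradiction threshold gives the slightly sharper lower bound $\sqrt{\nicefrac{\delta}{2}}$ where the paper settles for $\sqrt{\nicefrac{\delta}{4}}$ --- both of which suffice for the claimed $\Omega(\sqrt{\delta})$ gap.
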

Propositions~\ref{lem:MDbetterthenR}~and~\ref{lem:RbetterthenMD} essentially show that the classes $\mcPr$ and $\mcM$ are \emph{not} comparable in terms of worst-case ``regret gap'', as expected given their ``inclusion'' relationship.
Moreover, they also show that \emph{randomization and menus are incomparable in terms of regret-minimization power}, as either one of them may be needed depending on the specific instance at hand.

The other results depicted in Figure~\ref{fig:difContracts}(\emph{Right}) can be derived by combining Propositions~\ref{lem:MDbetterthenR}~and~\ref{lem:RbetterthenMD} with the ``inclusion'' relationships in Figure~\ref{fig:difContracts}(\emph{Left}).
In particular, the arrow from $\mcM$ to $\mcP$ is obtained by Proposition~\ref{lem:MDbetterthenR} since deterministic contracts are a special case of randomized contracts, namely $\mcP \subset \mcPr$.
Similarly, Proposition~\ref{lem:RbetterthenMD} gives the arrow from $\mcPr$ to $\mcP$, since $\mcP \subset \mcM$.
Moreover, given that menus of deterministic contracts are a subclass of  menus of randomized contracts, namely $\mcM \subset \mcMr$, Proposition~\ref{lem:MDbetterthenR} immediately gives the arrow from $\mcMr$ to $\mcPr$.
An analogous reasoning gives the arrow from $\mcMr$ to $\mcM$, by Proposition~\ref{lem:RbetterthenMD} and $\mcPr \subset \mcMr$.
Finally, the arrow from $\mcMr$ to $\mcP$ can be obtained by Proposition~\ref{lem:RbetterthenMD}, together with the fact that $\mcPr \subset \mcMr$ and $\mcP \subset \mcM$.


Finally, notice that all the results in \Cref{fig:difContracts}(\emph{Right}) are tight, as \Cref{thm:regret_lower} shows that, in the worst case, the regret of menus of randomized contracts is at least $\Omega(\sqrt{\delta})$ as $\delta \to 0$. 
%

\section{Beyond the Worst Case: A Template to Minimize the Regret} \label{sec:beyond}

In Sections~~\ref{sec:worstCase}~and~\ref{sec:robustness}, we deeply investigated the regret achievable by different classes of contracts, by taking a worst-case analysis perspective.
From now on, we change point of view and we study the computational problem of finding a regret-minimizing contract given a PAPU instance.

First, we prove the following negative result, showing that \emph{a regret-minimizing contract cannot be approximated within any constant factor}, even in the case of \emph{deterministic contracts}.
%
%
\begin{restatable}{theorem}{theoremBeyondOne}
	The problem of computing a regret-minimizing deterministic contract is \textnormal{\textsf{APX}}-hard.
	%
\end{restatable}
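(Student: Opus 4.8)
The plan is to establish \textsf{APX}-hardness via a gap-preserving reduction from a known \textsf{APX}-hard problem, with the natural candidate being \textsf{MAX-3SAT} or a constraint-satisfaction variant, though a reduction from a covering/packing problem such as \textsf{VERTEX-COVER} on bounded-degree graphs may be cleaner for encoding the combinatorial structure of best responses. The core difficulty in any such reduction is that the objective here is a \emph{regret} (a $\sup$ over an uncertainty set of the difference $\OPT(\vc)-\Up(\vp,\vc)$), so I would first reduce the regret-minimization problem to a more familiar-looking optimization problem. Concretely, I would exploit the fact that $\OPT(\vc)$ is a fixed ``benchmark'' value for each cost vector $\vc\in\mcC$: since the contract designer's choice $\vp$ affects only the $\Up(\vp,\vc)$ term, minimizing $\sup_{\vc}\{\OPT(\vc)-\Up(\vp,\vc)\}$ is equivalent to making $\Up(\vp,\vc)$ uniformly close to the (instance-determined) optimal benchmark across all $\vc$. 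The reduction should therefore build an instance in which the benchmark $\OPT(\vc)$ is easy to control and the gap in achievable $\Up$ encodes the hardness.

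The key steps I would carry out, in order, are as follows. First, pick a source problem with a known inapproximability gap and a finite set of ``hard'' cost vectors: I would take $\mcC$ to be a finite set $\{\vc_1,\dots,\vc_N\}$ (one cost vector per clause or per constraint), so that the $\sup$ in the regret becomes a finite $\max$ and the status of best responses becomes purely combinatorial. Second, design actions and outcomes so that a candidate contract $\vp$ naturally corresponds to a candidate solution of the source problem (e.g., payments that are thresholded to encode a Boolean assignment or a vertex selection), and calibrate the probability matrix $\Fvec$ and rewards $\rvec$ so that, for each cost vector $\vc_j$, the agent's best response $a^\star(\vp,\vc_j)$ ``checks'' whether the encoded solution satisfies constraint $j$. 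Third, verify the two directions of the gap: a good solution of the source problem yields a contract with small regret (every constraint is satisfied, so $\Up(\vp,\vc_j)$ is near $\OPT(\vc_j)$ for all $j$), whereas any solution violating an $\epsilon$-fraction of constraints forces $\max_j\{\OPT(\vc_j)-\Up(\vp,\vc_j)\}$ to be bounded below by a constant, yielding a multiplicative gap in the regret objective that no constant-factor approximation can cross.

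The main obstacle I anticipate is controlling the tie-breaking and the interaction between the benchmark $\OPT(\vc_j)$ and the designer's contract \emph{simultaneously} across all $\vc_j$: because a single contract $\vp$ must be robust against \emph{every} $\vc_j$, one must ensure that satisfying the constraints is not ``for free'' and that the gadget for constraint $j$ cannot be trivially satisfied by a degenerate contract (e.g., paying nothing, or paying the full reward) that incidentally does well against all other $\vc_{j'}$. In particular, I would need the incentive structure to be such that the payments forced by different cost vectors genuinely conflict, so that the optimal regret is governed by how many constraints can be jointly satisfied. A related subtlety is that the agent breaks ties in favor of the principal, which can collapse intended gaps; I would design the $\Fvec$ entries with strict separations (bounded away from ties by a constant) so that best responses are robust and the reduction gap survives. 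Finally, I would confirm that all constructed quantities (rewards, costs, probabilities) lie in $[0,1]$ as required by the PAPU model, so that no rescaling destroys the gap.

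\textbf{Remark.} It is worth noting that since the reduction only needs a finite uncertainty set, the hardness is \emph{not} an artifact of the infinite/continuous nature of $\mcC$; it already arises for $\mcC$ of polynomial size, which is what makes the later algorithmic results (relying on $\epsilon$-covers and structural assumptions) necessary rather than merely convenient.
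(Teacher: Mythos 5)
Your skeleton is closer to the paper's actual proof than you might expect: the paper also uses a \emph{finite} uncertainty set with one cost vector per constraint-like object, contracts whose thresholded payments encode a combinatorial selection, and an all-or-nothing gap (regret exactly $0$ versus a constant $\eta=\nicefrac{1}{200}$). But the source problem is a restricted version of exact cover by 3-sets (\textsf{X3C}, each element appearing in exactly three subsets), which is merely \textsf{NP}-hard as a decision problem, and that choice is not incidental: it sidesteps two issues your plan leaves open. The first is the gadget that forces payments to be effectively binary, which is the heart of the proof and is absent from your proposal. The paper attaches to each subset $s$ a \emph{pair} of actions whose outcome distributions are scaled copies of one another (probabilities $\epsilon$ and $\nicefrac{\epsilon}{2}$ on $\omega_s$ and $\omega_1$) with calibrated costs, so that the IC comparison within the pair, combined with the regret bound against the cost vector $\vc^s$, forces either $p_{\omega_s}\le \nicefrac{2\eta}{\epsilon}$ or $p_{\omega_s}\ge \tfrac{1}{2}-\eta$; a second pair of actions per element $e$ forces $\sum_{s \ni e} p_{\omega_s}$ into roughly $[\tfrac{1}{2}-\eta,\ \tfrac{1}{2}+4\eta]$, and only then does the degree-3 restriction imply that \emph{exactly one} subset containing $e$ carries a high payment, \emph{i.e.}, the high-payment subsets form an exact cover. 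An auxiliary cost vector $\bar\vc$ with a zero-cost action deterministically producing the reward-$1$ outcome is also needed to pin $p_{\omega_1}\le\eta$; without it, the degenerate-contract failure mode you worry about is real. Encoding \textsf{MAX-3SAT} would additionally require handling negated literals, which is awkward because payments are one-sided (nonnegative, and higher payments are monotonically attractive to the agent); a covering problem is the right fit, as your aside about vertex cover half-anticipates.

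The second, conceptual, gap is that your framing as a gap-preserving reduction from an \textsf{APX}-hard optimization problem does not mesh with the regret objective: regret is a $\sup$ over $\mcC$, not a sum, so it cannot count violated constraints --- a contract violating one clause-gadget and one violating many incur the same regret, and an L-reduction preserving the fraction of violations cannot go through. What does work, and what the paper does, is exactly the all-or-nothing structure you describe in passing: completeness gives regret exactly $0$ (this requires computing $\OPT(\vc)$ \emph{exactly} for every $\vc$ in the construction and matching it with the cover-induced contract, a calibration your plan defers), and soundness shows that any contract with regret below the fixed constant $\eta$ yields an exact cover. A $0$-versus-constant gap rules out approximation within \emph{any} multiplicative factor, which is how \textsf{APX}-hardness follows from mere \textsf{NP}-hardness of the decision source; restate your reduction accordingly. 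Your closing remark is correct and matches the paper: hardness already holds for an explicit finite $\mcC$ of polynomial size ($|E|+|S|+1$ cost vectors), which is precisely why the later positive results for deterministic contracts require $\widetilde{\cC}$ of constant size.
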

In the remaining of the paper, we study in which special cases such a negative result can be circumvented.
In the rest of this section, we introduce a general template to compute approximately-regret-minimizing contracts.
Then, in Section~\ref{sec:apply_template}, we show how the template can be efficiently instantiated for particular instances and for specific classes of contracts.
This allows us to derive polynomial-time approximation algorithms to find regret-minimizing contracts in several settings. 

%

\subsection{A General Template to Compute Approximately-Regret-Minimizing Contracts}\label{sec:template}

In the following, we introduce our general template to compute approximately-regret-minimizing contracts.
The pseudocode of the template is provided in Algorithm~\ref{alg:template}.
At a high level, our template works by computing a regret-minimizing contract which only accounts for cost vectors in a finite subset of the uncertainty set $\mcC$.
Such a set is a suitable $\epsilon$-cover of $\mcC$, made by cost vectors which are selected in such a way that every element of $\mcC$ has at least one element of the $\epsilon$-cover within $L_\infty$-distance $\epsilon$ from it (see Definition~\ref{def:eps_cover}).
Then, a suitable ``linearization'' procedure is applied to the computed contract, so as to obtain the desired output.
Such a procedure follows the same idea as the operation employed in Section~\ref{sec:worstCase} to show that deterministic contracts achieve worst-case-optimal regret, but it more generally applies to any possible class of contracts.
%

Intuitively, working with a finite $\epsilon$-cover of the uncertainty set $\cC$ allows to find an approximately-regret-minimizing contract by solving an optimization problem defined over a finite number of possible cost vectors, instead of dealing with general uncertainty sets with a possibly infinite number of elements.
As we show in the following Section~\ref{sec:apply_template}, in the cases in which there exists a finite $\epsilon$-cover with a sufficiently ``small'' number of elements, the optimization problem can be solved efficiently.
Moreover, the final ``linearization'' procedure performed by the algorithm allows to ensure that the computed contract performs well \emph{not} only against agents with cost vectors in the $\epsilon$-cover of $\mcC$, but also against other agents whose cost vectors are \emph{not} in the $\epsilon$-cover.

\begin{algorithm}[!htp]\caption{\textsc{Compute Approximately-Regret-Minimizing Contracts}}\label{alg:template}
	\begin{algorithmic}[1]
	\REQUIRE{PAPU instance $(\Omega,A,\rvec,\mcC, \Fvec)$, Class of contracts $\cX$, Parameter $\epsilon \in [0, 1 ]$ of the $\epsilon$-cover of $\mcC$}
	\STATE $\cC_\epsilon \gets$ \texttt{Build-Cover}$(\cC,\epsilon)$\;
	\STATE ${\Gamma} \gets$ \texttt{Compute-RM-Contract}$(\cX,\cC_\epsilon)$\;
	\STATE $\widetilde{\Gamma} \gets$ \texttt{Linearize}$ ( {\Gamma},\sqrt{2\epsilon} )$\;
	\RETURN{contract $\widetilde{\Gamma} \in \cX$ of class $\cX$ such that $\mcR(\widetilde{\Gamma}) \le \inf_{\Gamma' \in \cX}  \sup_{\vc \in \cC} \Big\{ \OPT(\vc)-\Up(\Gamma',\vc) \Big\} +2\sqrt{2\epsilon}$}
		\end{algorithmic}
\end{algorithm}

Algorithm~\ref{alg:template} takes as input a PAPU instance $(\Omega,A,\rvec,\mcC, \Fvec)$, a class of contracts $\cX$ (\emph{i.e.}, one among the classes $\mcP$, $\mcPr$, $\mcM$, and $\mcMr$), and a parameter $\epsilon \in [0, 1 ]$ characterizing the $\epsilon$-cover of $\mcC$.
%
For ease of presentation, in Algorithm~\ref{alg:template} and in the rest of this section, we use $\Gamma$ to denote a contract belonging to a generic class of contracts $\cX$, by using the same notation adopted for menus of randomized contracts.
Notice that contracts in the classes $\mcP$, $\mcPr$, and  $\mcM$ can always be seen as special menus of randomized contracts, since the class $\mcMr$ ``includes'' all the other classes.
Algorithm~\ref{alg:template} sequentially performs the following three steps:
\begin{itemize}
	\item[(i)] It builds a finite $\epsilon$-cover $\mcC_\epsilon$ of the uncertainty set $\mcC$ (see Definition~\ref{def:eps_cover}) by calling a procedure \texttt{Build-Cover}$(\cC,\epsilon)$, which takes as input $\mcC$ and the parameter $\epsilon$ characterizing the $\epsilon$-cover.
	\item[(ii)] It computes a contract $\Gamma \in \cX$ of class $\cX$ which minimizes the regret over all contracts in class $\cX$ and against all agents with cost vectors in the $\epsilon$-cover $\mcC_\epsilon$ built at step~(i). This is done by a procedure \texttt{Compute-RM-Contract}$(\cX,\widetilde{\cC})$, formally introduced in Definition~\ref{def:apxAlgo}, for $\widetilde{\mcC} = \mcC_\epsilon$.
	\item[(iii)] It applies a ``linearization'' procedure to the contract $\Gamma$ computed at step~(ii), by calling a procedure \texttt{Linearize}$ ( {\Gamma},\beta )$ for $\beta = \sqrt{2\epsilon} $. Such a procedure adds a fraction $\beta$ of principal's utilities $\rvec - \vp$ to every deterministic contract $\vp \in \mcP$ involved in $\Gamma$. Formally, given any menu of randomized contracts $\Gamma = \{ \gvec^1, \ldots \gvec^K \} \in \mcMr$ and $\beta \in [0,1]$, the procedure \texttt{Linearize}$ ( {\Gamma},\beta )$ modifies $\Gamma$ by replacing $\vp$ with $\vp + \beta (\rvec - \vp)$ for every $i \in [K]$ and $\vp \in \supp (\gvec^i)$.
\end{itemize}
Notice that, in the rest of this section, we assume that the procedures \texttt{Build-Cover}$(\cC,\epsilon)$ and \texttt{Compute-RM-Contract}$(\cX,\cC_\epsilon)$ are general.
Their specific implementations are provided in the following Section~\ref{sec:apply_template}, where we instantiate our template to specific settings.

\subsection{Computational and Regret Guarantees Provided by the Template}

Before deriving the technical results related to our template, we introduce some needed formal definitions. 
We start with the definition of $\epsilon$-cover of an uncertainty set $\mcC$.
%
%
\begin{definition}[$\epsilon$-Cover]\label{def:eps_cover}
	Given an uncertainty set $\mcC \in [0,1]^n$ and any $\epsilon \in [0,1]$, an \emph{$\epsilon$-cover} of $\mcC$ is a set $\mcC_\epsilon \subseteq \mcC$ such that, for every cost vector $\vc \in \mcC$, there exists a $\tilde \vc \in \mcC_\epsilon$ with
		$ \|\vc- \tilde \vc  \|_{\infty}\le \epsilon$. 
	%
\end{definition}
It is easy to see that, given any $\mcC \in [0,1]^n$ and $\epsilon \in (0,1]$, it is always possible to build an $\epsilon$-cover of $\mcC$ that contains a finite number of elements.
However, the size of such a cover depends on the specific structure of the uncertainty set.
As we show in the following Section~\ref{sec:apply_template}, in some cases it is possible to efficiently build a suitable finite $\epsilon$-cover having ``small'' size. This is exploited to design efficient approximation algorithms by instantiating our general template.
%

Next, we formally define the procedure $\texttt{Compute-RM-Contract}(\cX,\widetilde{\cC})$.
Intuitively, it returns a contract $\Gamma \in \cX$ of class $\cX$ which attains the minimum possible regret achievable by contracts in class $\cX$, when only considering the cost vectors in a finite set~$\widetilde{\cC}$.
In the rest of this section, we always assume that a PAPU instance $(\Omega, A, \rvec, \mcC, \Fvec)$ is given.
%
%
\begin{definition}[\texttt{Compute-RM-Contract}]\label{def:apxAlgo}
	Given a class of contracts $\cX$ and a finite subset $\widetilde{\cC} \subseteq \mcC$ of cost vectors, the $\texttt{Compute-RM-Contract}(\cX,\widetilde{\cC})$ procedure returns a contract $ \Gamma \in \cX$ such that:
	\[\sup_{\vc \in \widetilde{\cC}} \Big\{ \OPT(\vc)-\Up(\Gamma,\vc) \Big\} - \inf_{\Gamma' \in \cX}  \sup_{\vc \in \widetilde{\cC}} \Big\{ \OPT(\vc)-\Up(\Gamma',\vc) \Big\} \le 0.\]
	%
\end{definition}
%
%
%

%
Finally, in order to derive our results, we need to introduce a particular notion of \emph{approximate-incentive compatibility}, which generally applies to any class of contracts, including the most general class of randomized contracts.\footnote{Notice that the notion of $\eta$-incentive compatibility introduced in Definition~\ref{def:apxIC} should \emph{not} be confused with the notion of IC action introduced in Section~\ref{sec:preliminaries}, as the former is related to menus of contracts rather than actions.}
Formally, we provide the following definition:
\begin{definition}[$\eta$-incentive compatibility]\label{def:apxIC}
	Given any $\eta \in [0,1]$, a menu of randomized contracts $\Gamma = \{ \gvec^1, \ldots, \gvec^K \} \in \mcMr$, and two functions $t:\cC\rightarrow [K]$ and $g:\cC \times \mathcal{P}\rightarrow A$, we say that the triplet $(\Gamma, t, g)$ is \emph{$\eta$-incentive compatible} if the following holds for every $\vc \in \cC$ and $i \in [K]$:
	%
	%
	\[ \sum_{\vp \in \supp(\gvec^{t(\vc)})} \gamma^{t(\vc)}(\vp) \left[\sum_{\omega \in \Omega} F_{g(\vc,\vp),\omega} p_\omega - c_{g(\vc,\vp)} \right]\ge \Ua (\gvec^i, \vc) - \eta.
	\]
\end{definition}
Intuitively, the function $t:\cC\rightarrow [K]$ in Definition~\ref{def:apxIC} encodes how agents with different cost vectors select randomized contracts from the proposed menu, while the function $g:\cC \times \mathcal{P}\rightarrow A$ specifies how they select which action to play given the realized deterministic contracts.
Thus, the relation in Definition~\ref{def:apxIC} can be interpreted as the fact that, if an agent with cost vector $\vc \in \mcC$ selects contracts and actions as specified by the functions $t$ and $g$, respectively, then the agent loses at most $\eta$ with respect to selecting an expected-utility-maximizing contract $\gvec^\star(\Gamma,\vc)$ and playing a best response $a^\star(\vp,\vc)$ to every possible deterministic contract $\vp \in \mcP$.
Notice that the functions $t$ and $g$ are needed in order to specify how an agent behaves when we relax the assumption that they select contracts and actions so as to perfectly maximize expected utility, allowing for an additive loss $\eta$.
Let us also observe that Definition~\ref{def:apxIC} can be also applied when $\Gamma$ is an element of a class of contracts that do \emph{not} involve menus, such as $\mcP$ and $\mcPr$.
In such cases, the function $t:\cC\rightarrow [K]$ is always such that $t(\vc) = 1$ for every $\vc \in\mcC$, since it holds $K=1$.

Now, we are ready to derive our results related to our template.
First, we show that, given an $\epsilon$-cover of $\mcC$ and a contract $\Gamma \in \cX$ of class $\cX$, there always exist two functions $t$ and $g$ as in Definition~\ref{def:apxIC} such that: (1) the triplet $(\Gamma,t,g)$ is $(2 \epsilon)$-incentive compatible, and (2) the regret attained by $\Gamma$ against agents with cost vectors in $\mcC$ behaving as prescribed by $t$ and $g$ is the same as the regret that $\Gamma$ achieves against expected-utility-maximizer agents with cost vectors in $\mcC_\epsilon$.
%
%
%
\begin{restatable}{lemma}{lemmaTemplateOne}\label{lm:existsEpsIC}
	Given any $\epsilon \in [0,1]$, an $\epsilon$-cover $\mcC_\epsilon$ of the uncertainty set $\mcC$, and a contract $\Gamma \in \cX$ of class $\cX$, there exist two functions $t$ and $g$ as in Definition~\ref{def:apxIC} such that $(\Gamma, t, g)$ is $(2\epsilon)$-incentive compatible and the following holds:
	\[ 
		\sup_{\vc \in \cC} \left\{  \OPT(\vc)- \sum_{\vp \in \supp(\gvec^{t(\vc)})} \gamma^{t(\vc)}_{\vp} \left[ \sum_{\omega \in \Omega} F_{g(\vc,\vp),\omega} \, r_\omega  - \sum_{\omega \in \Omega} F_{g(\vc,\vp),\omega} \, p_\omega \right] \right\} =  \sup_{\vc \in \cC_\epsilon} \Big\{ \OPT(\vc)-\Up(\Gamma,\vc) \Big\} . 
	\]
	%
	%
\end{restatable}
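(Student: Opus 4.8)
The plan is to build the functions $t$ and $g$ explicitly, by forcing an agent with cost vector $\vc$ to \emph{imitate} the behavior of an expected-utility-maximizing agent located at a nearby point of the cover. For each $\vc\in\mcC$ I would fix a cover point $\phi(\vc)\in\mcC_\epsilon$ with $\|\vc-\phi(\vc)\|_\infty\le\epsilon$ (such a point exists by Definition~\ref{def:eps_cover}), taking $\phi(\vc)=\vc$ whenever $\vc\in\mcC_\epsilon$. I would then let $t(\vc)$ be the index of the randomized contract $\gvec^\star(\Gamma,\phi(\vc))$ selected by the $\phi(\vc)$-agent, and set $g(\vc,\vp):=a^\star(\vp,\phi(\vc))$, the best response played by the $\phi(\vc)$-agent against each realized deterministic contract $\vp$.

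I would first verify $(2\epsilon)$-incentive compatibility. The elementary fact driving everything is that costs enter agent utilities additively, so $\|\vc-\phi(\vc)\|_\infty\le\epsilon$ gives $|\Ua(\vp,\vc,a)-\Ua(\vp,\phi(\vc),a)|=|c_a-\phi(\vc)_a|\le\epsilon$ for every $\vp$ and $a$. This yields two $\epsilon$-sized slacks. First, the utility the $\vc$-agent obtains by behaving as $(t,g)$ prescribes differs by at most $\epsilon$ from the true utility $\Ua(\gvec^{t(\vc)},\phi(\vc))$ of the $\phi(\vc)$-agent. Second, since each map $\vc\mapsto\Ua(\gvec^i,\vc)$ is a convex combination of maxima of functions that are $1$-Lipschitz w.r.t.\ $\|\cdot\|_\infty$, it is itself $1$-Lipschitz, so $\Ua(\gvec^i,\phi(\vc))\ge\Ua(\gvec^i,\vc)-\epsilon$. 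Chaining these with the optimality of $\gvec^\star(\Gamma,\phi(\vc))$ for the $\phi(\vc)$-agent produces exactly the inequality of Definition~\ref{def:apxIC} with $\eta=2\epsilon$.

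For the regret identity I would exploit that the principal's payoff depends on the agent only through the selected contract and action, never through the costs. Hence, under the imitation behavior, the bracketed term on the left-hand side equals $\Up(\gvec^\star(\Gamma,\phi(\vc)),\phi(\vc))=\Up(\Gamma,\phi(\vc))$, so the left-hand supremum collapses to $\sup_{\vc\in\mcC}\{\OPT(\vc)-\Up(\Gamma,\phi(\vc))\}$. The inequality ``$\ge$'' is then immediate: restricting this supremum to $\vc\in\mcC_\epsilon$, where $\phi$ is the identity, recovers exactly $\sup_{\vc\in\mcC_\epsilon}\{\OPT(\vc)-\Up(\Gamma,\vc)\}$.

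The main obstacle is the reverse inequality, and it is where I would spend most of the effort. Grouping the supremum over $\mcC$ by the cover point each $\vc$ is assigned to, I must show that no $\vc$ contributes more than the regret already charged to its cover point; since the subtracted term is exactly $\Up(\Gamma,\phi(\vc))$, this reduces to comparing $\OPT(\vc)$ with $\OPT(\phi(\vc))$. This is delicate because the benchmark $\OPT$ is outside the control of our construction and is not monotone in the costs, so a priori $\OPT(\vc)$ may exceed $\OPT(\phi(\vc))$. The plan is to exploit the flexibility in the construction---imitating \emph{any} within-$\epsilon$ cover point is still $(2\epsilon)$-IC---together with a variation bound on $\OPT$ over an $\epsilon$-ball of the type established in Lemmas~\ref{lem:delta_opt_raw}~and~\ref{lem:holder_opt}, so as to assign each $\vc$ to a cover point against which $\Gamma$ performs well enough that its contribution is absorbed into $\sup_{\vc\in\mcC_\epsilon}\{\OPT(\vc)-\Up(\Gamma,\vc)\}$. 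Reconciling the possibly larger value $\OPT(\vc)$ with the payoff achievable under a $(2\epsilon)$-IC imitation is the crux, and I expect the careful choice of $\phi$ and the accompanying bookkeeping to be the hardest part of the argument.
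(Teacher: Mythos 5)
Your construction is precisely the paper's: the paper defines a map $\zeta:\mcC\to\mcC_\epsilon$ with $\|\vc-\zeta(\vc)\|_\infty\le\epsilon$, sets $t(\vc)$ to be the index of $\gvec^\star(\Gamma,\zeta(\vc))$ and $g(\vc,\vp)=a^\star(\vp,\zeta(\vc))$, and establishes $(2\epsilon)$-incentive compatibility by exactly the two-$\epsilon$-slack chain you describe (one $\epsilon$ from replacing $\vc$ by $\zeta(\vc)$ in the cost of the prescribed action, one from the $1$-Lipschitzness of agent utilities in costs together with the optimality of the $\zeta(\vc)$-agent's choices). Your ``$\ge$'' half of the identity is also the paper's, and you are in fact more careful than the paper in stipulating $\phi(\vc)=\vc$ on $\mcC_\epsilon$, without which even that direction is not automatic.

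The gap is that you never prove the ``$\le$'' half, and the repair you sketch---a cleverer assignment $\phi$ plus H\"older bookkeeping---cannot succeed, because the equality claimed in the lemma is false as stated. Take the instance of \cref{def:nonexist_inst} with $\mcC=\{\vc^0,\vc^1\}$, where $\vc^0=(0,\epsilon)$ and $\vc^1=(0,0)$, and the legitimate $\epsilon$-cover $\mcC_\epsilon=\{\vc^0\}$, which forces $\phi(\vc^1)=\vc^0$. Let $\Gamma$ be the single deterministic contract with $p_{\omega_2}=\epsilon$, $p_{\omega_1}=0$. Then $\Up(\Gamma,\vc^0)=1-\epsilon=\OPT(\vc^0)$ (ties favor the principal), so the right-hand side is $0$; but $\OPT(\vc^1)=1$, while the principal's payoff at $\vc^1$ is at most $1-\epsilon$ under \emph{any} action the function $g$ could prescribe (it is $1-\epsilon$ for $a_2$ and $0$ for $a_1$), so the left-hand side is at least $\epsilon>0$ for \emph{every} pair $(t,g)$, incentive compatible or not. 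Your worry that $\OPT(\vc)$ may exceed $\OPT(\phi(\vc))$ is exactly the obstruction, and no choice of $\phi$ removes it since the cover point is unique. To be fair to you, the paper's own proof does not close this step either: it merely asserts the identity is ``easy to see by definition of $t$ and $g$,'' an assertion that is valid only for the variant in which $\OPT(\vc)$ is replaced by $\OPT(\zeta(\vc))$ in the left-hand supremum (then the supremum over $\mcC$ collapses onto the cover exactly). What is true, provable with the tools you already invoke, and sufficient for \cref{thm:regret_template} at the cost of an extra additive $2\sqrt{2\epsilon}$, is the two-sided bound: the left-hand side is at least the right-hand side, and at most the right-hand side plus $2\sqrt{2\epsilon}$, the surplus coming from the local H\"older bound $|\OPT(\vc)-\OPT(\phi(\vc))|\le 2\sqrt{2\epsilon}$ obtained from \cref{lem:delta_opt_raw} with $\alpha=\sqrt{2\epsilon}$ as in \cref{lem:holder_opt}. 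So your construction, your incentive-compatibility argument, and your ``$\ge$'' direction all match the paper, but the exact equality you (rightly) hesitated over is not provable; the correct move is to weaken the statement, not to search for a better $\phi$.
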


Then, we show that, given a contract $\Gamma \in \cX$ of class $\cX$ such that $(\Gamma,t,g)$ is $\eta$-incentive compatible for two functions $t$ and $g$ as in Definition~\ref{def:apxIC}, the contract $\widetilde{\Gamma} \in \cX$ obtained from $\texttt{Linearize}(\Gamma,\beta)$ provides the principal with an expected utility that is at most $\beta + \eta$ less than what they obtain under $\Gamma$ when the agents behave as prescribed by $t$ and $g$.
Intuitively, such a result shows that the ``linearization'' operation allows to align agent's utility with principal's rewards, so as to bound the expected utility loss of the principal.
This is made formal by the following lemma, which generalizes a similar result by~\citet{castiglioni2023designing}, by dealing with menus of \emph{randomized} contracts.
%
%
%
%
\begin{restatable}{lemma}{lemmaTemplateTwo}\label{lm:linearized}
	Given any $\eta \in [0,1]$, let $\Gamma \in \cX$ be a a contract of class $\cX$ such that $(\Gamma,t,g)$ is $\eta$-incentive compatible for two functions $t$ and $g$ as in Definition~\ref{def:apxIC}.
	Then, given any $\beta \in [0,1]$, the procedure $\texttt{Linearize}(\Gamma,\beta)$ produces a contract $\widetilde{\Gamma} \in \cX$ of class $\cX$ such that, for every $\vc \in \cC$, the following holds:
	\[ 
		\Up(\widetilde \Gamma,\vc)\ge \sum_{\vp \in \supp(\gvec^{t(\vc)})} \gamma^{t(\vc)}_{\vp}   \left[\sum_{\omega \in \Omega} F_{g(\vc,p),\omega} \, r_\omega - \sum_{\omega \in \Omega} F_{g(\vc,p),\omega} \,p_\omega \right]- \beta -\eta.
	\]
	%
\end{restatable}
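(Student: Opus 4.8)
The plan is to track how \texttt{Linearize} redistributes value between the principal and the agent, and then to argue that the slope $\beta$ and the incentive slack $\eta$ each cost the principal only an additive amount. First I would record the two elementary identities produced by replacing every $\vp$ with $\widetilde\vp:=\vp+\beta(\rvec-\vp)$ inside each $\gvec^i$. Writing $P(\vp,a):=\sum_{\omega\in\Omega}F_{a,\omega}(r_\omega-p_\omega)$ for the principal's reward-minus-payment when action $a$ is played under $\vp$, a direct computation gives, for every action $a$,
\[
\sum_{\omega\in\Omega}F_{a,\omega}(r_\omega-\widetilde p_\omega)=(1-\beta)\,P(\vp,a)
\quad\text{and}\quad
\Ua(\widetilde\vp,\vc,a)=\Ua(\vp,\vc,a)+\beta\,P(\vp,a).
\]
Thus linearization scales the principal's per-action utility by $(1-\beta)$ while augmenting the agent's utility by exactly a $\beta$-fraction of the principal's utility; this \emph{alignment} is what the whole argument exploits. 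Write $V(\vc):=\sum_{\vp\in\supp(\gvec^{t(\vc)})}\gamma^{t(\vc)}_{\vp}\,P(\vp,g(\vc,\vp))$ for the right-hand side of the claim, and note $V(\vc)\le 1$ since payments are nonnegative.

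Next I would express $\Up(\widetilde\Gamma,\vc)$ through the agent's actual behaviour under $\widetilde\Gamma$. Let $\gvec^{j^\star}=\gvec^\star(\widetilde\Gamma,\vc)$ be the entry selected by the (tie-broken) best-responding agent and $a^\star(\widetilde\vp,\vc)$ the best response to each realized linearized contract, so the first identity yields $\Up(\widetilde\Gamma,\vc)=(1-\beta)\sum_{\vp\in\supp(\gvec^{j^\star})}\gamma^{j^\star}_{\vp}\,P(\vp,a^\star(\widetilde\vp,\vc))$. I would bound the sum in two stages. Within each realized contract, the second identity shows that $a^\star(\widetilde\vp,\vc)$ maximizes $\Ua(\vp,\vc,a)+\beta P(\vp,a)$; comparing this maximizer against the prescribed action $g(\vc,\vp)$ transfers the alignment term $\beta P(\vp,g(\vc,\vp))$ into a lower bound on $P(\vp,a^\star(\widetilde\vp,\vc))$. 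Across entries, the key observation is that the \emph{canonical linearized strategy} — select $\gvec^{t(\vc)}$, then play $g(\vc,\cdot)$ on each realization — is always available to the agent, so the agent's actual utility under $\widetilde\Gamma$ dominates it; feeding in the $\eta$-incentive compatibility of $(\Gamma,t,g)$ from Definition~\ref{def:apxIC} then controls how much the agent's choice of $j^\star$ over $t(\vc)$ can hurt the principal. Combining the two stages and using $V(\vc)\le 1$ to absorb the factor $(1-\beta)$ into an additive $\beta$ should give $\Up(\widetilde\Gamma,\vc)\ge V(\vc)-\beta-\eta$.

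The hard part will be the quantitative bookkeeping in the second stage, because the agent deviates on two fronts at once under $\widetilde\Gamma$ — a different menu entry and different realized actions — and a naive comparison through the agent's near-indifference converts an agent-side slack of $\eta$ into a principal-side slack of order $\eta/\beta$, which is too lossy. The entire purpose of the linearization is that the injected term $\beta P(\vp,a)$ prevents the agent from selecting entries or actions that are much worse for the principal, so the delicate point is to combine the within-realization optimality, the availability of the canonical strategy, and the $\eta$-IC inequality into a \emph{single} chain that charges the slack additively rather than multiplicatively by $1/\beta$. I expect the cleanest organization is through the surplus identity $\Up(\widetilde\Gamma,\vc)=\big(\text{expected surplus of the realized actions}\big)-\Ua(\widetilde\Gamma,\vc)$: one upper-bounds $\Ua(\widetilde\Gamma,\vc)$ by the canonical agent utility plus $\beta+\eta$ (linearization raises the agent's value by at most $\beta$ since $P\le 1$, and $\eta$-IC bounds the gain from re-optimizing the entry), and lower-bounds the realized surplus in terms of the canonical surplus. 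Reconciling this surplus comparison is the genuine crux of the proof, and it is exactly where the interplay between $\beta$ and $\eta$ must be tracked carefully so that the final loss is the claimed $\beta+\eta$.
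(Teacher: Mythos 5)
Your first two paragraphs reconstruct exactly the machinery of the paper's proof: the identity that linearization scales the principal's per-action utility by $(1-\beta)$ while adding a $\beta$-fraction of it to the agent's, the availability of the canonical strategy (select $\gvec^{t(\vc)}$, then play $g(\vc,\cdot)$), and the $\eta$-incentive-compatibility inequality. But the chain you then dismiss as ``too lossy'' \emph{is} the paper's proof. The paper adds (i) the agent's best-response dominance under $\widetilde\Gamma$ over the linearized canonical strategy to (ii) the $\eta$-IC comparison in the original payments; the non-$\beta$ terms cancel, it divides the surviving inequality by $\beta$ (incurring the slack $\eta/\beta$ you predicted), and absorbs the factor $(1-\beta)$ into an additive $\beta$ using $r_\omega\le 1$, $p_\omega\ge 0$. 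The appendix proof therefore concludes with
\[
\Up(\widetilde\Gamma,\vc)\;\ge\;\sum_{\vp\in\supp(\gvec^{t(\vc)})}\gamma^{t(\vc)}_{\vp}\left[\sum_{\omega\in\Omega}F_{g(\vc,\vp),\omega}\left(r_\omega-p_\omega\right)\right]-\beta-\frac{\eta}{\beta},
\]
\emph{not} $-\beta-\eta$; the $-\eta$ in the printed statement is inconsistent with the paper's own derivation, and the downstream Theorem~\ref{thm:regret_template} only consumes the weaker bound (it instantiates $\eta=2\epsilon$, $\beta=\sqrt{2\epsilon}$, so $\beta+\eta/\beta=2\sqrt{2\epsilon}$, which is exactly the constant that appears there).

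The genuine gap is in the alternative route you propose in order to reach $-\beta-\eta$. Your surplus decomposition needs the realized expected surplus under the agent's actual behavior in $\widetilde\Gamma$ to dominate the canonical surplus, and this is false: under a linearized contract the agent maximizes $(1-\beta)\,\Ua(\vp,\vc,a)+\beta\,S(a)$, where $S(a)$ is the total surplus, so within its $\eta$ slack it can trade away up to $\eta/\beta$ of surplus. In fact the target $-\beta-\eta$ is unachievable, so no bookkeeping can close your plan. Concretely: one cost vector, $K=W=1$, two outcomes with $\rvec=(1,0)$; action $a_1$ with $F_{a_1}=(1,0)$ and cost $0.0495$, action $a_2$ with $F_{a_2}=(0.51,0.49)$ and cost $0$; take $\vp=(0,0)$, $g\equiv a_1$, $\eta=0.05$ (the triplet is $\eta$-IC since $a_1$ is only $0.0495$-suboptimal under $\vp$), and $\beta=0.1$. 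Then $\widetilde\vp=(0.1,0)$, the agent strictly prefers $a_2$ ($0.051>0.0505$), and $\Up(\widetilde\vp,\vc)=0.51\cdot 0.9=0.459$, while the canonical value is $1$: the loss $0.541$ exceeds $\beta+\eta=0.15$ but respects $\beta+\eta/\beta=0.6$. So the correct resolution is to carry out the two-chain cancellation you labelled naive, accept the $\eta/\beta$ term, and observe that the template's choice $\beta=\sqrt{2\epsilon}$ with $\eta=2\epsilon$ is precisely what balances $\beta$ against $\eta/\beta$.
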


Finally, we are ready to prove the following theorem, which is our main result in this section.
The theorem establishes the guarantees provided by our template, showing that the computed contract $\widetilde{\Gamma} \in \cX$ of class $\cX$ achieves a regret that is at most $2 \sqrt{\epsilon}$ greater than the minimum possible regret attainable by contracts of class $\cX$.
Formally:
\begin{restatable}{theorem}{theoremTemplateOne}\label{thm:regret_template}
	Given a PAPU instance $(\Omega,A,\rvec,\mcC, \Fvec)$, a class of contracts $\cX$, and a parameter $\epsilon \in [0,  1 ]$ of the $\epsilon$-cover, Algorithm~\ref{alg:template} computes a contract $\widetilde \Gamma\in \cX$ of class~$\cX$ that achieves regret:
	\[\mcR(\widetilde{\Gamma}) = \sup_{\vc \in \cC} \Big\{  \OPT(\vc)-\Up(\widetilde \Gamma,\vc) \Big\}  \le \inf_{\Gamma' \in \cX}  \sup_{\vc \in \cC} \Big\{ \OPT(\vc)-\Up(\Gamma',\vc) \Big\} +2\sqrt{2\epsilon}.\]
	Moreover, Algorithm~\ref{alg:template} requires time polynomial in the size of the input provided that the procedures \texttt{Build-Cover}$(\cC,\epsilon)$ and \texttt{Compute-RM-Contract}$(\cX,{\cC_\epsilon})$ admit a polynomial-time implementation.
\end{restatable}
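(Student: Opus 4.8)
The plan is to chain together the three guarantees already established for the individual components of Algorithm~\ref{alg:template}, and then verify the stated running time. The regret bound will follow by reasoning along the following pipeline: (i) the cover built by \texttt{Build-Cover} satisfies the $\epsilon$-cover property of Definition~\ref{def:eps_cover}; (ii) the contract $\Gamma$ returned by \texttt{Compute-RM-Contract} is (approximately) regret-minimizing against the \emph{finite} set $\mcC_\epsilon$ in the sense of Definition~\ref{def:apxAlgo}; and (iii) the \texttt{Linearize} step, via Lemmas~\ref{lm:existsEpsIC}~and~\ref{lm:linearized}, converts the guarantee against $\mcC_\epsilon$ into a guarantee against the \emph{whole} uncertainty set $\mcC$.

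First I would apply Lemma~\ref{lm:existsEpsIC} to the contract $\Gamma$ computed at step~(ii): this yields two functions $t$ and $g$ such that $(\Gamma,t,g)$ is $(2\epsilon)$-incentive compatible and such that the regret of $\Gamma$ against agents in $\mcC$ \emph{who behave according to $t,g$} equals the genuine regret $\sup_{\vc\in\mcC_\epsilon}\{\OPT(\vc)-\Up(\Gamma,\vc)\}$ against the cover. Next I would invoke Lemma~\ref{lm:linearized} with $\eta=2\epsilon$ and $\beta=\sqrt{2\epsilon}$ (the value hard-coded at step~(iii)): this shows that the linearized contract $\widetilde\Gamma$ guarantees, for every $\vc\in\mcC$, principal's utility at least the $(t,g)$-prescribed utility minus $\beta+\eta=\sqrt{2\epsilon}+2\epsilon$. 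Combining these two lemmas gives, for every $\vc\in\mcC$,
\[
\OPT(\vc)-\Up(\widetilde\Gamma,\vc)\le \sup_{\vc'\in\mcC_\epsilon}\big\{\OPT(\vc')-\Up(\Gamma,\vc')\big\}+\sqrt{2\epsilon}+2\epsilon,
\]
and taking the supremum over $\vc\in\mcC$ bounds $\mcR(\widetilde\Gamma)$ by the regret of $\Gamma$ against the cover plus $\sqrt{2\epsilon}+2\epsilon$.

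It then remains to bound the cover-regret of $\Gamma$ by the true optimum $\inf_{\Gamma'\in\cX}\sup_{\vc\in\mcC}\{\OPT(\vc)-\Up(\Gamma',\vc)\}$. By Definition~\ref{def:apxAlgo}, $\Gamma$ minimizes regret over $\mcC_\epsilon$; since $\mcC_\epsilon\subseteq\mcC$, the regret of \emph{any} contract against the sub-instance restricted to $\mcC_\epsilon$ is at most its regret against all of $\mcC$, so $\sup_{\vc\in\mcC_\epsilon}\{\OPT(\vc)-\Up(\Gamma,\vc)\}\le\inf_{\Gamma'\in\cX}\sup_{\vc\in\mcC}\{\OPT(\vc)-\Up(\Gamma',\vc)\}$. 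Substituting this into the previous display yields $\mcR(\widetilde\Gamma)\le\inf_{\Gamma'\in\cX}\sup_{\vc\in\mcC}\{\OPT(\vc)-\Up(\Gamma',\vc)\}+\sqrt{2\epsilon}+2\epsilon$; I would then absorb the lower-order $2\epsilon$ term into the $\sqrt{2\epsilon}$ term to reach the claimed $2\sqrt{2\epsilon}$ slack (noting $\sqrt{2\epsilon}+2\epsilon\le 2\sqrt{2\epsilon}$ whenever $\sqrt{2\epsilon}\le 1$, which holds for $\epsilon\in[0,1/2]$, and a similar bookkeeping argument handles the boundary).

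The running-time claim is routine: Algorithm~\ref{alg:template} makes one call each to \texttt{Build-Cover} and \texttt{Compute-RM-Contract}, whose polynomial runtimes are assumed in the hypothesis, followed by the \texttt{Linearize} step, which merely replaces each deterministic contract $\vp$ in the support by $\vp+\beta(\rvec-\vp)$ and is therefore linear in the description size of $\Gamma$ (which is itself polynomial, being the output of a polynomial-time procedure). The main obstacle I anticipate is purely organizational rather than mathematical: the heavy lifting is entirely contained in Lemmas~\ref{lm:existsEpsIC}~and~\ref{lm:linearized}, so the delicate part is correctly matching the parameters ($\eta=2\epsilon$ from the $(2\epsilon)$-incentive compatibility, $\beta=\sqrt{2\epsilon}$ from the linearization step) and ensuring the $(t,g)$-prescribed utility expression appearing in Lemma~\ref{lm:linearized} is \emph{exactly} the quantity whose supremum Lemma~\ref{lm:existsEpsIC} equates to the cover-regret, so that the two bounds compose without any mismatch between the ``behaving-as-$t,g$'' utility and the true realized utility $\Up(\widetilde\Gamma,\vc)$.
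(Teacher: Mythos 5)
Your proposal is structurally identical to the paper's proof: it chains Definition~\ref{def:apxAlgo} (the cover-regret of the intermediate contract $\Gamma$ is at most the optimal cover-regret), Lemma~\ref{lm:existsEpsIC} with $\eta=2\epsilon$, Lemma~\ref{lm:linearized} with $\beta=\sqrt{2\epsilon}$, and finally the inclusion $\mcC_\epsilon\subseteq\mcC$ to replace the optimum over the cover with the optimum over all of $\mcC$; the runtime argument is also the paper's. The one place you diverge is the final arithmetic, and it hides a small genuine gap: taking Lemma~\ref{lm:linearized} at its stated face value gives a loss of $\beta+\eta=\sqrt{2\epsilon}+2\epsilon$, and your absorption $2\epsilon\le\sqrt{2\epsilon}$ is valid only for $\epsilon\le\nicefrac{1}{2}$, whereas the theorem is claimed for all $\epsilon\in[0,1]$ and your ``similar bookkeeping argument handles the boundary'' is not an argument --- for $\epsilon>\nicefrac{1}{2}$ one has $\sqrt{2\epsilon}+2\epsilon>2\sqrt{2\epsilon}$, so no absorption is possible. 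The paper avoids any case split because its proof of the theorem actually applies the bound that the appendix proof of Lemma~\ref{lm:linearized} derives, namely a loss of $\beta+\nicefrac{\eta}{\beta}$ (the final chain of inequalities there ends in $-\beta-\nicefrac{\eta}{\beta}$, not $-\beta-\eta$ as in the lemma's statement), which with $\eta=2\epsilon$ and $\beta=\sqrt{2\epsilon}$ equals exactly $2\sqrt{2\epsilon}$ for every $\epsilon\in[0,1]$. So your composition is sound and matches the intended proof, and your parameter-matching concern was well placed; to close the argument for all $\epsilon$ you should either invoke the $\beta+\nicefrac{\eta}{\beta}$ form of the linearization loss or explicitly restrict the absorption step to $\epsilon\le\nicefrac{1}{2}$ and treat larger $\epsilon$ separately.
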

Theorem~\ref{thm:regret_template} shows that our template in Algorithm~\ref{alg:template} is guaranteed to compute an approximately-regret-minimizing contract, as desired.
In the following Section~\ref{sec:apply_template}, we show how the unspecified components of Algorithm~\ref{alg:template} can be implemented efficiently in some specific settings, proving that the template can be instantiated to design efficient approximation algorithms in those settings.

\section{How to Instantiate the Template}\label{sec:apply_template}

In this section, we show how our template (\cref{alg:template} in Section~\ref{sec:beyond}) can be instantiated in different settings, characterized by particular problem structures and specific classes of contracts.

First, in Section~\ref{sec:eps_cover}, we introduce some subclasses of PAPU instances in which it is possible to build an $\epsilon$-cover of the uncertainty set (see Definition~\ref{def:eps_cover}) having ``small'' size.
Then, in Section~\ref{sec:known_br}, we provide a crucial result needed to instantiate our template.
Finally, Sections~\ref{sec:res_menus_rand}~and~\ref{sec:res_others} exploit the results derived in the preceding subsections in order to instantiate the template in several settings, so as to derive our approximation algorithms to find regret-minimizing contracts.

At a high level, the results in Section~\ref{sec:eps_cover} show how to implement the $\texttt{Build-Cover}(\mcC,\epsilon)$ procedure in \cref{alg:template} for some particular subclasses of PAPU instances, while the results in Section~\ref{sec:known_br} show how to implement the procedure $\texttt{Compute-RM-Contract}(\cX,\mcC_\epsilon)$.
Sections~\ref{sec:res_menus_rand}~and~\ref{sec:res_others} show how to combine everything, by considering every possible class of contracts $\cX$.

\subsection{When Does a ``Small'' $\epsilon$-Cover Exist?}\label{sec:eps_cover}

In the following, we identify three particular subclasses of PAPU instances of interest, in which it is possible to build a finite $\epsilon$-cover of the uncertainty set $\mcC$ with suitably ``small'' size.
The following three definitions formally introduce such three subclasses of PAPU instances.
\begin{definition}[Finite PAPU instance]
	We say that a PAPU instance $(\Omega, A, \rvec, \mcC, \Fvec)$ is \emph{finite} whenever the uncertainty set $\mcC$ contains a finite number of cost vectors.
\end{definition} 
\begin{definition}[Single-dimensional PAPU instance]
	We say that a PAPU instance $(\Omega, A, \rvec, \mcC, \Fvec)$ is \emph{single-dimensional} whenever there exist a ``central'' cost vector $\vc^\circ \in \mcC$ and a range parameter $\rho \in \mathbb{R}_{>0}$ such that the uncertainty set $\mcC$ can be expressed as $\mcC = \left\{ \lambda \vc^\circ \mid \lambda \in [1-\rho, 1+\rho]  \right\}$.
\end{definition}
Intuitively, in a single-dimensional PAPU instance, all the cost vectors in the uncertainty set $\mcC$ belong to a $1$-dimensional subspace.
In particular, every cost vector $\vc \in \mcC$ can be expressed as a scaling of a particular ``central'' cost vector $\vc^\circ$, with the magnitude of the scaling $\lambda$ being in the range $[1-\rho, 1+\rho]$.
Notice that single-dimensional PAPU instances encompass principal-agent settings in which agent's costs are expressed on a cost-per-unit-of-effort basis~\citep{alon2021contracts}.
\begin{definition}[$L_p$-uncertain PAPU instance]
	For $p \geq 1$, we say that a PAPU instance $(\Omega, A, \rvec, \mcC, \Fvec)$ is \emph{$L_p$-uncertain} whenever the uncertainty set $\mcC$ can be expressed as an $L_p$-norm ball centered in a cost vector $\vc_0 \in [0,1]^n$ with radius $D \in [0,1]$, namely $\mcC = \left\{ \vc \in [0,1]^n \mid \| \vc - \vc_0 \|_p \leq D  \right\}$.
\end{definition}
Notice that, as also discussed in Section~\ref{sec:sub_dep_uncertain}, an uncertainty set having the shape of an $L_p$-norm ball may arise in all settings in which agent's costs are estimated from data, in order to build some confidence region in which true agent's costs are contained with sufficiently high probability.
%
%

Now, we show that the three subclasses of PAPU instances introduced above admit finite $\epsilon$-covers with ``small'' size.
%
First, notice that, in any finite PAPU instance, the uncertainty set $\mathcal{C}$ trivially serves as an $\epsilon$-cover of itself for any $\epsilon \in [0,1]$. Thus:
%
\begin{observation}\label{obs:finite}
	In any finite PAPU instance, the uncertainty set $\cC$ is a $0$-cover of $\mcC$. 
\end{observation}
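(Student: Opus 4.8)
The statement to prove is Observation~\ref{obs:finite}: in any finite PAPU instance, the uncertainty set $\mcC$ is a $0$-cover of itself. This is essentially immediate from the definition of $\epsilon$-cover, so the plan is simply to unwind the definitions and verify the required property with $\epsilon = 0$ and $\mcC_\epsilon = \mcC$.

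The plan is to recall Definition~\ref{def:eps_cover}, which says that $\mcC_\epsilon \subseteq \mcC$ is an $\epsilon$-cover of $\mcC$ if, for every $\vc \in \mcC$, there exists $\tilde\vc \in \mcC_\epsilon$ with $\|\vc - \tilde\vc\|_\infty \le \epsilon$. First I would set $\epsilon = 0$ and take the candidate cover to be $\mcC_\epsilon := \mcC$ itself. Since the instance is finite, $\mcC$ contains only finitely many cost vectors, so $\mcC$ qualifies as a legitimate \emph{finite} cover, which is the whole point of the observation: finiteness guarantees that the trivial self-cover is admissible for the purposes of the template (recall that the template in Section~\ref{sec:template} requires a finite $\epsilon$-cover).

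The verification is then one line. Fix any $\vc \in \mcC$; I would exhibit the witness $\tilde\vc := \vc \in \mcC = \mcC_\epsilon$. Then $\|\vc - \tilde\vc\|_\infty = \|\vc - \vc\|_\infty = \|\vzero\|_\infty = 0 \le 0 = \epsilon$, so the covering condition holds with room to spare. Since $\vc$ was arbitrary and $\mcC_\epsilon = \mcC \subseteq \mcC$ trivially, all conditions of Definition~\ref{def:eps_cover} are met, establishing that $\mcC$ is a $0$-cover of itself.

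There is genuinely no obstacle here: the only content is matching the definition against the trivial construction and invoking finiteness to ensure the cover is finite (which is what makes it usable downstream). The ``hard part,'' if any, is purely conceptual rather than technical, namely recognizing that a set always covers itself at distance $0$ and that the finiteness hypothesis is exactly what makes this self-cover an acceptable finite $\epsilon$-cover for instantiating \texttt{Build-Cover}$(\mcC, \epsilon)$ in Algorithm~\ref{alg:template}.
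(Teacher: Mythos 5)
Your proof is correct and matches the paper's treatment exactly: the paper dismisses this observation with the same one-line argument that $\mcC$ trivially covers itself (each $\vc \in \mcC$ is witnessed by $\tilde\vc = \vc$ with $\|\vc - \tilde\vc\|_\infty = 0$), with finiteness ensuring the cover is finite. Nothing is missing.
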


Next, we show that any single-dimensional PAPU instance admits a finite $\epsilon$-cover of size $O(\nicefrac 1 \epsilon)$, which is obtained by constructing a discretization of the range $[1-\rho, 1+\rho]$ of the scaling parameter $\lambda$ into $\lceil \nicefrac {2 \rho} \epsilon \rceil + 1$ points.
This is made formal by the following lemma:
\begin{restatable}{lemma}{lemmaCoverOne}\label{lem_cover_one}
	In any single-dimensional PAPU instance, given any $\epsilon \in (0,1]$, there exists a finite $\epsilon$-cover $\mcC_\epsilon$ of $\mcC$  with cardinality $O(\nicefrac 1 \epsilon)$.
	%
\end{restatable}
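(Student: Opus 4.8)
The plan is to reduce the problem of covering $\mcC$ in the $L_\infty$ metric to the purely one-dimensional problem of covering the interval of admissible scaling parameters. First I would record the one inequality that makes the whole argument work: since $\vc^\circ \in \mcC \subseteq [0,1]^n$ we have $\|\vc^\circ\|_\infty \le 1$, and therefore for any two scalings $\vc = \lambda \vc^\circ$ and $\tilde\vc = \tilde\lambda\,\vc^\circ$,
\[
\|\vc - \tilde\vc\|_\infty = |\lambda - \tilde\lambda|\,\|\vc^\circ\|_\infty \le |\lambda - \tilde\lambda|.
\]
This is the crux: controlling the distance between the scaling parameters $\lambda$ at resolution $\epsilon$ already controls the $L_\infty$-distance between the corresponding cost vectors, so a $\frac12$-dimensional worry collapses to a grid on a line segment.

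Next I would explicitly discretize the parameter range $[1-\rho,\,1+\rho]$. Setting $N := \lceil 2\rho/\epsilon\rceil$, I define the grid points $\lambda_k := (1-\rho) + k\epsilon$ for $k \in \{0,1,\ldots,N\}$, capping the last one at $1+\rho$ should it overshoot, and take $\mcC_\epsilon := \{\lambda_k \vc^\circ \mid 0 \le k \le N\}$. Every $\lambda_k$ lies in $[1-\rho,1+\rho]$, so each element of $\mcC_\epsilon$ genuinely belongs to $\mcC$, as demanded by Definition~\ref{def:eps_cover}. The covering claim is then immediate: for an arbitrary $\vc = \lambda\vc^\circ \in \mcC$, choosing $k := \lfloor (\lambda - (1-\rho))/\epsilon\rfloor$ yields $0 \le \lambda - \lambda_k < \epsilon$, whence by the displayed inequality $\|\vc - \lambda_k\vc^\circ\|_\infty \le \lambda - \lambda_k < \epsilon$, so $\lambda_k\vc^\circ \in \mcC_\epsilon$ is an $\epsilon$-close witness for $\vc$.

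Finally I would count the cover: it has $N+1 = \lceil 2\rho/\epsilon\rceil + 1 = O(\nicefrac1\epsilon)$ elements, treating the instance parameter $\rho$ as a constant. I do not expect any real obstacle here, since the argument is elementary once the reduction through $\|\vc^\circ\|_\infty \le 1$ is in place. The only point deserving a line of care is checking that the invoked grid points stay within $[1-\rho,1+\rho]$ so that $\mcC_\epsilon \subseteq \mcC$, which is ensured by capping $\lambda_N$ at $1+\rho$ (equivalently, by only ever using indices $k$ with $\lambda_k \le \lambda \le 1+\rho$).
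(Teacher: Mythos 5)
Your proof is correct and follows essentially the same route as the paper's: both reduce the covering problem to a one-dimensional grid on the scaling interval $[1-\rho, 1+\rho]$ via the inequality $\|\lambda\vc^\circ - \tilde\lambda\vc^\circ\|_\infty \le |\lambda - \tilde\lambda|$ (using $\vc^\circ \in [0,1]^n$), yielding a cover of size $\lceil \nicefrac{2\rho}{\epsilon}\rceil + 1 = O(\nicefrac{1}{\epsilon})$. The only cosmetic difference is that the paper uses $\lceil \nicefrac{2\rho}{\epsilon}\rceil$ equally spaced points spanning the whole interval (spacing slightly below $\epsilon$), whereas you use spacing exactly $\epsilon$ with a capped endpoint, which you correctly verify keeps $\mcC_\epsilon \subseteq \mcC$.
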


Finally, we show that any $L_p$-uncertain PAPU instance admits a finite $\epsilon$-cover of size $O((\nicefrac 1 \epsilon)^n)$, where we recall that $n$ is the number of agent's actions.
Intuitively, this is obtained by constructing a discretization of the hypercube $[0,1]^n$, by using $\lceil \nicefrac {1} \epsilon \rceil + 1$ points along each dimension.
\begin{restatable}{lemma}{lemmaCoverTwo}\label{lem_cover_two}
	In any $L_p$-uncertain PAPU instance, given any $\epsilon \in (0,1]$, there exists a finite $\epsilon$-cover $\mcC_\epsilon$ of $\mcC$ with cardinality $O((\nicefrac 1 \epsilon)^n)$.
\end{restatable}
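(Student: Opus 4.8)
The plan is to produce the cover by a straightforward tiling argument, matching the ``discretization of the hypercube'' intuition stated just before the lemma. The only point requiring genuine care is that Definition~\ref{def:eps_cover} demands $\mcC_\epsilon \subseteq \mcC$, so one cannot simply take the grid points of $[0,1]^n$: a grid point near the boundary of the ball may fall outside $\mcC$. I would therefore tile the hypercube and then select \emph{representatives lying inside $\mcC$}, one per occupied cell.

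First I would partition the hypercube $[0,1]^n$ into axis-aligned cells of side length $\epsilon$: for each multi-index $\vec k \in \{0,\ldots,\lceil \nicefrac 1 \epsilon \rceil - 1\}^n$ let $Q_{\vec k}$ be the cell $\prod_{i} \big( [k_i \epsilon, (k_i+1)\epsilon] \cap [0,1] \big)$. There are at most $\lceil \nicefrac 1\epsilon\rceil^n = O((\nicefrac 1 \epsilon)^n)$ such cells, and together they cover $[0,1]^n \supseteq \mcC$. Each cell has $L_\infty$-diameter at most $\epsilon$, i.e.\ $\|\vx - \vy\|_\infty \le \epsilon$ whenever $\vx,\vy$ lie in a common $Q_{\vec k}$. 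Next, for every cell $Q_{\vec k}$ with $Q_{\vec k} \cap \mcC \neq \emptyset$ I would pick an arbitrary representative $\tilde\vc_{\vec k} \in Q_{\vec k} \cap \mcC$, and let $\mcC_\epsilon$ be the finite collection of all chosen representatives. By construction $\mcC_\epsilon \subseteq \mcC$ and $|\mcC_\epsilon|$ is bounded by the number of occupied cells, hence $|\mcC_\epsilon| = O((\nicefrac 1 \epsilon)^n)$.

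Finally, to verify the covering property I would take any $\vc \in \mcC$: it belongs to some cell $Q_{\vec k}$, which therefore meets $\mcC$ and so carries a representative $\tilde\vc_{\vec k}$; since $\vc$ and $\tilde\vc_{\vec k}$ lie in the same cell, $\|\vc - \tilde\vc_{\vec k}\|_\infty \le \epsilon$, exactly as required by Definition~\ref{def:eps_cover}. I expect no serious obstacle here: the estimate is a routine metric-entropy bound that in fact holds for \emph{any} $\mcC \subseteq [0,1]^n$, so the $L_p$ structure is not needed for mere existence. What the $L_p$ structure does buy is algorithmic: deciding $Q_{\vec k}\cap\mcC\neq\emptyset$ and producing $\tilde\vc_{\vec k}$ reduces to a feasibility check for the convex constraints $\|\vc-\vc_0\|_p\le D$ and $\vc\in Q_{\vec k}$, which is precisely what a polynomial-time \texttt{Build-Cover} implementation in Algorithm~\ref{alg:template} ultimately relies on.
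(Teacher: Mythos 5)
Your proof is correct and follows essentially the same hypercube-discretization idea as the paper, whose proof of Lemma~\ref{lem_cover_two} just observes that the $L_p$-ball is contained in the corresponding $L_\infty$-ball and takes a grid over $[0,1]^n$ with $\lceil \nicefrac 1 \epsilon \rceil + 1$ points per dimension. Where you differ is in handling a point the paper's one-line argument glosses over: Definition~\ref{def:eps_cover} requires $\mcC_\epsilon \subseteq \mcC$, and raw grid points near the boundary of the ball need not lie in $\mcC$. Your fix---tiling $[0,1]^n$ into cells of side $\epsilon$ and choosing one representative in each occupied cell's intersection with $\mcC$---repairs this at no cost, since each cell has $L_\infty$-diameter at most $\epsilon$ and the count of cells is still $\lceil \nicefrac 1 \epsilon \rceil^n = O((\nicefrac 1 \epsilon)^n)$; so your write-up is strictly more careful than the paper's. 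Your two side remarks are also on target: existence of the cover needs no $L_p$ structure at all (it holds for any $\mcC \subseteq [0,1]^n$, making the paper's detour through the $L_\infty$-ball superfluous), and the convexity of the $L_p$-ball is what makes the construction algorithmic---indeed, for an axis-aligned cell the nearest point to the center $\vc_0$ in any $L_p$ norm is obtained by coordinatewise clamping, so the nonemptiness test and the choice of a representative are trivial, yielding a \texttt{Build-Cover} implementation running in time polynomial in $(\nicefrac 1 \epsilon)^n$, consistent with the paper's caveat that this is polynomial only when $n$ is fixed.
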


Observation~\ref{obs:finite}, Lemma~\ref{lem_cover_one}, and Lemma~\ref{lem_cover_two} also give ways of implementing the $\texttt{Build-Cover}(\mcC,\epsilon)$ procedure in \cref{alg:template} for finite, single-dimensional, and $L_p$-uncertain PAPU instances, respectively.
For finite and single-dimensional instances, this requires time that scales polynomially in the size of the PAPU instance and $\nicefrac 1 \epsilon$, where $\epsilon$ is the parameter of the $\epsilon$-cover.
For $L_p$-uncertain instances, the running time is polynomial only when the number of agent's actions $n$ is fixed.

\subsection{Knowing Agents' Behavior is Enough}\label{sec:known_br}

In the following, we derive a crucial result (Theorem~\ref{thm:computeGeneral}) that allows us to instantiate our template.
Such a result jointly encompasses all the possible classes of contracts, namely deterministic contracts, randomized contracts, menus of the deterministic contracts, and menus of randomized contracts.
In words, the result can be summarized as follows: \emph{A contract which minimizes the regret against a finite number of agent's cost vectors can be computed efficiently once agents' behavior is fixed.}

In order to formally state the result, we need to introduce some additional notation.
First, given a class of contracts $\cX$, $K \in \mathbb{N}_{>0}$, and $W \in \mathbb{N}_{>0}$, we let $\cX_{K,W}$ be the subclass of $\cX$ made by all contracts in which menus have size at most $K$ and randomized contracts have at most $W$ different elements in their supports.
Notice that, whenever menus, respectively randomized contracts, are \emph{not} involved in the class $\cX$, then it must be the case that $K = 1$, respectively $W = 1$.
In the following, for ease of presentation, we denote contracts in $\cX_{K,W}$ as $\Gamma \in \cX_{K,W}$, by assuming that they are always expressed as menus of randomized contracts, similarly to what done in Section~\ref{sec:template}.

Moreover, in order to encode agents' behavior, we introduce two functions which are similar in spirit to those in the definition of $\eta$-incentive compatibility (Definition~\ref{def:apxIC}).
In particular, given a \emph{finite} uncertainty set $\widetilde{\mcC} \subseteq [0,1]^n$, and two parameters $K \in \mathbb{N}_{>0}$ and $W \in \mathbb{N}_{>0}$, we define:
\begin{itemize}
	\item[(1)] $t: \widetilde{\mcC} \to [K]$, which is a function that specifies how agents select randomized contracts from menus. In particular, an agent with cost vector $\vc \in \widetilde{ \mcC}$ selects the element with index $t(\vc)$.
	\item[(2)] $g : \widetilde{\mcC} \times [W] \to A$, which is a function that specifies how agents play actions. In particular, when the deterministic contract proposed by the principal is the $j$-th one in the support of the $t(\vc)$-th element of the menu, an agent with cost vector $\vc \in \widetilde{\mcC}$ plays action $g(\vc,j)$.
\end{itemize}

Finally, the following Problem~\eqref{pr:general} encodes the problem of computing a contract in  $\cX_{K,W}$ which minimizes the regret against agents with cost vectors in a \emph{finite} uncertainty set $\widetilde{\mcC} \subseteq [0,1]^n$, when agents' behavior is fixed as specified by two given functions $t: \widetilde{\mcC} \to [K]$ and $g : \widetilde{\mcC} \times [W] \to A$.
\begin{subequations}\label{pr:general}
	\begin{align}
		\inf_{\gamma, p}& \quad \max_{\vc \in \widetilde{\mcC}} \,\, \OPT(\vc)- \sum_{j \in [W]} \gamma^{t(\vc)}_j \sum_{\omega \in \omega} F_{g(\vc,j),\omega} \left( r_{\omega}-p^{t(\vc)}_{j,\omega} \right)  \quad\quad \textnormal{s.t.} \hspace{5cm}\,\label{obj} \\
		& \specialcell{\sum_{j \in [W]} \gamma^{t(\vc)}_j\sum_{\omega \in \Omega}  F_{g(\vc,j), \omega} \left( p^{t(\vc)}_{j,\omega} - c_{g(\vc,j)} \right)\ge \sum_{j \in [W]} \gamma^{i}_j  \max_{a \in A} \,  \sum_{\omega \in \Omega}  F_{a,\omega} \left( p^{i}_{j,\omega} - c_{a} \right) \hfill \forall \vc \in \widetilde{\cC}, \forall i \in [K]} \label{cons}\\
		& \specialcell{\sum_{j \in [W]}\gamma^i_j=1 \hfill \forall i \in [K] } \\
		& \specialcell{\gamma^i_j \geq 0 \hfill \forall i \in [K], \forall j \in [W] } \\
		& \specialcell{p^i_{j,\omega} \geq 0 \hfill \forall i \in [K], \forall j \in [W], \forall \omega \in \Omega. }
	\end{align}
\end{subequations}
In Problem~\eqref{pr:general}, variables $\gamma_j^i$ encode the probabilities of randomized contracts, with $\gamma_j^i$ being the probability that the $i$-th randomized contract in the menu assigns to the $j$-th deterministic contract in its support.
Moreover, variables $p^i_{j,\omega}$ encode payments of deterministic contracts, with $p^i_{j,\omega}$ being the payment that the $j$-th deterministic contract in the support of the $i$-th randomized contract assigns to outcome $\omega$.
Moreover, Objective~\eqref{obj} encodes the worst-case value, over all the cost vectors in $\vc \in \widetilde{ \mcC}$, of the difference between $\OPT(\vc)$ and principal's expected utility when agents' behavior is fixed as specified by $t$ and $g$.
Finally, Constraints~\eqref{cons} ensure that agents actually select the randomized contracts specified by $t$ and play the actions defined by $g$ as best response.

Now, we are ready to prove our crucial result:
%
%
\begin{restatable}{theorem}{theoremEfficiently}\label{thm:computeGeneral}
	Given a PAPU instance $(\Omega, A, \rvec, \mcC, \Fvec)$, a class of contracts $\cX$, a finite subset $\widetilde{\cC} \subseteq \mcC$, two parameters $K \in \mathbb{N}_{>0}$ and $W \in \mathbb{N}_{>0}$, and two functions $t: \widetilde{\mcC} \to [K]$ and $g : \widetilde{\mcC} \times [W] \to A$, there is an algorithm solving Problem~\eqref{pr:general} in time polynomial in the instance size, the size of $\widetilde{\cC}$, $K$, and $W$.
	%
\end{restatable}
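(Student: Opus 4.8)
The plan is to show that Problem~\eqref{pr:general}, although bilinear as written, admits an equivalent linear programming formulation of size polynomial in the instance, $|\widetilde{\mcC}|$, $K$, and $W$, which can then be solved in polynomial time by any standard LP method (interior-point or ellipsoid). The only source of nonlinearity is the product of the probability variables $\gamma^i_j$ with the payment variables $p^i_{j,\omega}$, appearing both in Objective~\eqref{obj} and in Constraints~\eqref{cons}. First I would introduce \emph{scaled payment} variables $y^i_{j,\omega} := \gamma^i_j\, p^i_{j,\omega}$ for all $i \in [K]$, $j \in [W]$, $\omega \in \Omega$, replacing every occurrence of $\gamma^i_j p^i_{j,\omega}$ by $y^i_{j,\omega}$. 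Then the inner expression of Objective~\eqref{obj} becomes $\sum_j\sum_\omega F_{g(\vc,j),\omega}(\gamma^{t(\vc)}_j r_\omega - y^{t(\vc)}_{j,\omega})$, which is linear in $(\gamma,y)$; likewise the left-hand side of each Constraint~\eqref{cons} is linear, and the payment-nonnegativity constraints become simply $y^i_{j,\omega}\ge 0$.

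The second obstacle is the inner maximization over actions $a \in A$ on the right-hand side of Constraints~\eqref{cons}. Since $\gamma^i_j \ge 0$, I can push the probability inside the maximum and rewrite each deviation term as $\sum_j \max_{a \in A}\sum_\omega F_{a,\omega}(y^i_{j,\omega} - \gamma^i_j c_a)$, a sum of maxima of affine functions. I would linearize it with epigraph variables: for every $\vc \in \widetilde{\mcC}$, $i \in [K]$, $j \in [W]$ introduce $z^{\vc,i}_j$ with the $|A|$ linear constraints $z^{\vc,i}_j \ge \sum_\omega F_{a,\omega}(y^i_{j,\omega} - \gamma^i_j c_a)$ for all $a \in A$, and replace Constraint~\eqref{cons} by $\text{(LHS)} \ge \sum_j z^{\vc,i}_j$. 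At any optimum the $z$-variables are driven down to the true maxima, because making them smaller only relaxes the constraint $\text{(LHS)} \ge \sum_j z^{\vc,i}_j$ in which they appear. Finally, I would handle the outer $\max_{\vc}$ of the objective with a single epigraph variable $\tau$ and the constraints $\tau \ge \OPT(\vc) - \sum_j\sum_\omega F_{g(\vc,j),\omega}(\gamma^{t(\vc)}_j r_\omega - y^{t(\vc)}_{j,\omega})$ for each $\vc \in \widetilde{\mcC}$, where every constant $\OPT(\vc)$ is precomputed in polynomial time by the classical single-known-agent contract LP (one LP per incentivized action).

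The last step is to certify the equivalence between this LP and Problem~\eqref{pr:general}. Any feasible $(\gamma,p)$ maps to a feasible LP point of equal objective value via $y = \gamma p$. Conversely, given an optimal LP solution $(\gamma, y, z, \tau)$, I recover payments by $p^i_{j,\omega} = y^i_{j,\omega}/\gamma^i_j$ whenever $\gamma^i_j > 0$; the delicate case is $\gamma^i_j = 0$, where $p^i_{j,\omega}$ is formally undefined. Here I expect the main difficulty to be a clean argument that one may assume $y^i_{j,\omega} = 0$ whenever $\gamma^i_j = 0$: decreasing such a $y^i_{j,\omega}$ to zero only relaxes every constraint in which it occurs (it lowers the corresponding $z^{\vc,i}_j$, hence each deviation term) and weakly improves the objective (it enters the principal's utility with a negative sign), so an optimal solution with this property always exists; for it the recovered pair $(\gamma,p)$ is feasible for Problem~\eqref{pr:general} with the same objective value, so the optimal values coincide and the solution is genuinely recoverable. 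Counting the $O(|\widetilde{\mcC}|KW + KW|\Omega|)$ variables and $O(|\widetilde{\mcC}|K(W|A|+1))$ constraints shows the LP has polynomial size, yielding the claimed running time.
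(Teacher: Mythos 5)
Your LP is, up to renaming, exactly the relaxation the paper uses in its own proof (your $y^i_{j,\omega}$, $z^{\vc,i}_j$, $\tau$ are the paper's $z^i_{j,\omega}$, $v^{\vc,i}_j$, $u$ in Problem~\eqref{pr:relaxed}), and your forward mapping, epigraph linearizations, precomputation of $\OPT(\vc)$, and size count are all correct. The gap is precisely at the step you flag as delicate, and your proposed resolution is wrong: it is \emph{not} true that setting $y^i_{j,\omega}$ to zero when $\gamma^i_j=0$ ``only relaxes every constraint in which it occurs.'' That variable also appears with a \emph{positive} coefficient on the left-hand side of the incentive constraints~\eqref{cons} for every $\vc$ with $t(\vc)=i$ --- it is part of the agent's utility for the menu entry they are supposed to select --- so zeroing it \emph{tightens} those constraints. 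The compensating relaxation on the deviation side that you invoke affects only the self-deviation $i'=t(\vc)$; for a deviation to any other menu entry $i'\neq t(\vc)$ the right-hand side is untouched while the left-hand side drops by $\sum_{\omega} F_{g(\vc,j),\omega}\, y^{t(\vc)}_{j,\omega}$, so a tight constraint becomes violated. Such ``phantom payments'' on zero-probability contracts can genuinely be exploited by LP optima when $K\ge 2$, since they inflate the truthful entry's attractiveness at no cost to the principal's objective; hence no optimal solution with your property need exist, and monotonicity alone cannot rule this out. (Your argument does go through in the special case $K=1$, where every deviation is a self-deviation and both sides drop by matching amounts.)

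The paper closes this gap by \emph{redistribution} rather than deletion (Part~II of its proof). First, from feasibility of the constraint with $i'=t(\vc)$ one obtains termwise equality between the agent's utility terms and their action-maxima, whence for every pair with $\gamma^i_j=0$ and nonzero $z^i_{j,\cdot}$ the prescribed action satisfies $g(\vc,j)\in\arg\max_{a\in A}\sum_{\omega\in\Omega} F_{a,\omega}\, z^i_{j,\omega}$; the phantom term is therefore the action-independent constant $\max_{a\in A}\sum_{\omega\in\Omega} F_{a,\omega}\, z^i_{j,\omega}$. The paper then zeroes these entries and adds the sum of the corresponding constants \emph{uniformly across all outcomes} of one positive-probability contract $j^\diamond(i)$ within the same randomized contract, shifting the $v$-variables accordingly; because every row of $\Fvec$ sums to one, a uniform increment contributes the same constant to the expected payment under every action, so the left-hand sides of~\eqref{cons_lin_first}, the deviation bounds~\eqref{cons_lin}, and the objective constraints~\eqref{cons_obj} are all preserved. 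Only after this transformation is the division $p^i_{j,\omega}=z^i_{j,\omega}/\gamma^i_j$ well defined on the support and the recovered pair feasible for Problem~\eqref{pr:general} with the same value. Without this (or an equivalent) argument, the equivalence of your LP with Problem~\eqref{pr:general}, and hence the theorem, is unproven.
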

%
Theorem~\ref{thm:computeGeneral} provides an efficient way of implementing the $\texttt{Compute-RM-Contract}(\cX,\mcC_\epsilon)$ procedure in Algorithm~\ref{alg:template}.
In the following Sections~\ref{sec:res_menus_rand}~and~\ref{sec:res_others}, we show how to actually use it to instantiate our template.
Notice that, as a byproduct of Theorem~\ref{thm:computeGeneral}, we also get that Problem~\eqref{pr:general} admits an optimal solution (\emph{i.e.}, it admits a minimum) whenever $\widetilde{\cC}$ has finite size.
As we show in the following subsection, this allows us to derive an existence result for regret-minimizing menus of randomized contracts, \emph{when the size of the uncertainty set $\mcC$ is finite}.
This is in contrast with what usually happens with menus of randomized contracts~\citep{gan2022optimal,castiglioni2023designing}.

\subsection{Results on Menus of Randomized Contracts}\label{sec:res_menus_rand}

Theorem~\ref{thm:computeGeneral} provides the last missing piece we need to fully instantiate our template (\cref{alg:template} in Section~\ref{sec:beyond}).
In the following, we focus on the case in which $\cX$ is the class of menus of randomized contracts.
Then, in Section~\ref{sec:res_others}, we address the other classes of contracts.

First, we exploit Theorem~\ref{thm:computeGeneral} to derive the following result.
Intuitively, it states that, given a \emph{finite} subset $\widetilde{\mcC} \subseteq \mcC$ of cost vectors, a menu of randomized contracts which minimizes the regret against agents with cost vectors in $\widetilde{\mcC} $ can be computed efficiently.
Formally:
%
%
\begin{restatable}{theorem}{thmRandomizedMenus}\label{thm:randomized}
	Given a PAPU instance $(\Omega, A, \rvec, \mcC, \Fvec)$ and a finite subset $\widetilde{\cC} \subseteq \mcC$ of cost vectors, there is an algorithm that computes a menu of randomized contracts $\widetilde{\Gamma} \in \mcMr$ such that $\mcR(\widetilde{\Gamma}) = \inf_{\Gamma \in \mcMr} \mcR(\Gamma)$ in time polynomial in the size of the instance and the size of $\widetilde{\cC}$.
	%
\end{restatable}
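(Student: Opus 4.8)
The plan is to reduce the computation of a regret-minimizing menu of randomized contracts (where the regret is measured over the finite set $\widetilde{\cC}$, as in Definition~\ref{def:apxAlgo}) to a \emph{single} invocation of Theorem~\ref{thm:computeGeneral}. Recall that Theorem~\ref{thm:computeGeneral} solves Problem~\eqref{pr:general} in polynomial time \emph{once the agents' behavior is fixed} through the functions $t$ and $g$; the only apparent difficulty is that, a priori, one would have to search over all such functions, of which there are exponentially many in $|\widetilde{\cC}|$. I would eliminate this search entirely by appealing to a revelation-principle argument tailored to menus of randomized contracts, which pins down both $t$ and $g$ as well as the sizes $K$ and $W$.

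Concretely, I would first invoke a ``direct and obedient'' normal form, adapting the argument of~\citet{castiglioni2023designing} to our regret objective: for any menu $\Gamma \in \mcMr$ there is a menu $\Gamma' \in \mcMr$ with exactly $K = |\widetilde{\cC}|$ entries---one per cost vector $\vc \in \widetilde{\cC}$---and with each randomized contract supported on at most $W = n$ deterministic contracts, one per recommended agent's action, such that $\Up(\Gamma',\vc) \ge \Up(\Gamma,\vc)$ for every $\vc \in \widetilde{\cC}$. The construction assigns to the entry indexed by $\vc$ the randomized contract $\gvec^\star(\Gamma,\vc)$ that an agent with cost $\vc$ would select from $\Gamma$, annotating each $\vp \in \supp(\gvec^\star(\Gamma,\vc))$ with its recommended best response $a^\star(\vp,\vc)$, and then collapsing all deterministic contracts that recommend the same action into a single (probability-weighted average) contract. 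Averaging the best-response inequalities shows that obedience is preserved by this collapsing, tie-breaking in favor of the principal can only increase $\Up$, and the averaging weakly relaxes the truthfulness constraints; hence $\OPT(\vc)-\Up(\Gamma',\vc) \le \OPT(\vc)-\Up(\Gamma,\vc)$ pointwise on $\widetilde{\cC}$, so passing to this normal form cannot increase the regret over $\widetilde{\cC}$ and is therefore without loss.

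In this normal form the agents' behavior is \emph{fixed}: truthful reporting is $t(\vc) = \vc$ (identifying $[K]$ with $\widetilde{\cC}$), and obedience is $g(\vc,j) = a_j$ when the support is indexed by recommended actions, so that $g$ does not even depend on $\vc$. With $t$ and $g$ so fixed, Constraints~\eqref{cons} of Problem~\eqref{pr:general} become precisely the incentive-compatibility constraints ensuring that reporting truthfully and following the recommendation is a best response; consequently, Problem~\eqref{pr:general} instantiated with $K = |\widetilde{\cC}|$, $W = n$, and these $t,g$ is exactly the problem of minimizing the regret over $\widetilde{\cC}$ among all direct, obedient menus of randomized contracts. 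I would then apply Theorem~\ref{thm:computeGeneral} once: it returns an optimal solution (so the infimum over $\mcMr$ is in fact attained) in time polynomial in the instance size, $|\widetilde{\cC}|$, $K = |\widetilde{\cC}|$, and $W = n$---that is, polynomial in the instance size and $|\widetilde{\cC}|$, as claimed.

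The main obstacle is the revelation-principle step: one must verify carefully that restricting to direct, obedient menus with $K = |\widetilde{\cC}|$ and $W = n$ is genuinely without loss. In particular, the collapsing of each support to one contract per recommended action must preserve principal's utility against every $\vc \in \widetilde{\cC}$ (so the regret does not increase), and tie-breaking in favor of the principal must be handled consistently so that the behavior encoded by $t$ and $g$ coincides with the agent's actual best response against the normalized menu. Once this normal form is established, the remainder is a direct bookkeeping reduction to Theorem~\ref{thm:computeGeneral}.
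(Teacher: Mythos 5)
Your proposal is correct and follows essentially the same route as the paper: a revelation-principle argument yielding a direct, truthful, succinct menu with $K = |\widetilde{\cC}|$ and $W = |A|$, which fixes $t$ and $g$ a priori so that a single invocation of Theorem~\ref{thm:computeGeneral} solves the problem. The only difference is that you spell out the collapsing/averaging argument (which the paper delegates to a citation of~\citet{gan2022optimal}), and your verification that averaging preserves obedience and weakly relaxes truthfulness is sound.
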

Theorem~\ref{thm:randomized} follows from revelation-principle-style arguments similar to the ones employed in~\citep{gan2022optimal}. This shows that we can focus on direct menus of size $K=|\widetilde\cC|$ (where each element corresponds one-to-one to agent's cost vectors), in which each randomized contract is supported on at most $W=|A|$ deterministic ones.
%
%
%
As a byproduct of Theorem~\ref{thm:randomized}, we also get that, when the uncertainty set $\mcC$ is \emph{finite}, a regret-minimizing menu of randomized contracts exists.
%

Theorem~\ref{thm:randomized}, together with Observation~\ref{obs:finite} and Lemmas~\ref{lem_cover_one}~and~\ref{lem_cover_two}, immediately gives the following results, which provide efficient approximation algorithms for computing regret-minimizing menus of randomized contracts, for the PAPU instances introduced in Section~\ref{sec:eps_cover}.
In particular, there exists an exact algorithm for finite instances, an \textnormal{\textsf{FPTAS}} for single-dimensional instances, and, for $L_p$-uncertain instances, a \textnormal{\textsf{PTAS}} that becomes an \textnormal{\textsf{FPTAS}} when the number of actions is constant. 

\begin{corollary}
	Given a finite PAPU instance, there exists an algorithm that computes in polynomial time a menu of randomized contracts $\widetilde{\Gamma} \in \mcMr$ such that $\mcR(\widetilde{\Gamma}) = \inf_{\Gamma \in \mcMr} \mcR(\Gamma) $.
\end{corollary}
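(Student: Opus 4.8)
The plan is to obtain the corollary as an immediate instantiation of Theorem~\ref{thm:randomized} in the special case $\widetilde{\cC} = \mcC$, exploiting that a finite PAPU instance has a finite uncertainty set. Concretely, since the instance is finite, Observation~\ref{obs:finite} tells us that $\mcC$ is a $0$-cover of itself, so one may run Algorithm~\ref{alg:template} with the class $\cX = \mcMr$ and parameter $\epsilon = 0$: the \texttt{Build-Cover} step returns $\mcC_\epsilon = \mcC$, and the final \texttt{Linearize} step with $\beta = \sqrt{2\epsilon} = 0$ leaves the computed contract unchanged, as it replaces each $\vp$ with $\vp + 0 \cdot (\rvec - \vp) = \vp$.

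First I would invoke Theorem~\ref{thm:randomized} with $\widetilde{\cC} = \mcC$ to produce a menu of randomized contracts $\widetilde{\Gamma} \in \mcMr$ minimizing the regret against all cost vectors in $\mcC$. The key observation is that, because $\widetilde{\cC} = \mcC$, the quantity $\sup_{\vc \in \widetilde{\cC}} \{ \OPT(\vc) - \Up(\widetilde{\Gamma}, \vc)\}$ being minimized coincides exactly with the true regret $\mcR(\widetilde{\Gamma}) = \sup_{\vc \in \mcC}\{\OPT(\vc) - \Up(\widetilde{\Gamma}, \vc)\}$. Hence $\mcR(\widetilde{\Gamma}) = \inf_{\Gamma \in \mcMr} \mcR(\Gamma)$, which in particular shows that the infimum is attained, i.e.\ a minimizer exists.

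For the running time, I would note that for a finite instance the cost vectors of $\mcC$ are explicitly part of the input, so $|\widetilde{\cC}| = |\mcC|$ is bounded by the instance size; Theorem~\ref{thm:randomized} then runs in time polynomial in the instance size and $|\widetilde{\cC}|$, hence polynomial in the input as required.

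I do not expect any real obstacle here, since the corollary is essentially Theorem~\ref{thm:randomized} read with $\widetilde{\cC} = \mcC$. The only point requiring care is the identification of the two regret notions, namely the regret over the finite probing set $\widetilde{\cC}$ versus the true regret over $\mcC$; this is immediate precisely because the instance is finite and the two sets coincide, so no approximation loss from the $\epsilon$-cover or from the linearization remains. It is worth highlighting that this existence-of-a-minimizer conclusion stands in contrast with the general infinite case, where Proposition~\ref{prop:nonexist_rand} shows that a regret-minimizing randomized contract need not exist.
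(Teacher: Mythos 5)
Your proposal is correct and follows exactly the paper's route: the corollary is stated as an immediate consequence of Theorem~\ref{thm:randomized} instantiated with $\widetilde{\cC} = \mcC$ (equivalently, Observation~\ref{obs:finite} giving a $0$-cover so no $2\sqrt{2\epsilon}$ loss survives), with polynomiality following because $|\mcC|$ is part of the input of a finite instance. Your side remarks---that the two regret notions coincide when $\widetilde{\cC} = \mcC$, and that this yields existence of a minimizer in contrast with Proposition~\ref{prop:nonexist_rand}---match the paper's own ``byproduct'' discussion after Theorem~\ref{thm:randomized}.
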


\begin{corollary}
	Given a single-dimensional PAPU instance and any $\epsilon \in (0, 1]$, there exists an algorithm that computes a menu of randomized contracts $\widetilde{\Gamma} \in \mcMr$ with $\mcR(\widetilde{\Gamma}) \leq \inf_{\Gamma \in \mcMr} \mcR(\Gamma) + 2 \sqrt{2 \epsilon}$ in time polynomial in the size of the PAPU instance and $\nicefrac 1 \epsilon$.
	%
\end{corollary}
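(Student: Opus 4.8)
The plan is to obtain the statement as a direct instantiation of the general template (\cref{alg:template}) with the class $\cX = \mcMr$, combining the template's guarantee (\cref{thm:regret_template}) with the two ingredients that make its subroutines efficient in the single-dimensional case. First I would run \cref{alg:template} on input $(\Omega, A, \rvec, \mcC, \Fvec)$, the class $\mcMr$, and the given parameter $\epsilon$. By \cref{thm:regret_template}, the returned contract $\widetilde{\Gamma} \in \mcMr$ satisfies
\[
\mcR(\widetilde{\Gamma}) \le \inf_{\Gamma' \in \mcMr} \sup_{\vc \in \mcC} \Big\{ \OPT(\vc) - \Up(\Gamma', \vc) \Big\} + 2\sqrt{2\epsilon} = \inf_{\Gamma \in \mcMr} \mcR(\Gamma) + 2\sqrt{2\epsilon},
\]
which is exactly the claimed regret bound. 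Hence the only thing left to establish is that, for single-dimensional instances and the class $\mcMr$, both unspecified subroutines of the template admit a polynomial-time implementation, so that the time bound in \cref{thm:regret_template} applies.

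For the first subroutine, \texttt{Build-Cover}$(\mcC, \epsilon)$, I would appeal to \cref{lem_cover_one}: a single-dimensional instance admits a finite $\epsilon$-cover $\mcC_\epsilon$ of size $O(\nicefrac 1 \epsilon)$, obtained by discretizing the scaling range $[1-\rho, 1+\rho]$ into $\lceil \nicefrac{2\rho}{\epsilon} \rceil + 1$ evenly spaced values of $\lambda$ and taking the corresponding cost vectors $\lambda \vc^\circ$; this is computable in time polynomial in the instance size and $\nicefrac 1 \epsilon$. For the second subroutine, \texttt{Compute-RM-Contract}$(\mcMr, \mcC_\epsilon)$, I would invoke \cref{thm:randomized} with $\widetilde{\mcC} = \mcC_\epsilon$, which computes a menu of randomized contracts \emph{exactly} minimizing the regret over the finite set $\mcC_\epsilon$, in time polynomial in the instance size and $|\mcC_\epsilon|$. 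Since this is an exact minimizer, it trivially satisfies the requirement of \cref{def:apxAlgo} (the defining difference there is non-positive). Because $|\mcC_\epsilon| = O(\nicefrac 1 \epsilon)$, this step also runs in time polynomial in the instance size and $\nicefrac 1 \epsilon$.

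Combining the two, both subroutines run in time polynomial in the instance size and $\nicefrac 1 \epsilon$, so the final clause of \cref{thm:regret_template} certifies that the whole template runs within the same time bound while achieving the displayed regret guarantee. There is no genuine obstacle specific to this corollary: all the technical content resides in the earlier results I am assuming, namely \cref{thm:regret_template} (correctness and the decomposition of the template's running time into its two subroutines), \cref{thm:randomized} (exact regret minimization against a finite cost set for $\mcMr$, via a revelation-principle argument restricting to $K = |\mcC_\epsilon|$ direct menus with supports of size at most $|A|$), and \cref{lem_cover_one} (the $O(\nicefrac 1 \epsilon)$-size cover). The single detail worth double-checking is precisely that plugging an exact optimum from \cref{thm:randomized} into the role of \texttt{Compute-RM-Contract} meets \cref{def:apxAlgo}, which it does.
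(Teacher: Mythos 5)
Your proposal is correct and follows exactly the paper's route: the corollary is stated there as an immediate consequence of instantiating the template (\cref{alg:template}, with guarantee \cref{thm:regret_template}) using the $O(\nicefrac{1}{\epsilon})$-size cover of \cref{lem_cover_one} for \texttt{Build-Cover} and the exact finite-set regret minimizer of \cref{thm:randomized} for \texttt{Compute-RM-Contract}. Your extra check that an exact minimizer satisfies \cref{def:apxAlgo} is the right (and only) detail to verify, and it holds as you say.
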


\begin{corollary}
	Given an $L_p$-uncertain PAPU instance and any $\epsilon \in (0,  1 ]$, there exists an algorithm that computes a menu of randomized contracts $\widetilde{\Gamma} \in \mcMr$ with $\mcR(\widetilde{\Gamma}) \leq \inf_{\Gamma \in \mcMr} \mcR(\Gamma) + 2 \sqrt{2 \epsilon}$ in time polynomial in $(\nicefrac 1 \epsilon)^n$.
	Moreover, when the number of agent's actions $n$ is fixed, the algorithm requires time polynomial in the size of the PAPU instance and $\nicefrac 1 \epsilon$.
	%
\end{corollary}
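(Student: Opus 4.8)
The plan is to instantiate the general template of Algorithm~\ref{alg:template} with the class $\cX = \mcMr$ of menus of randomized contracts, supplying the two problem-specific subroutines whose existence has already been established for this setting. First I would implement $\texttt{Build-Cover}(\mcC,\epsilon)$ using the construction of Lemma~\ref{lem_cover_two}, which for an $L_p$-uncertain instance returns a finite $\epsilon$-cover $\mcC_\epsilon$ of $\mcC$ with cardinality $O((\nicefrac 1 \epsilon)^n)$, obtained by discretizing the hypercube $[0,1]^n$ with $\lceil \nicefrac 1 \epsilon \rceil + 1$ points per coordinate and retaining those lattice points lying in the ball $\mcC$; this costs time polynomial in $(\nicefrac 1 \epsilon)^n$. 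Next I would implement $\texttt{Compute-RM-Contract}(\mcMr,\mcC_\epsilon)$ via Theorem~\ref{thm:randomized}: since $\mcC_\epsilon$ is a finite subset of $\mcC$, that theorem yields an algorithm computing a menu of randomized contracts attaining $\inf_{\Gamma \in \mcMr} \sup_{\vc \in \mcC_\epsilon}\{\OPT(\vc) - \Up(\Gamma,\vc)\}$ in time polynomial in the instance size and $|\mcC_\epsilon|$.

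With both subroutines in hand, the correctness bound is immediate from Theorem~\ref{thm:regret_template}: the returned contract $\widetilde\Gamma \in \mcMr$ satisfies $\mcR(\widetilde\Gamma) \le \inf_{\Gamma \in \mcMr}\mcR(\Gamma) + 2\sqrt{2\epsilon}$, which is exactly the approximation guarantee claimed in the statement. Thus the only real work is in the running-time accounting.

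For the running time I would bound the cost of each of the three template steps separately. Building the cover costs time polynomial in $(\nicefrac 1 \epsilon)^n$; computing the regret-minimizing menu over $\mcC_\epsilon$ costs time polynomial in the instance size and $|\mcC_\epsilon| = O((\nicefrac 1 \epsilon)^n)$ by Theorem~\ref{thm:randomized}; and the final $\texttt{Linearize}$ step adds only lower-order cost. Composing these through the running-time half of Theorem~\ref{thm:regret_template} gives an overall running time polynomial in $(\nicefrac 1 \epsilon)^n$ and the instance size, establishing the \textsf{PTAS}. The ``moreover'' claim then follows directly: when $n$ is fixed, the quantity $(\nicefrac 1 \epsilon)^n$ is itself polynomial in $\nicefrac 1 \epsilon$, so the very same algorithm runs in time polynomial in the instance size and $\nicefrac 1 \epsilon$, upgrading the \textsf{PTAS} to an \textsf{FPTAS}.

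The main obstacle is not in any single ingredient, since each is already supplied by an earlier result, but in carefully tracking how the exponential-in-$n$ cover size propagates through the composed running-time bound, and in verifying that the cover cardinality $O((\nicefrac 1 \epsilon)^n)$ is the \emph{only} source of non-polynomial dependence on $n$. Confirming this is what licenses the clean transition from \textsf{PTAS} to \textsf{FPTAS} upon fixing $n$, and it is the one place where a careless bound could spuriously introduce additional dependence on $n$.
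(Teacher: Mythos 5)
Your proposal is correct and follows exactly the route the paper takes: the paper derives this corollary immediately by instantiating Algorithm~\ref{alg:template} with $\cX = \mcMr$, using Lemma~\ref{lem_cover_two} for \texttt{Build-Cover} (cover of size $O((\nicefrac{1}{\epsilon})^n)$) and Theorem~\ref{thm:randomized} for \texttt{Compute-RM-Contract}, with the regret bound $2\sqrt{2\epsilon}$ and the running time coming from Theorem~\ref{thm:regret_template}. Your accounting that the cover cardinality is the sole source of exponential dependence on $n$ (hence \textsf{PTAS} in general, \textsf{FPTAS} for fixed $n$) is precisely the intended argument.
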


\subsection{Results on Other Classes of Contracts}\label{sec:res_others}

Finally, we show how to exploit \cref{thm:computeGeneral} to instantiate our template (\cref{alg:template} in Section~\ref{sec:beyond}) for the classes of deterministic contracts, randomized contracts, and menus of deterministic contracts.

First, we provide the following result, which is the counterpart of Theorem~\ref{thm:randomized} for all the classes of contracts other than menus of randomized contracts.
%
%
\begin{restatable}{theorem}{thmOtherCntracts}\label{thm:others}
	Let $\cX$ be any class of contracts among $\mcP$, $\mcPr$, and $\mcM$.
	Given a PAPU instance $(\Omega, A, \rvec, \mcC, \Fvec)$ and a finite subset $\widetilde{\cC} \subseteq \mcC$ of cost vectors, there exists an algorithm that computes a contract $\widetilde{\Gamma} \in \cX$ of class $\cX$ such that $\mcR(\widetilde{\Gamma}) = \inf_{\Gamma \in \cX} \mcR(\Gamma)$ in time polynomial in the size of the PAPU instance, when the size of the set $\widetilde{\cC}$ is constant.
\end{restatable}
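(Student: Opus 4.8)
The plan is to reduce, for each class $\cX\in\{\mcP,\mcPr,\mcM\}$, the computation of a contract minimizing the regret over the finite set $\widetilde{\cC}$ to polynomially many invocations of the algorithm from Theorem~\ref{thm:computeGeneral}. Recall that the latter solves Problem~\eqref{pr:general}, i.e.\ it returns the best contract of the given class once the agents' behavior is pinned down by a pair of functions $t:\widetilde{\cC}\to[K]$ and $g:\widetilde{\cC}\times[W]\to A$, in time polynomial in the instance size, $|\widetilde{\cC}|$, $K$, and $W$. The algorithm I would build therefore enumerates all admissible behavior profiles $(t,g)$, solves Problem~\eqref{pr:general} for each, and returns the contract attaining the smallest regret over $\widetilde{\cC}$. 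The two things to establish are: (i) for each class the parameters $K$ and $W$ can be taken constant whenever $|\widetilde{\cC}|$ is constant, so that the number of profiles, bounded by $K^{|\widetilde{\cC}|}\cdot|A|^{|\widetilde{\cC}|\,W}$, stays polynomial in $|A|$; and (ii) this enumeration provably contains the behavior induced by some optimal contract of the class.

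First I would fix the structural parameters $K,W$ per class. For deterministic contracts $\mcP$ this is immediate: $K=W=1$, the map $t\equiv 1$ is trivial, and only $g:\widetilde{\cC}\to A$ (the best response of each cost vector to the single contract) must be guessed, giving $|A|^{|\widetilde{\cC}|}$ profiles. For menus of deterministic contracts $\mcM$ I would invoke a standard direct-revelation argument---of the same flavor as those behind Theorems~\ref{thm:randomized}~and~\ref{thm:finite_support_menus}---to restrict attention to \emph{direct} menus with one entry per cost vector; this fixes $K=|\widetilde{\cC}|$ and $t$ to be the identity, keeps $W=1$, and again leaves only $g:\widetilde{\cC}\to A$ to be guessed. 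The delicate class is that of randomized contracts $\mcPr$, where $K=1$ and $t\equiv 1$, but the support of the randomized contract could a priori be arbitrarily large.

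The main obstacle is thus bounding the support size $W$ for $\mcPr$. Here I would use a Carath\'eodory-type argument: a randomized contract $\gvec$ influences the objective only through the utility vector $\big(\Up(\vp,\vc)\big)_{\vc\in\widetilde{\cC}}$ it induces, and the principal's expected utility $\Up(\gvec,\vc)=\sum_{\vp}\gamma_{\vp}\,\Up(\vp,\vc)$ is \emph{linear} in the distribution $\gvec$. Hence the vector achieved by any $\gvec$ lies in the convex hull of $\big\{(\Up(\vp,\vc))_{\vc\in\widetilde{\cC}}:\vp\in\mcP\big\}\subseteq\mathbb{R}^{|\widetilde{\cC}|}$, and by Carath\'eodory's theorem the vector attained by a regret-minimizing randomized contract is a convex combination of at most $|\widetilde{\cC}|+1$ deterministic contracts. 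One may therefore assume $W\le|\widetilde{\cC}|+1$, which is constant whenever $|\widetilde{\cC}|$ is, so that $g:\widetilde{\cC}\times[W]\to A$ ranges over at most $|A|^{|\widetilde{\cC}|(|\widetilde{\cC}|+1)}$ profiles.

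Finally I would argue correctness. For completeness, fix an optimal $\Gamma^\star\in\cX$; by the bounds above it can be written with the prescribed $K$ and $W$, and it induces a profile $(t^\star,g^\star)$ visited by the enumeration, so $\Gamma^\star$ is feasible for Problem~\eqref{pr:general} with $(t^\star,g^\star)$ and there Objective~\eqref{obj} equals $\mcR(\Gamma^\star)=\inf_{\Gamma\in\cX}\mcR(\Gamma)$ (regret measured over $\widetilde{\cC}$). For soundness, for every profile $(t,g)$ for which Problem~\eqref{pr:general} is feasible, Constraints~\eqref{cons} certify that selecting $t(\vc)$ and playing $g(\vc,\cdot)$ is incentive compatible, so the returned contract is genuine and its \emph{true} regret over $\widetilde{\cC}$ is at most the computed objective---the principal-favorable tie-breaking can only improve it. Combining the two directions, the contract minimizing the computed objective over the polynomially many profiles has true regret exactly $\inf_{\Gamma\in\cX}\mcR(\Gamma)$. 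Since $K,W,|\widetilde{\cC}|$ are constant, each call to the algorithm of Theorem~\ref{thm:computeGeneral} runs in polynomial time and there are polynomially many of them, so the whole procedure is polynomial in the instance size, as claimed.
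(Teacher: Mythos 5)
Your proposal is correct and, for the classes $\mcP$ and $\mcM$, essentially identical to the paper's proof: $t$ is trivial (respectively, the identity on a direct menu with $K=|\widetilde{\cC}|$ and $W=1$, justified by the same revelation-style argument), and you enumerate the $|A|^{|\widetilde{\cC}|}$ candidate best-response maps $g$, solving Problem~\eqref{pr:general} via Theorem~\ref{thm:computeGeneral} for each. For $\mcPr$, however, you take a genuinely different route. The paper performs \emph{no} enumeration for this class: it observes that two support contracts inducing the same tuple of best responses (one per $\vc \in \widetilde{\cC}$) can be merged by averaging---incentive constraints being linear in payments---so a regret-minimizing randomized contract may be assumed to have support indexed by the $|A|^{|\widetilde{\cC}|}$ action profiles, with $g$ fixed \emph{a priori} to assign each support element a distinct profile; a single call with $W=|A|^{|\widetilde{\cC}|}$ then suffices. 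You instead exploit the linearity of $\Up(\gvec,\vc)$ in $\gvec$ (valid precisely because $K=1$, so no menu-selection stage interferes) together with Carath\'eodory's theorem in $\mathbb{R}^{|\widetilde{\cC}|}$ to bound the support by $W \le |\widetilde{\cC}|+1$, paying with an enumeration of $|A|^{|\widetilde{\cC}|(|\widetilde{\cC}|+1)}$ maps $g$. Both routes are polynomial when $|\widetilde{\cC}|$ is constant; your support bound is stronger and independent of $|A|$ (of some independent interest), while the paper's trick avoids enumeration for $\mcPr$ altogether and needs only one LP.

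One wrinkle you should patch: your completeness step ``fix an optimal $\Gamma^\star \in \cX$'' presupposes that a regret minimizer in $\mcPr$ over $\widetilde{\cC}$ exists, which is not known a priori---Proposition~\ref{prop:nonexist_rand} shows that randomization can make the infimum unattainable, and in the paper existence over finite cost sets is itself obtained only as a byproduct of the LP in Theorem~\ref{thm:computeGeneral}, so assuming it up front risks circularity. The repair is immediate with your own tools: Carath\'eodory applies to an \emph{arbitrary} $\gvec$, not just an optimal one, so restricting to supports of size at most $|\widetilde{\cC}|+1$ leaves the infimum of the regret over $\widetilde{\cC}$ unchanged; since each enumerated instance of Problem~\eqref{pr:general} attains its minimum and your soundness direction shows the true regret of each returned contract is at most the computed objective, the best value over profiles both equals and attains that infimum.
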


Differently from \Cref{thm:randomized}, in which functions $t$ and $g$ can be efficiently determined by relying on direct menus, Theorem~\ref{thm:others} has to adopt a ``brute force approach''.
%
Indeed, it needs to enumerate all the possible functions $t$ and $g$, and, thus, it needs a set $\widetilde \cC$ of constant size. 

Theorem~\ref{thm:others}, together with Observation~\ref{obs:finite} and Lemmas~\ref{lem_cover_one}~and~\ref{lem_cover_two}, yields the following results.
In particular, there exists an exact algorithm when the size of $\cC$ is constant, a \textnormal{\textsf{PTAS}} for single-dimensional instances, and a \textnormal{\textsf{PTAS}} for $L_p$-uncertain instances when $n$ is constant.

\begin{corollary}
	Given a finite PAPU instance, there exists an algorithm that computes a contract $\widetilde{\Gamma} \in \cX$ of class $\cX$ such that $\mcR(\widetilde{\Gamma}) = \inf_{\Gamma \in \cX} \mcR(\Gamma) $ in time polynomial in the size of the PAPU instance, when the size of the set $\cC$ is constant.
	%
\end{corollary}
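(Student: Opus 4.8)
The plan is to obtain this corollary as an immediate instantiation of \Cref{thm:others}, taking the finite subset $\widetilde{\cC}$ to be the entire uncertainty set $\mcC$. The key observation enabling this is \Cref{obs:finite}: in a finite PAPU instance the uncertainty set $\mcC$ contains only finitely many cost vectors, so $\mcC$ is a (trivial) $0$-cover of itself, and in particular it is a legitimate finite subset $\widetilde{\cC} \subseteq \mcC$ to feed into \Cref{thm:others}.

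First I would set $\widetilde{\cC} := \mcC$ and check the precondition of \Cref{thm:others}, namely that $|\widetilde{\cC}|$ is constant. This is exactly the hypothesis of the corollary, which assumes $|\mcC|$ is constant. Hence \Cref{thm:others} applies to each of the classes $\mcP$, $\mcPr$, and $\mcM$, producing in time polynomial in the size of the PAPU instance a contract $\widetilde{\Gamma} \in \cX$ whose regret against all the cost vectors in $\widetilde{\cC}$ is minimal within $\cX$.

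The only point requiring a moment's care is confirming that minimizing the regret against $\widetilde{\cC}$ coincides with minimizing the genuine regret $\mcR(\cdot) = \sup_{\vc \in \mcC}\{\OPT(\vc) - \Up(\cdot,\vc)\}$. But since $\widetilde{\cC} = \mcC$, the supremum defining the regret ranges over exactly the set considered by the algorithm, so there is no gap between the restricted regret and the true one. Equivalently, this is the $\epsilon = 0$ case of the general template (\Cref{alg:template}): feeding the $0$-cover of \Cref{obs:finite} into \Cref{thm:regret_template} gives an additive error of $2\sqrt{2\cdot 0} = 0$, so the \texttt{Linearize} step is vacuous and the output is exactly regret-minimizing. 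Combining these observations yields $\mcR(\widetilde{\Gamma}) = \inf_{\Gamma \in \cX}\mcR(\Gamma)$ in polynomial time, as claimed. I expect no substantive obstacle here: the entire content is the matching of hypotheses, since all the heavy lifting—the efficient solution of the fixed-behavior problem and the brute-force enumeration over the finitely many behavior functions $t$ and $g$—is already carried out inside \Cref{thm:others}.
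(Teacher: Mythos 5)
Your proposal is correct and is exactly the paper's intended derivation: the paper obtains this corollary by combining Theorem~\ref{thm:others} (instantiated with $\widetilde{\cC} = \cC$, which is legitimate since Observation~\ref{obs:finite} gives that $\cC$ is a finite $0$-cover of itself) with the hypothesis that $|\cC|$ is constant, so that the restricted regret over $\widetilde{\cC}$ coincides with the true regret $\mcR(\cdot)$. Your additional framing via the template with $\epsilon = 0$ (vacuous \texttt{Linearize}, zero additive error) is a harmless restatement of the same matching of hypotheses and does not change the argument.
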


\begin{corollary}
	Given a single-dimensional PAPU instance and any $\epsilon \in (0,  1]$, there exists an algorithm that computes a contract $\widetilde{\Gamma} \in \cX$ of class $\cX$ such that $\mcR(\widetilde{\Gamma}) \leq \inf_{\Gamma \in \cX} \mcR(\Gamma) + 2 \sqrt{2 \epsilon}$ in time polynomial in the size of the PAPU instance, when the parameter $\epsilon$ is constant.
	%
\end{corollary}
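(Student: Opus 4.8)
The plan is to instantiate the general template of \Cref{alg:template} for the single-dimensional instance at hand, plugging in the two problem-specific subroutines whose existence is guaranteed by the preceding results. Concretely, I would implement \texttt{Build-Cover}$(\mcC,\epsilon)$ by invoking \Cref{lem_cover_one}, which returns a finite $\epsilon$-cover $\mcC_\epsilon$ of $\mcC$ of cardinality $O(\nicefrac 1 \epsilon)$, and I would implement \texttt{Compute-RM-Contract}$(\cX,\mcC_\epsilon)$ via \Cref{thm:others}, which for every class $\cX \in \{\mcP,\mcPr,\mcM\}$ computes a contract in $\cX$ minimizing the regret against the finitely many cost vectors in $\mcC_\epsilon$. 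With these two subroutines in place, the output $\widetilde{\Gamma}$ of \Cref{alg:template} is a contract of class $\cX$, and the whole argument reduces to transferring the regret guarantee of the template to this concrete instantiation and then checking that the composite running time is polynomial.

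For the regret bound, I would simply appeal to \Cref{thm:regret_template}, which states that the contract returned by \Cref{alg:template} satisfies
\[
	\mcR(\widetilde{\Gamma}) \le \inf_{\Gamma' \in \cX}\sup_{\vc \in \mcC}\big\{\OPT(\vc)-\Up(\Gamma',\vc)\big\} + 2\sqrt{2\epsilon}.
\]
Since by \Cref{def:regret} the inner supremum is exactly $\mcR(\Gamma')$, the right-hand side equals $\inf_{\Gamma' \in \cX}\mcR(\Gamma') + 2\sqrt{2\epsilon}$, which is precisely the claimed approximation guarantee. This step is essentially immediate once the two subroutines are recognized as valid implementations of \texttt{Build-Cover} and \texttt{Compute-RM-Contract} in the sense of \Cref{def:eps_cover} and \Cref{def:apxAlgo}, respectively.

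The running-time analysis is where the care lies, and it is what forces the hypothesis that $\epsilon$ be constant rather than yielding a fully polynomial scheme. By \Cref{lem_cover_one}, building $\mcC_\epsilon$ takes time polynomial in the instance size and $\nicefrac 1 \epsilon$, and the cover has $O(\nicefrac 1 \epsilon)$ elements. The subtlety is that \Cref{thm:others} runs in time polynomial in the instance size \emph{only when} $|\widetilde{\cC}|$ \emph{is constant}: unlike the menus-of-randomized-contracts case of \Cref{thm:randomized}, where a revelation-principle argument fixes the agents' behavior, here the subroutine must enumerate all behavior functions $t$ and $g$, whose number is exponential in $|\widetilde{\cC}|$. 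Hence I would fix $\epsilon$ to a constant, so that $|\mcC_\epsilon| = O(\nicefrac 1 \epsilon) = O(1)$, at which point \texttt{Compute-RM-Contract} runs in polynomial time and the overall template---by the second part of \Cref{thm:regret_template}---runs in time polynomial in the instance size. The main obstacle is thus not any new argument but correctly tracking this dependence on the cover size, which pins the result down as a \textsf{PTAS}, namely one polynomial-time algorithm per fixed accuracy $\epsilon$, rather than an \textsf{FPTAS}.
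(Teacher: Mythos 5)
Your proposal is correct and follows exactly the route the paper intends: instantiate \Cref{alg:template} with the $O(\nicefrac 1 \epsilon)$-size cover from \Cref{lem_cover_one} as \texttt{Build-Cover} and \Cref{thm:others} as \texttt{Compute-RM-Contract}, then invoke \Cref{thm:regret_template} for the $2\sqrt{2\epsilon}$ guarantee. Your explanation of why $\epsilon$ must be constant---the brute-force enumeration of the behavior functions $t$ and $g$ in \Cref{thm:others} is exponential in the cover size, yielding a \textsf{PTAS} rather than an \textsf{FPTAS}---matches the paper's own remark contrasting \Cref{thm:others} with \Cref{thm:randomized}.
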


\begin{corollary}
	Given an $L_p$-uncertain PAPU instance and any $\epsilon \in (0,  1 ]$, there exists an algorithm that computes a contract $\widetilde{\Gamma} \in \cX$ such that $\mcR(\widetilde{\Gamma}) = \inf_{\Gamma \in \cX} \mcR(\Gamma) + 2 \sqrt{2 \epsilon}$ in time polynomial in the size of the PAPU instance, when both $n$ and $\epsilon$ are constant.
	%
\end{corollary}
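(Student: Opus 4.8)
The plan is to obtain this corollary as a direct instantiation of the general template (Algorithm~\ref{alg:template}) whose correctness and running time are guaranteed by Theorem~\ref{thm:regret_template}, using the $L_p$-specific cover construction of Lemma~\ref{lem_cover_two} and the exact finite-set routine of Theorem~\ref{thm:others} to fill in the two unspecified subroutines. Concretely, I would run Algorithm~\ref{alg:template} on the given $L_p$-uncertain instance, the class $\cX$ (one of $\mcP$, $\mcPr$, $\mcM$), and the parameter $\epsilon$, and then simply invoke Theorem~\ref{thm:regret_template} to read off both the regret bound and the polynomial running time, provided both subroutines admit polynomial-time implementations.

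First I would implement $\texttt{Build-Cover}(\mcC,\epsilon)$ by appealing to Lemma~\ref{lem_cover_two}, which produces a finite $\epsilon$-cover $\mcC_\epsilon$ of the $L_p$-ball uncertainty set with $|\mcC_\epsilon| = O((\nicefrac{1}{\epsilon})^n)$. The crucial observation underpinning the complexity claim is that, once both $n$ and $\epsilon$ are treated as fixed constants, this cardinality bound collapses to $O(1)$: the cover has constant size and is built in constant time. Next I would implement $\texttt{Compute-RM-Contract}(\cX,\mcC_\epsilon)$ via Theorem~\ref{thm:others}, taking $\widetilde{\mcC} = \mcC_\epsilon$. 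Since $\cX \in \{\mcP, \mcPr, \mcM\}$ and $\widetilde{\mcC}$ has constant size, Theorem~\ref{thm:others} returns a contract exactly minimizing the regret against the cost vectors in $\mcC_\epsilon$, in time polynomial in the instance size; this is precisely the behaviour demanded by Definition~\ref{def:apxAlgo}.

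With both subroutines running in polynomial time, Theorem~\ref{thm:regret_template} yields that Algorithm~\ref{alg:template} outputs a contract $\widetilde{\Gamma} \in \cX$ satisfying
\[
\mcR(\widetilde{\Gamma}) \le \inf_{\Gamma \in \cX} \mcR(\Gamma) + 2\sqrt{2\epsilon}
\]
in time polynomial in the size of the PAPU instance, which is exactly the assertion of the corollary. I expect no genuine mathematical obstacle here, since every component is already in place; the only point requiring care is the running-time bookkeeping. Specifically, one must check that the constant-size hypothesis of Theorem~\ref{thm:others} is actually met, and this holds precisely because $n$ and $\epsilon$ are both constant, forcing $(\nicefrac{1}{\epsilon})^n = O(1)$ and hence $|\mcC_\epsilon| = O(1)$. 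Absent either constancy assumption the cover would be super-polynomially large (for non-constant $n$) or the enumeration of agents' behaviour internal to Theorem~\ref{thm:others} would blow up, so the restriction to constant $n$ and $\epsilon$ is exactly what allows the template's two subroutines to inherit polynomial running time.
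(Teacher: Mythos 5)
Your proposal is correct and takes essentially the same route as the paper, which states this corollary as an immediate consequence of Theorem~\ref{thm:others} combined with Lemma~\ref{lem_cover_two}, plugged as the two subroutines into the template of \Cref{alg:template} whose guarantees are given by Theorem~\ref{thm:regret_template} --- precisely your instantiation, including the key bookkeeping that $|\mcC_\epsilon| = O((\nicefrac{1}{\epsilon})^n) = O(1)$ under constant $n$ and $\epsilon$, which is what licenses invoking Theorem~\ref{thm:others}. The only remark worth adding is that the ``$=$'' in the corollary's statement is evidently a typo for ``$\le$'' (compare the neighboring corollaries), and your argument correctly establishes the intended inequality $\mcR(\widetilde{\Gamma}) \le \inf_{\Gamma \in \cX} \mcR(\Gamma) + 2\sqrt{2\epsilon}$.
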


\section*{Acknowledgements}
This paper is supported by the Italian MIUR PRIN 2022 Project “Targeted Learning Dynamics: Computing Efficient and Fair Equilibria through No-Regret Algorithms”, by the FAIR (Future Artificial Intelligence Research) project, funded by the NextGenerationEU program within the PNRR-PE-AI scheme (M4C2, Investment 1.3, Line on Artificial Intelligence), and by the EU Horizon project ELIAS (European Lighthouse of AI for Sustainability, No. 101120237).

\bibliographystyle{ACM-Reference-Format}
\bibliography{biblio}

\newpage
\appendix
\section*{Appendix}

The Appendix is organized as follows:
\begin{itemize}
	\item Appendix~\ref{sec:app_existence} provides the proofs of all the results in Section~\ref{sec:existence}.
	\item Appendix~\ref{sec:app_worst} provides the proofs of all the results in Section~\ref{sec:worstCase}.
	\item Appendix~\ref{sec:app_robust} provides the proofs of all the results in Section~\ref{sec:robustness}.
	\item Appendix~\ref{app:beyond} provides the proofs of all the results in Section~\ref{sec:beyond}.
	\item Appendix~\ref{sec:app_template} provides the proofs of all the results in Section~\ref{sec:apply_template}.
\end{itemize}

\section{Proofs Omitted From Section~\ref{sec:existence}}\label{sec:app_existence}

\theoremExistenceOne*

\begin{proof}
	For ease of presentation, let us consider the function $g : \mcP \times \mcC \to \mathbb{R}$, defined as follows:
	\[
	g(\vp, \vc):=\OPT(\vc)-\Up(\vp,\vc) \quad \forall \vp \in \mcP, \forall \vc \in \mcC.
	\]
	For every $\vc \in \mcC$, the function $\vp\mapsto g(\vp,\vc)$ obtained by fixing $\vc$ in $g$ is \emph{lower semi-continuous} (LSC). Indeed, the first term $\OPT(\vc)$ is independent of $\vp$, and, for every $\vc \in \mcC$, it is easy to see that the function $\vp\mapsto \Up(\vp,\vc)$ is upper semi-continuous, since ties are broken in favor of the principal.
	
	%
	Since the supremum of any collection of LSC functions is LSC, the function $\vp\mapsto\sup_{\vc\in\cC}g(\vp,\vc)$ is LSC.
	Then, by definition of regret, we immediately get that the function $\vp\mapsto\reg(\vp)$ is LSC.

	Notice that, if the set of deterministic contracts $\mcM$ were compact, the result obtained above would immediately imply the existence of a menu of deterministic contracts $\Pi^\star \in \mcM$ of size $K$ attaining regret $\mcR(\Pi^\star) = \inf_{\Pi \in \mcM: |\Pi| = K} \mcR(\Pi)$.
	However, the set $\Pi \in \mcM: |\Pi| = K$ is \emph{not} compact.
	Intuitively, this is because, in general, the deterministic contracts involved in a menu may place arbitrarily large payments on some outcomes, making the set $\Pi \in \mcM: |\Pi| = K$ unbounded.
	%
	
	Thus, in order to prove the theorem, we need to employ a different (less immediate) argument.
	In particular, we exploit the fact that a menu of deterministic contracts $\Pi^\star \in \mcM$ of size $K$ such that $\mcR(\Pi^\star)=\inf_{\Pi\in \mcM: |\Pi|=K} \mcR(\Pi)$ always belongs to a bounded subset of $\Pi \in \mcM: |\Pi| = K$.
	%

	W.l.o.g., let us assume that the PAPU instance is such that every outcome $\omega \in \Omega$ can be realized with strictly positive probability by at least one agent's action (otherwise the outcome can be simply removed from the instance).
	Then, it is easy to see that there exists a payment threshold $P > 0$ such that, if the principal commits to a deterministic contract $\vp \in \mcP$ placing a payment $p_\omega$ greater than $P$ on some outcome $\omega \in \Omega$, then principal's expected utility $\Up(\vp,\vc)$ is negative for every $\vc \in \mcC$.

	Thus, a menu $\Pi^\star \in \mcM$ of size $K$ such that $\mcR(\Pi^\star)=\inf_{\Pi\in \mcM: |\Pi|=K} \mcR(\Pi)$ must only use payments bounded by the threshold $P$ defined above.
	%
	%
	This immediately implies that we can restrict the regret-minimization problem over a compact subset of $\Pi \in \mcM: |\Pi| = K$, concluding the proof.
	%
	%
	%
	%
	%
	%
\end{proof}

\propositionExistenceOne*

\begin{proof}
	We show that, given any menu of deterministic contracts $\Pi\in \mcM$, we can build a menu $\Pi' \in \mcM$ of size $n$ with at most the same regret. This is sufficient to prove the proposition.
	
	Take any menu of deterministic contracts $\Pi = \left\{ \vp^1, \ldots, \vp^K \right\}\in \mcM$ of size $K \in \mathbb{N}_{>0}$.
	For ease of notation, for every $\vc \in \mcC$, let $\vp^{\vc} \in \Pi$ be the deterministic contract selected by an agent with cost vector $\vc$, \emph{i.e.}, one maximizing their expected utility (with ties broken in favor of the principal).

	Fix an action $a\in A$, and let $\cC_a := \{ \vc\in \cC \mid a^\star(\vp^{\vc},\cvec) = a \}$, \emph{i.e.}, the set of cost vectors such that the agent best responds by playing action $a$ under the chosen contract.
	%
	Then, since each agent with cost vector $\vc \in \mcC_a$ selects contract $\vp^{\vc}$ and plays action $a$ as best response, it is easy to see that there must exist an expected payment value $P_a > 0$ such that
	%
	$  \sum_ {\omega \in \Omega} F_{a,\omega} \, p^{\cvec}_\omega = P_a $ for all $ \cvec \in \cC_a$.
	%
	Hence, all the agents with cost vectors in $\cC_a$ receive the same expected payment and are indifferent among all the contracts $\vp^{\vc}$ with $\vc \in \mcC_a$.
	It is easy to check that, by keeping only one of such contracts (and removing all the others from $\Pi$), we obtain a new menu of deterministic contracts $\Pi' \in \mcM$ with the same principal's expected utility, namely $\Up(\Pi,\vc) = \Up(\Pi',\vc)$ for all $\vc \in \mcC$.
	%

	By applying the reasoning above for all the $n$ agent's actions $a \in A$, we obtain a menu $\Pi' \in \mcM$ of size $n$ such that $\Up(\Pi,\vc) = \Up(\pi',\vc)$ for every $\vc \in \mcC$.
	%
	By definition of regret, we immediately get to the result, concluding the proof.
\end{proof}


\propositionExistenceTwo*

\begin{proof}
	Let us consider the PAPU instance introduced in Definition~\ref{def:nonexist_inst}.
	We split the proof into two parts. First, we construct a sequence of randomized contracts $\left\{ \gvec^k \right\}_{k \in \mathbb{N}_{>0}}$, with $\gvec^k \in \mcPr$, having supports such that $|\supp(\gvec^k) | = k$ and regrets $\mcR(\gvec^k)$ such that $\lim_{k\to+\infty}\mcR(\gvec^k)=\nicefrac 1 e$. Then, we prove that, for every randomized contract $\gvec \in \mcPr$ with finite support, the regret $\mcR(\gvec)$ is strictly more than $\nicefrac 1 e$.
	This shows that there is no $\gvec \in \mcPr$ whose regret achieves the value $\inf_{\gvec \in \mcPr} \mcR(\gvec)$.
	
	\subsection*{Part I: \textnormal{\emph{Sequence of randomized contracts $\left\{ \gvec^k \right\}_{k \in \mathbb{N}_{>0}}$ with $\lim_{k\to+\infty}\mcR(\gvec^k)=\nicefrac 1 e$.}}} 
	
	%
	We construct a sequence of randomized contracts $\left\{ \gvec^k \right\}_{k \in \mathbb{N}_{>0}}$, with $\gvec^k \in \mcPr$, having supports such that $|\supp(\gvec^k) | = k$ and regrets that decrease as $k \to +\infty$. In particular, we show that the limit as $k \to +\infty$ of the regret is $\lim_{k\to+\infty}\mcR(\gvec^k)=\nicefrac 1 e$, by proving that $\mcR(\gvec^k)\le \nicefrac 1 e+O(\nicefrac 1 k)$ for all $k \in \mathbb{N}_{>0}$.
	
	First, let us define the elements $\gvec^k$ of the sequence.
	For all $k\in\mathbb{N}_{>0}$, we consider the randomized contract $\gvec^k \in \mcPr$ with support $\supp(\gvec^k) := \left\{ \vp^k_1,\ldots, \vp_k^k \right\}$, where:
	\[
	\vp_i^k=(0, \alpha_i^k):=\left(0, \left(\frac i{k}-\frac{1}{2k}\right)\left(1-\frac1e\right)\right).
	\]
	Moreover, each contract $\vp_i^k$ has probability:
	\[
	\gamma^k_{\vp_i^k} :=\int_{\alpha_i^k-\Delta_k}^{\alpha_i^k+\Delta_k}\frac{1}{1-x}dx=\ln \left(\frac{1-\alpha_i^k+\Delta_k}{1-\alpha_i^k-\Delta_k}\right),
	\]
	where, for ease of notation, we let $\Delta_k=\frac{1}{2k}\left(1-\frac1e\right)$.
	
	Notice that $\gvec^k$ is a well-defined probability distribution. Indeed, we have that:
	\begin{align*}
		\sum\limits_{i=1}^k\gamma^k_{\vp_i^k}&=\sum\limits_{i=1}^k\ln \left(\frac{1-\alpha_i^k+\Delta_k}{1-\alpha_i^k-\Delta_k}\right)\\
		&=\ln\left(\prod\limits_{i=1}^k\frac{1-\alpha_i^k+\Delta_k}{1-\alpha_i^k-\Delta_k}\right)\\
		&=\ln\left(\prod\limits_{i=1}^{k-1}\frac{1-\alpha_{i}^k+\Delta_k}{1-\alpha_{i+1}^k+\Delta_k} \cdot \frac{1-\alpha_k^k+\Delta_k}{1-\alpha_k^k-\Delta_k}\right)\\
		&=\ln\left(\frac{1-\alpha_1^k+\Delta_k}{1-\alpha_k^k-\Delta_k}\right)\\
		&=\ln(e)\\&=1,
	\end{align*}
	where in the third equality we used the fact that $\alpha_{i+1}^k=\alpha_{i}^k+2\Delta_k$ by construction, while the fourth equality is obtained by unfolding the product.
	
	Now, let us observe that, for the considered instance, any randomized contract $\gvec^k\in\mcPr$ satisfies:
	\[
	\OPT(\vc)-\Up(\gvec^k,\vc)=1-c_{a_2}-\sum\limits_{i \in [k]:\alpha_i^k\ge c_{a_2}}\gamma^k_{\vp_i^k}(1-\alpha_i^k) \quad \forall \vc \in \mcC,
	\]
	where $c_{a_2}$ is the unknown cost of action $a_2$. 
	In particular, the quantity above linearly decreases in $c_{a_2}$ until $c_{a_2}$ reaches value $\alpha_i^k$.
	At that point, the quantity above jumps down by $J_i^k\coloneqq\gamma^k_{\vp_i^k}(1-\alpha_i^k)$.
	Moreover, for $c_{a_2}=0$, the quantity above has value $E^k_0:=\sum_{i=1}^k \alpha_i^k \, \gamma^k_{\vp_i^k}$.
	Thus, we can write:
	\begin{equation}\label{eq:lem_decomposition}
		\OPT(\vc)-U(\gvec_k,\vc)=E_0^k+\sum\limits_{i \in [k]:\alpha_i^k< c_{a_2}}J_i^k-c_{a_2}.
	\end{equation}
	
	In order to conclude the first part of the proof, we need to prove the following two auxiliary results, which hold for all $k \in \mathbb{N}_{>0}$:
	\begin{itemize}
		\item $E^k_0\le\frac1e$; and
		\item $J_i^k \le 2\Delta_k+O\left( \frac 1 {k^2} \right)$ for all $i \in [k]$.
	\end{itemize}
	%
	
	
	%
	%
	%
	Then, we upper bound the value of $E_0^k$:
	\begin{align*}
		E_0^k&\coloneqq\sum\limits_{i=1}^k \alpha_i^k \, \gamma^k_{\vp_i^k} \\
		&=\sum\limits_{i=1}^k \alpha_i^k\int\limits_{\alpha_i^k-\Delta_k}^{\alpha_i^k+\Delta_k}\frac{1}{1-x}dx\\
		&\le\sum\limits_{i=1}^k \int\limits_{\alpha_i^k-\Delta_k}^{\alpha_i^k+\Delta_k}\frac{x}{1-x}dx\\
		&\le \int\limits_{0}^{1-\nicefrac 1 e}\frac{x}{1-x}dx\\
		&=\frac 1 e.
	\end{align*}
	
	%
	Next, we provide an upper bound on the value of $J_i^k$ for all $i \in [k]$:
	\begin{align*}
		J_i^k&\coloneqq\gamma^k_{\vp_i^k}(1-\alpha_i^k)\\
		&=\frac{e(1+2k)-1-2i(e-1)}{2ek}\ln\left(1+\frac{e-1}{ek-i(e-1)}\right)\\
		&\le \frac{e(1+2k)-1-2i(e-1)}{2ek}\frac{e-1}{ek-i(e-1)}\frac{1}{\sqrt{\frac{e-1}{ek-i(e-1)}+1}}\\
		&\le \Delta_k \frac{2k+e-1}{\sqrt{k(k-1+e)}}\\
		&\le \Delta_k\left(2+\frac{1}{k^2}\right),
	\end{align*}
	where, in the first inequality, we used that $\ln(1+x)\le \frac{x}{\sqrt{1+x}}$, which hold for all $x>0$.

	Now, we are ready to conclude the first part of the proof.

	By combining $E_0^k\le \nicefrac 1 e$ and $J_i^k\le \Delta_k\left(2+\nicefrac{1}{k^2}\right)$ for all $i \in [k]$, we show that the function defined as $\vc\mapsto \OPT(\vc)-\Up(\gvec^k, \vc)$ has value at most $\nicefrac 1 e+O(\Delta_k)$.
	We prove this by induction. Let us take any $\cvec \in \mcC$. Then, there exists an $m \in [k-1]$ such that $c_{a_2}\in(\alpha_m^k, \alpha_{m+1}^k]$ and, for every other $\cvec^{m} \in \mcC$ such that $c^{m}_{a_2}\in(\alpha_m^k, \alpha_{m+1}^k]$, the following holds:
	\[
	 \OPT(\vc^m)-\Up(\gvec^k, \vc^m)\le \OPT(\vc^k_m)-\Up(\gvec^k, \vc^k_{m+1})\coloneqq F_m,
	\]
	where we denote by $\vc^k_m \in \mcC$ the cost vector in which the cost of action $a_2$ is equal to $\alpha^k_{m}$.
	We work by induction on $F_m$. 
	First, let us observe that $c_{a_2}^{m}-c_{a_2}^{m-1}=2\Delta_k$. Thus:
	\begin{align*}
		F_m &= F_{m-1}+J_m^k-2\Delta_k\\
		&\le F_{m-1}+\frac{\Delta_k}{k^2}.
	\end{align*}
	Moreover, as base case of the induction, we use that:
	\begin{align*}
		F_1&=E_0^k+J_1^k-2\Delta_k\\
		&\le \frac1e+\frac{\Delta_k}{k^2}.
	\end{align*}
	In conclusion, we have that:
	\begin{align*}
		\sup\limits_{\vc \in \mcC} \Big\{\OPT(\vc)-\Up(\gvec^k, \vc) \Big\}&\le \sup_{m\in[k]}F_m\\
		&\le\frac1e+\frac{m}{k^2}\Delta_k\\
		&<\frac1e+\Delta_k.
	\end{align*}
	
	This proves that $\lim_{k\to\infty} \mcR(\gvec^k)\le \frac 1 e$.
		\begin{figure}[t]
		\begin{center}
			\tikzset{every picture/.style={line width=0.75pt}} 

\begin{tikzpicture}[x=0.75pt,y=0.75pt,yscale=-1,xscale=1]
	
	\draw    (190,290) -- (230,290) ;
	\draw [shift={(230,290)}, rotate = 180] [color={rgb, 255:red, 0; green, 0; blue, 0 }  ][line width=0.75]    (0,2.24) -- (0,-2.24)   ;
	\draw    (230,290) -- (290,290) ;
	\draw [shift={(290,290)}, rotate = 180] [color={rgb, 255:red, 0; green, 0; blue, 0 }  ][line width=0.75]    (0,2.24) -- (0,-2.24)   ;
	\draw    (290,290) -- (330,290) ;
	\draw [shift={(330,290)}, rotate = 180] [color={rgb, 255:red, 0; green, 0; blue, 0 }  ][line width=0.75]    (0,2.24) -- (0,-2.24)   ;
	\draw    (330,290) -- (367,290) ;
	\draw [shift={(370,290)}, rotate = 180] [fill={rgb, 255:red, 0; green, 0; blue, 0 }  ][line width=0.08]  [draw opacity=0] (8.93,-4.29) -- (0,0) -- (8.93,4.29) -- cycle    ;
	\draw    (190,290) -- (190,250) ;
	\draw [shift={(190,250)}, rotate = 90] [color={rgb, 255:red, 0; green, 0; blue, 0 }  ][line width=0.75]    (0,2.24) -- (0,-2.24)   ;
	\draw    (190,250) -- (190,210) ;
	\draw    (190,210) -- (190,190) ;
	\draw [shift={(190,190)}, rotate = 90] [color={rgb, 255:red, 0; green, 0; blue, 0 }  ][line width=0.75]    (0,2.24) -- (0,-2.24)   ;
	\draw    (190,190) -- (190,153) ;
	\draw [shift={(190,150)}, rotate = 90] [fill={rgb, 255:red, 0; green, 0; blue, 0 }  ][line width=0.08]  [draw opacity=0] (8.93,-4.29) -- (0,0) -- (8.93,4.29) -- cycle    ;
	\draw    (190,250) -- (230,290) ;
	\draw [shift={(230,290)}, rotate = 45] [color={rgb, 255:red, 0; green, 0; blue, 0 }  ][fill={rgb, 255:red, 0; green, 0; blue, 0 }  ][line width=0.75]      (0, 0) circle [x radius= 2.34, y radius= 2.34]   ;
	\draw    (230.95,190.95) -- (330,290) ;
	\draw [shift={(230,190)}, rotate = 45] [color={rgb, 255:red, 0; green, 0; blue, 0 }  ][line width=0.75]      (0, 0) circle [x radius= 2.34, y radius= 2.34]   ;
	\draw  [dash pattern={on 0.84pt off 2.51pt}]  (190,190) -- (230,190) ;
	\draw  [dash pattern={on 0.84pt off 2.51pt}]  (230,290) -- (230,190) ;
	
	\draw (189,251) node [anchor=east] [inner sep=0.75pt]  [font=\normalsize]  {$p$};
	\draw (232,293.4) node [anchor=north west][inner sep=0.75pt]    {$p$};
	\draw (188,193.4) node [anchor=north east] [inner sep=0.75pt]    {$1-p$};
	\draw (290,293.4) node [anchor=north] [inner sep=0.75pt]    {$1-\frac{1}{e}$};
	\draw (332,293.4) node [anchor=north west][inner sep=0.75pt]    {$1$};
	\draw (372,293.4) node [anchor=north west][inner sep=0.75pt]    {$c_{a_2}$};
	\draw (190,146.6) node [anchor=south] [inner sep=0.75pt]    {$\OPT(\vc)-\Up(p,\vc)$};

\end{tikzpicture}
		\end{center}
		\caption{Behavior of $\OPT(p)-\Up(p,\vc)$ as a function of $c_{a_2}$ in the instance described in \Cref{def:nonexist_inst}.}
		\label{fig:OPtp}
	\end{figure}
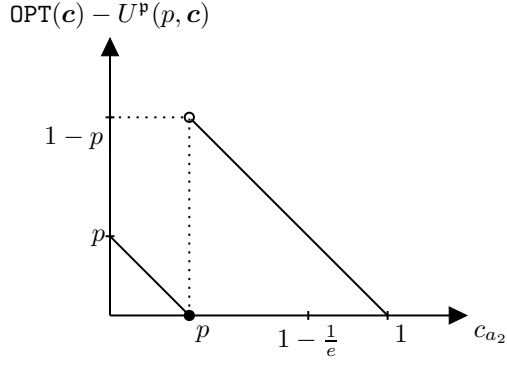
	\subsection*{Part II: \textnormal{\emph{Lower bound on the regret of randomized contracts with finite support.}}} 
	
	
	We show that all randomized contracts $\gvec \in \mcPr$ achieve regret $\mcR(\gvec)$ strictly greater than $\nicefrac 1 e$.
	
	First, notice that we can focus w.l.o.g. on randomized contracts that only include deterministic contracts placing a positive payment only on outcome $\omega_2$, since, as it is easy to see, using strictly positive payments on outcome $\omega_1$ is always sub-optimal.
	Thus, in the rest of the proof, we denote any deterministic contract $\vp \in \mcP$ as $p \in \mathbb{R}_+$, with the convention that $p_{\omega_1} = 0$ and $p_{\omega_2} = p$.
	
	
	As a first step, we show that there exists a probability distribution $\cD$ over $\mcC$ such that, for every $\gvec \in \mcPr$, it holds $\mathbb{E}_{\vc\sim \cD} \left[ \OPT(\vc)-\Up(\gvec,\vc) \right] \geq \frac 1 e$.
	Notice that the problem of minimizing the expectation above over all randomized contract can be cast as an instance of \emph{Bayesian} principal-agent problem, and, thus, the minimum value is achieved by means of deterministic contract $p \in \mathbb{R}_+$~\citep{castiglioni2023designing}.
	Hence, we can equivalently show that $\mathbb{E}_{\vc\sim \cD}[\OPT(p)-\Up(p,\vc)]\geq \frac  1 e$.
	%
	
	We design $\cD$ by following an approach similar to that employed by~\citet{babichenko2022regret} in related Bayesian persuasion settings.
	In particular, the cost vector $(0,0)$ is assigned probability~$\frac{1}{e}$, while the density function on the other cost vectors is such that, for every $\vc = (0,c_{a_2}) \in \mcC$ with $\vc \neq (0,0)$, it holds $\cD(\vc)=\frac{1}{e (1-c_{a_2})^2}$ whenever $c_{a_2} \in (0,1-\nicefrac 1 e]$, and $\cD(\vc)=0$ otherwise.
	Then, given a deterministic contract $p \in [0,1-\nicefrac 1 e]$, simple calculations show that $\mathbb{E}_{\vc\sim \cD}[\OPT(p)-\Up(p,\vc)]$ is:
	\begin{align*}
	\frac{p}{e}  + \int_{0}^{p} (p-c)\frac{1}{e(1-c)^2}  dc + \int_{p}^{1-1/e} (1-c)\frac{1}{e(1-c)^2}  dc & = \frac pe-\frac{p+\ln(1-p)}{e}+\frac{1+\ln(1-p)}{e}\\
	&=\frac 1 e.
\end{align*}
Moreover, for every deterministic contract $p>1-\nicefrac 1 e$, similar calculations allow us to show that $\mathbb{E}_{\vc\sim \cD}[\OPT(p)-\Up(p,\vc)]$ can be upper bounded as follows:
	\begin{align*}
	\frac{p}{e}  + \int_{0}^{1-1/e} (p-c)\frac{1}{e(1-c)^2}  dc & = \frac pe+\frac{2-e(1-p)-p}{e} = \frac 2e -1+p \ge\frac 1 e.
\end{align*}

	\Cref{fig:OPtp} illustrates the behavior of $\vc\mapsto\OPT(\vc)-\Up(\vp,\vc)$ for deterministic contracts $\vp$.
	
	As a result, we have that, for every $\vgamma \in \mcPr$, it holds $\mathbb{E}_{\vc\sim\cD}[\OPT(\vc)-\Up(\gvec,\vc)]\ge \frac 1 e$.
	This implies that either $\sup_{\vc \in \supp(\cD)} \left\{ \OPT(\vc)-\Up(\gvec,\vc) \right\}> \frac 1 e$ or $\OPT(\vc)-\Up(\gvec,\vc)= \frac 1 e$ for every $\vc \in \supp(\cD)$, where $\supp(\cD)$ is the support of the probability distribution $\cD$.
	Thus, in order to conclude the proof, it is sufficient to show that the second case never holds.
	To do that, we show that the function $\vc \mapsto \OPT(\vc)-\Up(\gvec,\vc)$ is \emph{not} constant over the interval $[0,1-\nicefrac 1 e]=\supp(\cD)$.
	This holds thanks to the fact that $\vc \mapsto \OPT(\vc)$ is a linear decreasing function, while the function $\vc \mapsto \Up(\gvec,\vc)$ is piece-wise constant.
	Indeed, such a function does \emph{not} change between any two cost vectors $\vc, \vc' \in \mcC$ such that $\supp(\gvec)$ does \emph{not} include any contract with $p_{\omega_2}\in [c_{a_2},c_{a_2}']$.
	This concludes the proof.
	%
	%
	%
\end{proof}
%
%
%
%

\propositionExistenceFour*

\begin{proof}
	Let us consider the PAPU instance introduced in Definition~\ref{def:nonexist_inst}.
	
	Let $\gvec \in \mcPr$ be a simple randomized contract that puts probability mass $\gamma_{\vp^1}=\gamma_{\vp^2}=\frac 1 2$ on two contracts $\vp^1=(0,\nicefrac 1 4)$ and $\vp^2=(0, \nicefrac 3 4)$.
	In the considered instance, for every $\vc \in \mcC$, it holds:
	\begin{align*}
		\OPT(\vc)-\Up(\gvec, \vc)&=1-c_{a_2}-\sum\limits_{i\in\{1,2\}:p^i_{\omega_2}\ge c_{a_2}}\gamma_{\vp^i} \left(1-p^i_{\omega_2} \right)\\
		&=1-c_{a_2}-\frac{1}{2}\sum\limits_{i\in\{1,2\}:p^i_{\omega_2}\ge c_{a_2}} \left( 1-p^i_{\omega_2} \right)\\
		&=\frac 1 2-c_{a_2}+\begin{cases}
			0&\text{if }c_{a_2}\in[0, \nicefrac 1 4] \\
			\frac 3 8&\text{if }c_{a_2}\in(\nicefrac 1 4,\nicefrac 3 4]\\
			\frac 1 2&\text{if }c_{a_2}\in(\nicefrac 3 4,1].
		\end{cases}
	\end{align*}
	
	Next, we show that the function obtained above does \emph{not} admit a maximum.
	Let us consider the cost vectors $\vc_\epsilon=(0,\nicefrac 1 4+\epsilon)$ for $\epsilon > 0$. We have that:
	\[
	\OPT(\vc_\epsilon)-\Up(\gvec, \vc_\epsilon) = \frac 5 8-\epsilon,
	\]
	for all $\epsilon>0$. This proves that
	\[\sup\limits_{\vc\in\cC} \Big\{ \OPT(\vc)-\Up(\gvec, \vc) \Big\} \ge \frac 5 8.\]
	
	Moreover, it holds $\OPT(\vc)-\Up(\gvec,\vc)< \frac 5 8$ for all $\vc=(0,c_{a_2})\in\cC$, concluding the proof.
\end{proof}
\section{Proofs Omitted from Section~\ref{sec:worstCase}}\label{sec:app_worst}


\lemmaWorstUnoRaw*

\begin{proof}
	%
	In the following, for ease of notation, let $\Delta c \coloneqq \| \cvec - \tilde \cvec \|_\infty$, which is the maximum absolute difference between the cost values $c_a$ and $\tilde c_a$ of an action $a \in A$.
	
	First, let $a^\star := a^\star(\pvec,\cvec)$ be the best response of an agent with cost vector $\vc$ under contract $\vp$.
	Then, we prove that $a^\star$ provides an agent with cost vector $\tilde \vc$ an expected utility that is at most $2 \Delta c$ worse than that of a best response.
	To see that, we notice that the following holds for every $a'\in A$:
	\begin{align}
		\sum_{\omega\in\Omega}F_{a^\star,\omega} p_\omega-\tilde c_{a^\star}&\ge\sum_{\omega\in\Omega}F_{a^\star,\omega}p_\omega- c_{a^\star} - \Delta c\nonumber\\
		&\ge \sum_{\omega\in\Omega}F_{a',\omega}p_\omega- c_{a'} - \Delta c\nonumber\\
		&\ge \sum_{\omega\in\Omega}F_{a',\omega}p_\omega-\tilde c_{a'} - 2 \Delta c, \label{eq:deltabr}
	\end{align}
	which proves that $\Ua (\pvec, \tilde{\cvec}, a^\star) \geq \Ua (\pvec, \tilde{\cvec}, a') - 2 \Delta c$ for all $a' \in A$.
	
	Now, let us consider the modified contract $\tilde \vp := \vp + \alpha (\rvec - \vp)$, which is obtained by adding a fraction $\alpha$ of principal's utilities to the payments defined by $\vp$.
	Formally, for every $\omega \in \Omega$:
	\begin{align*}
		\tilde p_\omega&:=p_\omega+\alpha (r_\omega-p_\omega)=(1-\alpha) p_\omega+\alpha r_\omega.
	\end{align*}	
	%
	Next, we show that, independently of the action chosen by an agent with cost vector $\tilde{\cvec}$, principal's expected utility is at least $\Up (\vp,\cvec) - \left( \frac{2 \Delta c}{\alpha} + \alpha \right)$.
 	First, notice that, for any $a \in A$ played by the agent, principal's expected utility under contract $\tilde\vp$ is a fraction $(1-\alpha)$ of the one under $\vp$. Formally:
	\begin{equation}\label{eq:scalingutils}
		\Up (\tilde \pvec, \tilde \cvec) = \sum_{\omega\in\Omega}F_{a,\omega} \left( r_\omega-\tilde p_\omega \right)=(1-\alpha)	\sum_{\omega\in\Omega}F_{a,\omega} \left( r_\omega-p_\omega \right).
	\end{equation}
	%
	Let $a := a^\star(\tilde \vp, \tilde \cvec)$ be the action played by an agent with cost vector $\tilde \vc$ under contract $\tilde \vp$.
	Then,
	\begin{align*}
		\sum_{\omega\in\Omega}F_{a,\omega}\tilde p_\omega-\tilde c_{a} & \ge \sum_{\omega\in\Omega}F_{a^\star,\omega}\tilde p_\omega-\tilde c_{a^\star}\tag{Action $a$ is a best response}\\
		&\ge(1-\alpha) \sum_{\omega\in\Omega}F_{a^\star,\omega}p_\omega+\alpha \sum_{\omega\in\Omega}F_{a^\star,\omega}r_\omega-\tilde c_{a^\star}\tag{Definition of $\tilde \vp$}\\
		%
		%
		&=\sum_{\omega\in\Omega}F_{a^\star,\omega}p_\omega-\tilde c_{a^\star}+\alpha \sum_{\omega\in\Omega}F_{a^\star,\omega} \left( r_\omega-p_\omega \right)\\
		&=\sum_{\omega\in\Omega}F_{a^\star,\omega}p_\omega-\tilde c_{a^\star}+\alpha \Up(\vp, \vc)\\
		&\ge \sum_{\omega\in\Omega}F_{a,\omega}p_\omega-\tilde c_{a}+\alpha \Up(\vp, \vc)-2 \Delta c\tag{\Cref{eq:deltabr}}.
	\end{align*}
	By using the definition of $\tilde \vp$ and the last inequality above, we obtain:
	\[
	(1-\alpha)\sum_{\omega\in\Omega}F_{a,\omega}p_\omega+\alpha\sum_{\omega\in\Omega}F_{a,\omega}r_\omega-\tilde c_{a}\ge \sum_{\omega\in\Omega}F_{a,\omega}p_\omega-\tilde c_{a}+\alpha \Up(\vp, \vc)-2 \Delta c,
	\]
	which simplifies to:
	\[
	\sum_{\omega\in\Omega}F_{a,\omega} \left( r_\omega-p_\omega \right) \ge \Up(\vp, \vc)-\frac{2 \Delta c}{\alpha}.
	\]
	Then, by using \Cref{eq:scalingutils}, we obtain the following:
	\[
	\Up(\tilde\vp,\tilde \vc)=(1-\alpha)\sum_{\omega\in\Omega}F_{a,\omega} \left( r_\omega-p_\omega \right)\ge (1-\alpha)\left(\Up(\vp, \vc)-\frac{2 \Delta c}{\alpha}\right)\ge \Up(\vp, \vc)-\left(\frac{2 \Delta c}{\alpha}+\alpha\right).
	\]
	Finally, by using the definition of $\Delta c$, we get to the result.
	%
\end{proof}

\lemmaWorstUno*

\begin{proof}
	The result immediately follows by applying Lemma~\ref{lem:delta_opt_raw} for $\vp = \vp^\star$ and $\alpha = \sqrt{2 d (\mcC)}$, by observing that $\Up (\vp^\star, \vc) = \OPT(\vc)$ by definition.
	Notice that the value of $\alpha$ is suitably selected in order to minimize the additive loss $\frac{2 \| \vc -\tilde \vc\|_\infty}{\alpha}+\alpha$ in the statement of Lemma~\ref{lem:delta_opt_raw}.
\end{proof}

\lemmaWorstDue*

\begin{proof}
	Let $p^\star \in \mcP$ be an optimal contract for the principal against an agent with cost vector $\vc$, \emph{i.e.}, one satisfying $ \Up(\vp^\star,\vc)=\OPT(\cvec)$.
	By applying \Cref{lem:delta_opt}, we have that the modified contract $\tilde \vp := \vp^\star + \sqrt{2 d(\mcC)} (\rvec - \vp^\star)$ satisfies the following:
	\[
	\Up(\tilde \vp,\tilde\vc)\ge \OPT(\vc)-2\sqrt{2 d(\mcC)}.
	\]
	Moreover, $\OPT(\tilde\vc)\ge \Up(\tilde \vp,\tilde\vc)$. Thus:
	\[
	\OPT(\tilde\vc)\ge\OPT(\vc)-2 \sqrt{2 d(\mcC)}.
	\]
	Finally, by an analogous (symmetric) argument. it clearly holds that:
	\[
	\OPT( \vc)\ge\OPT(\tilde \vc)-2 \sqrt{2 d(\mcC)},
	\]
	which proves the result.
\end{proof}

\theoremWorst*

\begin{proof}
	Let $\vc \in \mcC$ be any cost vector in the uncertainty set $\cC$ and $p^\star \in \mcP$ be an optimal contract for the principal against an agent with cost vector $ \vc$, \emph{i.e.}, one such that $ \Up(\vp^\star, \vc)=\OPT( \vc)$.
	Moreover, let $\tilde \vp := \vp^\star + \sqrt{2 d(\mcC)} (\rvec - \vp^\star)$ be the contract defined in \cref{lem:delta_opt}.
	%
	Then,
	\begin{align*}
		 \mcR(\tilde \vp) = \sup_{\tilde \cvec \in \mcC} \Big\{ \OPT(\tilde\cvec) - \Up(\tilde \vp,\tilde\cvec) \Big\} & \le  \sup_{ \tilde\cvec \in \mcC} \left\{ \OPT( \tilde\cvec) -  \Up(\vp^\star,  \vc)+2 \sqrt{2 d(\mcC)}\right\}  \\
		 &= \sup_{ \tilde\cvec \in \mcC} \left\{ \OPT(\tilde\cvec) - \OPT( \cvec) +2\sqrt{2 d(\mcC)}\right\}  \\
		 &\le  4\sqrt{2 d(\mcC)},  
	\end{align*}
	where the first inequality follows from \cref{lem:delta_opt} and the last one from \cref{lem:holder_opt}. 
	Thus, it clearly holds that $\inf_{\vp \in \mcP} \mcR(\vp)\le  4 \sqrt{2 d(\mcC)}$, proving the result.
\end{proof}

\propostionWorst*

\begin{proof}
	Consider one of the PAPU instances in Definition~\ref{def:hard}.
	%
	Notice that, since there are only two possible cost vectors, namely $(0,0)$ and $(\delta,0)$, by a revelation-principle-style argument we can focus w.l.o.g.~on menus specifying only two randomized contracts.
	Moreover, it is easy to see that the principal is always better off by committing to menus that include contracts placing a positive payment on outcome $\omega_1$ only.
	%
	Let $\Gamma = \left\{ \gvec^1, \gvec^2 \right\}$ be one of such menus, where $\gvec^1$ is the randomized contract selected by an agent with cost vector $(0,0)$ and $\gvec^2$ is the one chosen by an agent with cost vector $(\delta,0)$.
	Then, the expected utility that an agent with cost vector $(0,0)$ obtains by selecting $\gvec^1$ is equal to $\mathbb{E}_{\vp  \sim \gvec^1}[ p_{\omega_1} ]$, since playing action $a_1$ is a best response for every $\vp \in \mcP$ (as $p_{\omega_2} = 0$).
	%
	Moreover, the expected utility that the agent would get by selecting $\gvec^2$ and playing $a_1$ is $\mathbb{E}_{\vp\sim \gvec^2}[ p_{\omega_1} ]$.
	%
	Since an agent with cost vector $(0,0)$ is incentivized to select the randomized contract $\gvec^1$, rather than $\gvec^2$, it must be the case that $ \mathbb{E}_{\vp  \sim \gvec^1}[ p_{\omega_1} ]\ge \mathbb{E}_{\vp\sim \gvec^2}[ p_{\omega_1} ]$.
	Moreover, principal's expected reward is $1$ independently of the randomized contract selected by the agent.
	%
	Hence, by removing $\gvec^1$ from the menu, principal's expected utility can only increase.
	Thus, it is possible to focus w.l.o.g.~on the case in which the principal commits to a single randomized contract.

	

	Now, let us consider two different sets of contracts, namely $\cP_1$ and $\cP_2$, where
	\[
	\cP_1:=\left\{\vp \in \mcP \mid p_{\omega_1}\ge \delta/\alpha \wedge p_{\omega_2} = 0 \right\}
	\] 
	and 
	\[
	\cP_2:= \left\{ \vp \in \mcP \mid p_{\omega_1}< \delta/\alpha \wedge p_{\omega_2} = 0\right\}.
	\]
	It is easy to see that, under every contract $\vp\in\cP_1$, action $a_1$ is a best response for the agent, no matter their cost vector. Thus, for all $\vc\in\cC$, it holds that:
	\[
	\Up(\vp,\vc)=1-p_{\omega_1}\le \Up(\vp^1,\vc),
	\]
	where we let $\vp^1 := (\delta/\alpha, 0)$.
	Thus, the principal is always better off using $\vp^1$ rather than any other contract in $\mcP_1$.
	Instead, under every contract $\vp\in\cP_2$, an agent with cost vector $(\delta,0)$ plays action $a_2$ as best response, resulting in principal's expected utility $\Up(\vp, (\delta,0)) = (1-\alpha)(1-p_{\omega_1})$.
	Thus, against an agent with cost vector $(\delta,0)$, the principal is always better off using $\vp^2 :=(0,0)$ rather than any other contract in $\cP_2$. 
	Moreover, an agent with cost vector $(0,0)$ plays action $a_1$ as best response under every $\vp\in\cP_2$, so that $\vp_2$ is the best possible choice among contracts in $\mcP_2$ also in this case.
	As a result, we can focus w.l.o.g.~on randomized contracts $\gvec \in \mcPr$ that only have the contracts $\vp^1$ and $\vp^2$ defined above in their supports $\supp(\gvec)$.
	%

	Given the analysis above, it clearly holds that $\OPT(\cvec) = 1$ for $\cvec = (0,0)$, since the principal can incentivize the agent to play action $a_1$ while paying zero to them, by committing to contract $\vp^2$.
	Moreover, $\OPT(\cvec) = \max \left\{ 1-\alpha,1-\delta/\alpha \right\}$ for $\cvec = (\delta,0)$, since the principal can either commit to $\vp^1$, with the agent playing $a_1$ resulting in principal's expected utility $1- \delta/\alpha$, or commit to $\vp^2$, with the agent playing $a_2$ and expected utility $1-\alpha$.
	Thus, the value of $ \OPT(\vc)-\Up(\vp,\vc) $ for $\vp \in \left\{ \vp^1, \vp^2 \right\}$ and $\cvec \in \mcC$ is defined as in the following table:
	\begin{center}
		\begin{tabular}{c|c|c|}
			& $\cvec = (0,0)$         & $\cvec=(\delta,0)$                                          \\ \hline
			$\vp^1=(\delta/\alpha,0)$                 & $0$& $\alpha-\delta/\alpha$     \\ \hline
			$\vp^2=(0,0)$ & $\delta / \alpha$                     & $0$            \\ \hline
		\end{tabular}
	\end{center}

	Let us fix $\gvec \in \mcPr$ with $\supp(\gvec) = \left\{ \vp^1, \vp^2 \right\}$.
	Then, we can write the following:
	\[
		\mcR(\gvec) = \sup_{\cvec \in \mcC} \left\{ \OPT(\vc)-\Up(\gvec,\vc)  \right\} = \sup_{\cvec \in \mcC} \left\{ \gamma_{\pvec^1}  \Big( \OPT(\vc)- \Up(\vp^1,\cvec) \Big) + \gamma_{\pvec^2} \Big( \OPT(\vc)- \Up(\vp^2,\cvec) \Big) \right\} .
	\]
	Since $\mcC = \{ (0,0), (\delta,0) \}$, by plugging in the values defined in the table above we get:
	\[
		\mcR(\gvec) = \max \left\{\gamma_{\pvec^2} \frac{\delta}{\alpha} , \gamma_{\pvec^1} \left( \alpha- \frac\delta\alpha \right)  \right\} = \max \left\{ \left( 1-\gamma_{\pvec^1} \right) \sqrt{\frac{\delta}{2}} , \gamma_{\pvec^1} \sqrt{\frac{\delta}{2}}  \right\},
	\]
	where the last equality follows from $\gamma_{\pvec^2} = 1-\gamma_{\pvec^1}$ and $\alpha = \sqrt{2\delta}$.
	
	In conclusion, since the minimum possible value of $\mcR(\gvec)$ is $ \sqrt{2\delta}/4$, we get that the regret attained by randomized contracts in the PAPU instances in Definition~\ref{def:hard} is at least $\sqrt{2\delta}/4$.
\end{proof}

\lemmaWorstTre*

\begin{proof}
	Point~(i) follows from the convexity of $\cC$. Indeed, given a cost vector $\vc \in \cC_\delta(\vc_0)$, it holds that $\vc= \vc_0 (1-\delta) + \vc' \delta$ for some $\vc'\in \cC$. 
	Consider the cost vector $\hat \vc := \frac{\delta'-\delta}{\delta'}\cvec_0 +\frac{\delta}{\delta'}\vc'$. By convexity of $\cC$, it must be the case that $\hat \vc \in \cC$.
	Then, the following holds: \[(1-\delta') \vc_0 + \delta' \hat \vc=(1-\delta') \vc_0 + \delta' \left( \frac{\delta'-\delta}{\delta'}\cvec_0 +\frac{\delta}{\delta'}\vc' \right)=(1-\delta) \vc_0+ \delta \vc'=\vc, \]
	which shows that $\vc$ belongs to $\cC_{\delta'}(\vc_0)$. This concludes the first part of the proof.
	
	To prove point~(ii), we notice that the $L_{\infty}$-diameter of the set $\mcC_{\delta'}(\vc_0)$ is
	\begin{align*}
		\sup_{c,c'\in \mcC_\delta(\vc_0)} \lVert \vc-\vc'\rVert_{\infty} & = \sup_{c,c'\in \mcC} \lVert (1-\delta)\vc_0 +\delta \vc- (1-\delta)\vc_0 - \delta \vc' \rVert_{\infty}\\
		& \le \delta \sup_{c,c'\in \mcC} \lVert \vc-\vc' \rVert_{\infty}\\
		&=\delta d(\mcC) .
	\end{align*}
	This concludes the proof.
\end{proof}

\theoremWorstDue*

\begin{proof}
	%
	Let $\pi:\cC_{\delta+h}(\vc_0) \to \cC_\delta(\vc_0)$ be a function such that $\|\vc-\pi(\vc)\|_\infty\le h$ for all $\vc\in\cC_{\delta+h}(\vc_0)$.
	Such a function is guaranteed to exist.
	Indeed, for every cost vector $\vc \in \mcC_{\delta+h}(\vc_0)$, by definition of the set $\cC_{\delta+h}(\vc_0)$ it holds that $\vc= (\delta+h) \vc' +\vc_0(1-\delta-h) $ for some $\vc' \in \mcC$.
	Thus, one possible option is to take $\pi(\vc) = \delta \vc' + (1-\delta) \vc_0$, so that $\pi(\vc) \in \cC_{\delta}(\vc_0)$ and $\|\vc-\pi(\vc)\|_\infty\le h$.

	%
	In order to prove the first `$\leq$' inequality in the statement of the theorem, we notice that:
	%
	\begin{align*}
		\inf_{\vp \in \mcP} \mcR_{\delta, \vc_0}(\pvec)&=\inf_{\vp \in \mcP} \sup\limits_{\vc \in\cC_\delta(\vc_0)} \left\{ \OPT(\vc)-\Up(\pvec,\vc) \right\}\\
		&\le \inf_{\vp \in \mcP} \sup\limits_{\vc \in\cC_{\delta+h}(\vc_0)} \left\{ \OPT(\vc)-\Up(\pvec,\vc) \right\} \\
		& = \inf_{\vp \in \mcP} \mcR_{\delta+h, \vc_0}(\pvec),
	\end{align*}
	where the inequality follows from the fact that $\cC_\delta(\vc_0)\subseteq \cC_{\delta+h}(\vc_0)$, by point~(i) of Lemma~\ref{lem:scaling}.

	In order to prove the second `$\leq$' inequality in the statement of the theorem, we let $\vp^\star \in \mcP$ be a regret-minimizing contract for the PAPU instance in which the uncertainty set is $\mcC_\delta(\vc_0)$, namely, one that satisfies $\mcR_{\delta,\vc_0}(\pvec^\star) = \inf_{\vp \in \mcP} \mcR_{\delta,\vc_0}(\vp)$.
	Notice that such a $\vp^\star$ is guaranteed to exist by Corollary~\ref{cor:det_exist}.
	Then, by letting $\tilde \vp :=\vp^\star+\sqrt{2h}(\rvec-\vp^\star)$, we can prove the following:
	\begin{align*}
	\inf_{\vp \in \mcP} \mcR_{\delta+h, \vc_0}(\pvec) &= \inf_{\vp \in \mcP} \sup\limits_{\vc\in\cC_{\delta+h}(\vc_0)} \left\{ \OPT(\vc)-\Up(\vp,\vc) \right\}\\
	&\le \sup\limits_{\vc\in\cC_{\delta+h}(\vc_0)} \left\{  \OPT(\vc)-\Up(\tilde \vp,\vc) \right\}\\
	&\le \sup\limits_{\vc\in\cC_{\delta+h}(\vc_0)} \left\{ \OPT(\vc)-\Up(\vp^{\star},\pi(\vc))+2\sqrt{2h} \right\}\\
	& = \sup\limits_{\vc\in\cC_{\delta+h}(\vc_0)} \left\{ \OPT(\pi(\vc))-\Up(\vp^{\star},\pi(\vc))+ \OPT(\vc)- \OPT(\pi(\vc))+2\sqrt{2h} \right\}\\
	& \le \sup\limits_{\vc\in\cC_{\delta+h}(\vc_0)} \left\{ \OPT(\pi(\vc))-\Up(\vp^{\star},\pi(\vc))+4\sqrt{2h} \right\}\\
	& \le \sup\limits_{\vc\in\cC_{\delta}(\vc_0)} \left\{ \OPT(\vc)-\Up(\vp^{\star},\vc)+4\sqrt{2h} \right\} \\
	&= 	\inf_{\vp \in \mcP} \mcR_{\delta, \vc_0}(\pvec)+4\sqrt{2h},
	\end{align*}
	where the second inequality follows from \cref{lem:delta_opt_raw} with $\alpha = \sqrt{2h}$ and $\vp = \vp^\star$, the third one from \cref{lem:holder_opt}, the last inequality holds since $\mcC_\delta(\vc_0) \subseteq \mcC_{\delta+h}(\vc_0)$ and $\pi(\vc) \in \mcC_\delta(\vc_0)$, while the last equality follows from the definition of $\vp^\star$.
	%
	%
\end{proof}
\section{Proofs Omitted from Section~\ref{sec:robustness}}\label{sec:app_robust}

\propositionCompareOne*

\begin{proof}
	Consider the family of PAPU instances $\mathcal{I}_\delta := (A,\Omega,\rvec,\mcC,\Fvec)$ parametrized by $\delta \in [0,1]$ described in the following. The agent has five actions, namely $A:= \left\{ a_1, a_2, a_3, a_4,  a_5 \right\}$. 
	There are three outcomes, namely $\Omega := \left\{ \omega_1, \omega_2, \omega_3\right\} $.
	The reward of the principal for outcomes $\omega_1$ and $\omega_2$ is $1$, namely $r_{\omega_1} = r_{\omega_2} = 1$, while the reward of outcome $\omega_3$ is $r_{\omega_3} = 0$.
	The probability matrix $\Fvec$ is such that agent's actions induce outcomes as follows:
	\begin{itemize}
		\item Action $a_1$ induces outcome $\omega_1$ with probability $1$.
		\item Action $a_2$ induces outcome $\omega_1$ with probability $1-\sqrt{\delta}$ and outcome $\omega_3$ with probability $\sqrt{\delta}$.
		\item Action  $a_3$ induces outcome $\omega_2$ with probability $1$.
		\item Action $a_4$ induces outcome $\omega_2$ with probability $1-\sqrt{\delta}$ and outcome $\omega_3$ with probability $\sqrt{\delta}$.
		\item Action $a_5$ induces outcomes $\omega_1,\omega_2$ with probability $\frac{1}{2}-\frac{\sqrt{\delta}}{4}$, and $\omega_3$ with probability $\frac{\sqrt{\delta}}{2}$. 
	\end{itemize}
	Moreover, by letting $c_2 := c_4 :=\frac{\sqrt{\delta}}{10}$ and $c_5 :=0$, the uncertainty set $\cC := \left\{ \vc^1,\vc^2,\vc^3 \right\}$ only contains three cost vectors, which are defined as follows:
	\begin{itemize}
		\item$\vc^1:=\left(\frac{\sqrt{\delta}}{10},c_2,\frac{\sqrt{\delta}}{10}+\delta,c_4,c_5 \right)$.
		\item$ \vc^2:=\left(\frac{\sqrt{\delta}}{10}+\delta,c_2,\frac{\sqrt{\delta}}{10},c_4,c_5 \right)$.
		\item $ \vc^3:= \left(\frac{\sqrt{\delta}}{10}+\delta,c_2,\frac{\sqrt{\delta}}{10}+\delta,c_4,c_5 \right)$.
	\end{itemize}
	Notice that the uncertainty level of the instances is $d(\mcC) = \delta$.
	
	%
	Before analyzing the regret of the two classes of contracts, we compute $\OPT(\vc)$ for every $\vc \in \mcC$.
	
	\paragraph{Value of $\OPT( \vc^1 )$}
	We show that, in an optimal deterministic contract, an agent with cost vector $\vc^1$ plays action $a_1$ as best response.
	First, we observe that, in any contract $\vp \in \mcP$ in which $a_1$ is IC, action $a_1$ must provide the agent with an expected utility greater than or equal to the one achieved by action $a_5$.
	Thus, from $\Ua(\vp,\vc^1,a_1) \geq \Ua(\vp,\vc^1,a_5)$, we can conclude that $$p_{\omega_1} - \frac{\sqrt{\delta}}{10}\ge p_{\omega_1} \left( \frac{1}{2}-\frac{\sqrt{\delta}}{4} \right) + p_{\omega_2} \left( \frac{1}{2}-\frac{\sqrt{\delta}}{4} \right) + p_{\omega_3} \frac{\sqrt{\delta}}{2} \geq p_{\omega_1} \left( \frac{1}{2}-\frac{\sqrt{\delta}}{4} \right) ,$$ which implies that $p_{\omega_1}\ge \frac{\frac{\sqrt{\delta}}{10}}{\frac{1}{2}+\frac{\sqrt{\delta}}{4}}.$
	%
	Now, let us consider a contract $\vp \in \mcP$ with payments defined as:
	$$p_{\omega_1} = \frac{\frac{\sqrt{\delta}}{10}}{\frac{1}{2}+\frac{\sqrt{\delta}}{4}} \quad \text{and} \quad p_{\omega_2} = p_{\omega_3} = 0.$$
	%
	%
	Clearly, under $\vp$, action $a_1$ is better than action $a_5$ for the agent.
	Moreover, it is easy to check that action $a_1$ achieves agent's expected utility greater than the one achieved by actions $a_2$, $a_3$, and $a_4$.
	Thus, the agent plays action $a_1$ as best response, providing the principal with expected utility: $$\Up(\vp,\vc^1)=  1-\frac{\frac{\sqrt{\delta}}{10}}{\frac{1}{2}+\frac{\sqrt{\delta}}{4}}.$$
	%
	%
	Finally, it is easy to prove that all the other agent's actions provide a smaller expected utility to the principal than action $a_1$, independently of the contract.
	This implies that $\vp$ is an optimal contract against an agent with cost vector $\vc^3$, and, thus, the following holds: $$\OPT ( \vc^1 )=1-\frac{\frac{\sqrt{\delta}}{10}}{\frac{1}{2}+\frac{\sqrt{\delta}}{4}}.$$
	
	\paragraph{Value of $\OPT( \vc^2 )$}
	By symmetry of the instance, we have that $\OPT(\vc^2)=\OPT ( \vc^1)=1-\frac{\frac{\sqrt{\delta}}{10}}{\frac{1}{2}+\frac{\sqrt{\delta}}{4}}$.

	\paragraph{Value of $\OPT( \vc^3 )$}
	%
	%
	We show that, under an optimal deterministic contract against an agent with cost vector $\vc^3$, the agent plays action $a_5$ as best response.
	Indeed, since action $a_5$ has cost $c_5 = 0$, the contract $\vp \in \mcP$ assigning zero payment to every outcome induces the agent to play $a_5$ as best response (as it is the action with the lowest cost) and results in $ \Up(\vp,\vc^3)  = 1-\frac{\sqrt{\delta}}{2}$.
	If action $a_1$ is the best response played by the agent under a contract $\vp \in \mcP$, then, since $a_1$ must be IC, it must provide agent's expected utility greater than or equal to the one achieved by $a_2$.
	Hence, $$p_{\omega_1} - \frac{\sqrt{\delta}}{10}-\delta \ge p_{\omega_1} \left( 1-\sqrt{\delta} \right) + p_{\omega_3} \sqrt{\delta}-\frac{\sqrt{\delta}}{10} \geq p_{\omega_1} \left( 1-\sqrt{\delta} \right) -\frac{\sqrt{\delta}}{10},$$
	which implies that $p_{\omega_1}\ge \sqrt{\delta}$, and, thus, $\Up(\vp,\vc^3) \leq 1 - \sqrt{\delta} \leq 1-\frac{\sqrt{\delta}}{2}$. 
	Similar arguments also hold for contracts in which the best response played by the agent is $a_2$, $a_3$, or $a_4$.
	Thus, in an optimal contract against an agent with cost vector $\vc^3$, the agent must plays action $a_5$.
	As a result, the optimal contract is the one assigning zero payment to every outcome, which implies that $$\OPT(\vc^3) = 1-\frac{\sqrt{\delta}}{2}.$$
	%
	
	Now, we are ready to prove the result.
	In particular, we show that, in the family of PAPU instances defined above, there always exists a menu of deterministic contract achieving regret at most $R_\delta := \frac{\sqrt{\delta}}{10}$, while all the randomized contracts attain regret at least $R_\delta + \Omega(\sqrt{\delta})$ as $\delta \to 0$.
	%
	
	\paragraph{Menu of Deterministic Contracts}
	Let us consider a menu $\Pi = \left\{ \vp^1, \vp^2 \right\}$, where the deterministic contracts $\vp^1, \vp^2 \in \mcP$ are defined as follows:
	\[
		p^1_{\omega_1}=\frac{\frac{\sqrt{\delta}}{10}}{\frac{1}{2}-\frac{\sqrt{\delta}}{4}}, p^1_{\omega_2} = p^1_{\omega_3} =0 \quad \text{and} \quad p^2_{\omega_2}=\frac{\frac{\sqrt{\delta}}{10}}{\frac{1}{2}-\frac{\sqrt{\delta}}{4}}, p^2_{\omega_1} = p^2_{\omega_3} = 0.
	\]
	%
	%
	It is easy to check that an agent with cost vector $\vc^1$ is better off selecting $\vp^1$ rather than $\vp^2$, playing action $a_1$ as best response after that.
	Thus, by employing the value of $\OPT(\vc^1)$ computed above and the definition of $p_{\omega_1}$, we have that $\OPT(\vc^1) - \Up(\Pi, \vc^1) = 0$.
	%
	%
	By a symmetric argument, an agent with cost vector $\vc^2$ selects $\vp^2$ and best responds by playing $a_3$, resulting in $\OPT(\vc^2) - \Up(\Pi, \vc^2) = 0$.
	%
	%
	Finally, it is easy to see that an agent with cost vector $\vc^3$ is indifferent between the contracts $\vp^1$ and $\vp^2$, as the agent is always better off playing $a_5$ rather than any other action.
	Thus, principal's expected utility against an agent with cost vector $\vc^3$ satisfies the following relation:
	\[
		\Up(\Pi, \vc^3) = 1- \frac{\sqrt{\delta}}{2}-\frac{\frac{\sqrt{\delta}}{10}}{\frac{1}{2}+\frac{\sqrt{\delta}}{4}} \left( \frac{1}{2}-\frac{\sqrt{\delta}}{4} \right) \ge 1- \frac{\sqrt{\delta}}{2}-\frac{\sqrt{\delta}}{10}.
	\]
	This shows that $\OPT(\vc^3) - \Up(\Pi, \vc^3) \leq \frac{\sqrt{\delta}}{10}$.
	%
	%
	As a result, we have that $\inf_{\Pi \in \mcM} \mcR(\Pi) \leq \frac{\sqrt{\delta}}{10} = R_\delta$.

	\paragraph{Randomized Contracts}
	We show that any randomized contract $\gvec \in \Delta_{\supp(\gvec)}$ attains regret $\mcR(\gvec)$ at least a suitably-defined $\eta$, for sufficiently small values of $\delta$. In particular, $\eta:= \nicefrac{\sqrt{\delta}}{9}$, which is clearly equal to $R_\delta + \Omega(\sqrt{\delta})$ as $\delta \to 0$.
	%
	%
	By contradiction, suppose that $\mcR(\gvec) < \eta$.
	First, we consider the case of an agent with cost vector $\vc^1$, where we let $\mcP_1 := \left\{ \vp \in \mcP  \mid a^\star(\vp, \vc^1) = a_1 \right\}$ be the set of deterministic contracts under which the agent plays action $a_1$ as best response. 
	By the arguments used to compute the value of $\OPT(\vc^1)$, in any contract $\vp \in \mcP_1$, the payment $p_{\omega_{1}}$ must be such that: $$ p_{\omega_{1}} \geq \frac{\frac{\sqrt{\delta}}{10}}{\frac{1}{2}+\frac{\sqrt{\delta}}{4}}.$$
	Then, principal's expected utility $\Up(\gvec,\vc^1) $ is equal to:
	\begin{align*}
		\sum_{\substack{\vp \in \supp(\gvec) :\\ \vp \in \mcP_1}} \gamma_{\vp} \Up(\vp, \vc^1) + \sum_{\substack{\vp \in \supp(\gvec) :\\ \vp \notin \mcP_1}} \gamma_{\vp} \Up(\vp, \vc^1) \leq 
		\sum_{\substack{\vp \in \supp(\gvec) :\\ \vp \in \mcP_1}} \gamma_{\vp} \OPT(\vc^1) + 
		\sum_{\substack{\vp \in \supp(\gvec) :\\ \vp \notin \mcP_1}} \gamma_{\vp} \left( 1- \frac{\sqrt{\delta}}{2} \right),
	\end{align*}
	where the inequality follows from the fact that $\Up(\vp,\vc^1) \leq \OPT(\vc^1)$ for every $\vp \in \mcP_1$ (since the maximum possible principal's expected utility when the agent best responds with $a_1$ is $\OPT(\vc^1)$) and $\Up(\vp,\vc^1) \leq  1- \nicefrac{\sqrt{\delta}}{2} $ for every $\vp \notin \mcP_1$ (since the maximum possible principal's expected utility when the agent plays a best response different from $a_1$ is $1- \nicefrac{\sqrt{\delta}}{2}$).
	Since $\mcR(\gvec) < \eta$ by assumption and $\OPT(\cvec^1) - \Up(\vp,\vc^1) \leq \mcR(\gvec)$ by definition, we have that the following holds:
	\[
		\OPT(\cvec^1)  - \sum_{\substack{\vp \in \supp(\gvec) :\\ \vp \in \mcP_1}} \gamma_{\vp} \OPT(\vc^1)  - \left( 1 - \sum_{\substack{\vp \in \supp(\gvec) :\\ \vp \in \mcP_1}} \gamma_{\vp} \right) \left( 1- \frac{\sqrt{\delta}}{2} \right) < \eta = \frac{\sqrt{\delta}}{9}.
	\]
	%
	By manipulating the inequality above and using the definition of $\OPT(\vc^1)$, we obtain that the probability $\sum_{\vp \in \supp(\gvec) : \vp \in \mcP_1} \gamma_{\vp_1}$ with which an agent with cost vector $\vc^1$ plays action $a_1$ under $\gvec$ is greater than $\frac{17}{27}$.
	This implies that, under $\gvec$: $$p_{\omega_1}\ge \frac{\frac{\sqrt{\delta}}{10}}{\frac{1}{2}+\frac{\sqrt{\delta}}{4}} \quad \text{holds with probability greater than } \frac{17}{27}.$$
	A symmetric argument shows that, in the case of an agent with cost vector $\vc^2$, under $\gvec$:
	$$p_{\omega_2}\ge \frac{\frac{\sqrt{\delta}}{10}}{\frac{1}{2}+\frac{\sqrt{\delta}}{4}} \quad \text{holds with probability greater than }\frac{17}{27}.$$
	Finally, let us consider the case of an agent with cost vector $\vc^3$.
	%
	For ease of presentation, let us define the following three sets of contracts:
	\begin{align*}
		\widetilde{\mcP} := \left\{ \vp \in \supp(\gvec) \mid p_{\omega_1}, p_{\omega_2} \geq \frac{\frac{\sqrt{\delta}}{10}}{\frac{1}{2}+\frac{\sqrt{\delta}}{4}} \right\}, \\
		\widetilde{\mcP}_1 := \left\{ \vp \in \supp(\gvec) \mid p_{\omega_1}\ge \frac{\frac{\sqrt{\delta}}{10}}{\frac{1}{2}+\frac{\sqrt{\delta}}{4}}\right\} \setminus \widetilde{\mcP} ,\\
		\widetilde{\mcP}_2 := \left\{ \vp \in \supp(\gvec) \mid p_{\omega_2} \ge \frac{\frac{\sqrt{\delta}}{10}}{\frac{1}{2}+\frac{\sqrt{\delta}}{4}} \right\}  \setminus \widetilde{\mcP}.
	\end{align*}
	It is easy to see that, for any contract $\vp \in \widetilde{\mcP}$, principals' expected utility is the maximum possible when the agent best responds with action $a_5$ and the payment is the minimum possible on every outcome.
	%
	%
	Then, for every $\vp  \in \widetilde{\mcP}$, it holds that:
	\[  \Up(\vp,\vc^3)\le  1-\frac{\sqrt{\delta}}{2}-2 \left( \frac{1}{2}-\frac{\sqrt{\delta}}{4} \right) \frac{\frac{\sqrt{\delta}}{10}}{\frac{1}{2}+\frac{\sqrt{\delta}}{4}} .\]
	Similarly, for any contract $p \in \widetilde{\mcP}_1$, principals' expected utility is the maximum possible when the agent best responds with $a_5$ and the payment is the minimum possible on every outcome.
	%
	%
	Then, for every $\vp  \in \widetilde{\mcP}_1$, it holds that:
	\[  \Up(\vp,\vc^3)\le  1-\frac{\sqrt{\delta}}{2}-\left( \frac{1}{2}-\frac{\sqrt{\delta}}{4} \right) \frac{\frac{\sqrt{\delta}}{10}}{\frac{1}{2}+\frac{\sqrt{\delta}}{4}} .\]
	By a symmetric argument, for every contract $\vp \in \widetilde{\mcP}_2$, it holds that:
	\[  \Up(\vp,\vc^3)\le 1-\frac{\sqrt{\delta}}{2}-\left( \frac{1}{2}-\frac{\sqrt{\delta}}{4} \right) \frac{\frac{\sqrt{\delta}}{10}}{\frac{1}{2}+\frac{\sqrt{\delta}}{4}} .\]
	Moreover, notice that 
	\[ 2\sum_{\vp \in \widetilde{\mcP}} \gamma_{\vp} + \sum_{\vp \in \widetilde{\mcP}_1\cup \widetilde{\mcP}_2} \gamma_{\vp} \ge \frac{34}{27} . \]
	Thus, the regret $\mcR(\gvec) $, which is at least $ \OPT(\vc^3) - \Up(\gvec,\vc^3)$, satisfies the following relation:
	\begin{align*}
		\mcR(\gvec) & \ge\OPT(\vc^3) -  \sum_{\vp \in \supp(\gvec)\setminus (\widetilde{\mcP}\cup \widetilde{\mcP}_1\cup \widetilde{\mcP}_2)} \gamma_{\vp} \Up(\vp,\vc^3)- \sum_{\vp \in \widetilde{\mcP}} \gamma_{\vp} \Up(\vp,\vc^3) - \sum_{\vp \in \widetilde{\mcP}_1 \cup \widetilde{\mcP}_2}\Up(\vp,\vc^3)\\
		&\geq 1-\frac{\sqrt{\delta}}{2} -  \sum_{\vp \in \supp(\gvec)\setminus (\widetilde{\mcP}\cup\widetilde{ \mcP}_1\cup \widetilde{\mcP}_2)} \gamma_{\vp}\left(1-\frac{\sqrt{\delta}}{2}\right)- \sum_{\vp \in \widetilde{\mcP}} \gamma_{\vp} \left(1-\frac{\sqrt{\delta}}{2}-2 \left(\frac{1}{2}-\frac{\sqrt{\delta}}{4}\right) \frac{\frac{\sqrt{\delta}}{10}}{\frac{1}{2}+\frac{\sqrt{\delta}}{4}}\right) \\
		&\hspace{1.6cm}  -\sum_{\vp \in \widetilde{\mcP}_1 \cup \widetilde{\mcP}_2} \gamma_{\vp} \left(1-\frac{\sqrt{\delta}}{2}- \left(\frac{1}{2}-\frac{\sqrt{\delta}}{4}\right) \frac{\frac{\sqrt{\delta}}{10}}{\frac{1}{2}+\frac{\sqrt{\delta}}{4}}\right) \\
		&= \sum_{\vp \in \widetilde{\mcP}} \gamma_{\vp} 2 \left( \frac{1}{2}-\frac{\sqrt{\delta}}{4} \right) \frac{\frac{\sqrt{\delta}}{10}}{\frac{1}{2}+\frac{\sqrt{\delta}}{4}}   + \sum_{\vp \in \widetilde{\mcP}_1 \cup \widetilde{\mcP}_2} \gamma_{\vp}  \left( \frac{1}{2}-\frac{\sqrt{\delta}}{4} \right) \frac{\frac{\sqrt{\delta}}{10}}{\frac{1}{2}+\frac{\sqrt{\delta}}{4}}\\
		& =  \frac{34}{27} \left(\frac 1 2- \frac {\sqrt{\delta}} 4 \right)  \frac{\frac{\sqrt{\delta}}{10}}{\frac{1}{2}+\frac{\sqrt{\delta}}{4}} \\
		& >  \frac{34}{27} \left( \frac 1 2- \frac 1 {32} \right)  \frac{\frac{\sqrt{\delta}}{10}}{\frac{1}{2}+ \frac 1 {32}} \\
		& \geq  \frac{\sqrt{\delta}}{9} =  \eta
	\end{align*}
	%
	%
	%
	which leads to the desired contradiction for $\delta< \nicefrac 1 {64}$.
\end{proof}

\propositionCompareTwo*

\begin{proof}
	Let us consider the family of PAPU instances $\mathcal{I}_\delta := (A,\Omega,\rvec,\mcC,\Fvec)$ parametrized by $\delta \in [0,\nicefrac{1}{2}]$ introduced in Definition~\ref{def:hard}, where $\alpha=\sqrt{2\delta}$.
	Notice that such instances satisfy $d(\mcC) = \delta$, and, for ease of presentation, let $\vc^1\coloneqq (\delta,0)$ and $\vc^2\coloneqq (0,0)$.
	
	As a first step, let us computes the values of $\OPT(\vc^1)$ and $\OPT(\vc^2)$.
	It is easy to see that, against an agent with cost vector $\vc^1$, an optimal deterministic contract incentivizes the agent to play action $a_1$ as best response, by using payments $p_{\omega_1}=\sqrt{\nicefrac{\delta}{2}}$ and $p_{\omega_{2}} = 0$.
	Thus, it holds $\OPT(\vc^1)=1-\sqrt{\nicefrac{\delta}{2}}$. Moreover, it is immediate to check that $\OPT(\vc_2)=1$.
	
	Next, we show that, for the family of PAPU instances in Definition~\ref{def:hard}, there always exists a randomized contract which attains regret $R_\delta := \sqrt{\nicefrac{\delta}{8}}$, while every menu of deterministic contracts attains regret at least $R_\delta + \Omega(\sqrt{\delta})$ as $\delta \to 0$.
	
	\paragraph{Randomized Contracts}
	Consider a randomized contract $\vgamma \in \Delta_{\supp(\gvec)}$ that uniformly randomizes between two deterministic contracts $\vp^1$ and $\vp^2$ defined as follows.
	The contract $\vp^1$ sets zero payment on every outcome, while the contract $\vp^2$ sets payment $\sqrt{\nicefrac{\delta}{2}}$ on outcome $\omega_1$ and zero payment on outcome $\omega_2$.
	Then, it is easy to see that an agent with cost vector $\vc^1$ plays action $a_1$ as best response if the realized contract is $\vp^2$, while they best respond with action $a_2$ otherwise. Hence, for an agent with cost vector $\vc^1$, the following holds: $$\OPT(\vc^1) - \Up(\gvec,\vc^1) = 1-\sqrt{\frac{\delta}{2}}- \frac{1}{2}\left(1-\sqrt{2\delta} \right)-\frac{1}{2}\left(1-\sqrt{\frac{\delta}{2}} \right)=\sqrt{\frac{\delta}{8}}.$$
	Moreover, it is easy to see that an agent with cost vector $\vc_2$ always plays action $a_1$, resulting in an expected payment of $\sqrt{\nicefrac{\delta}{8}}$. Hence, it holds $\OPT(\vc^2) - \Up(\gvec,\vc^2) =1- (1-\sqrt{\nicefrac{\delta}{8}})=\sqrt{\nicefrac{\delta}{8}}$. As a result, the regret $\mcR(\gvec)$ attained by $\gvec$ is equal to $ \sqrt{\nicefrac{\delta}{8}} = R_\delta$.
	 
	\paragraph{Menus of Deterministic Contracts}
	We show that any menu of deterministic contracts attains regret at least $\sqrt{\nicefrac \delta 4} = R_\delta + \Omega(\sqrt{\delta})$ as $\delta \to 0$.
	By a revelation-principle-style argument, we can focus w.l.o.g. on menus of deterministic contracts $\Pi = \left\{ \vp^1, \vp^2 \right\}$ with only two elements, where $\vp^1$ is the contract selected by an agent with cost vector $\vc^1$ and $\vp^2$ is the one chosen by an agent with cost vector $\vc^2$.
	%
	Suppose by contradiction that $\mcR(\Pi) < \sqrt{\nicefrac \delta 4}$.
	%
	%
	%
	Then, the contract $\vp^1$ must incentivize the agent to play action $a_1$ as best response, otherwise the regret incurred by $\Pi$ would be at least $\sqrt{\nicefrac \delta 4}$. Indeed, if the agent plays action $a_2$, then the resulting regret would be at least $$ 1-\sqrt{\frac \delta 2}-\left(1-\sqrt{2\delta} \right)\ge \sqrt{\frac \delta 4}.$$
	Hence, it must be $p^1_{\omega_1}\ge \sqrt{\nicefrac \delta 2}$.
	However, since an agent with cost vector $\vc^2$ chooses contract $\vp^2$, this implies that the expected payment collected by an agent with cost vector $\vc^2$ in the selected contract is at least $\sqrt{\nicefrac \delta 2}$. Indeed, it must be the case that an agent with cost vector $\vc^2$ collects a larger payment under $\vp^2$ than under $\vp^1$. This implies that the regret term for cost vector $\vc^2$ is at least $ 1-( 1-\sqrt{\nicefrac \delta 2})  = \sqrt{\nicefrac \delta 2}$. Hence, we reach a contradiction, showing that $\mcR(\Pi)\ge \sqrt{\nicefrac \delta 4}$.
	%
	%
\end{proof}

\section{Proofs Omitted from Section~\ref{sec:beyond}} \label{app:beyond}

\theoremBeyondOne*

\begin{proof}
	We provide a reduction from a restricted version of the \emph{exact cover by 3-sets} (\textsf{X3C}) problem. In particular, given a finite set $E\coloneqq\{e_1,\ldots, e_k\}$ of $k$ elements and a finite set $S\coloneqq \left\{s_1,\ldots,s_m\right\}$ of $m$ subsets of $E$ of three elements, namely $s \subseteq E$ and $|s|=3$ for all $s \in S$,  the \textsf{X3C} problem consists in determining if there exists an \emph{exact cover} of $E$, \emph{i.e.}, a collection of subsets from $S$ such that each element of $E$ appears in exactly one of the selected subsets. In the following, we focus on a restricted version of the \textsf{X3C} problem in which each element of $E$ appears in exactly \emph{three} subsets in $S$. Notice that this restricted version of the \textsf{X3C} problem is already known to be \textsf{NP}-hard~\citep{GONZALEZ1985293}.

	\paragraph{Construction}
	Given an instance $(E,S)$ of the \textsf{X3C} problem, we build an instance $(A,\Omega,\rvec,\mcC,\Fvec)$ of the PAPU as described in the following.
	The instance has an outcome $\omega_s$ for every $s \in S$, and two additional outcomes $\omega_0$ and $\omega_1$.
	Formally, $\Omega \coloneqq \left\{ \omega_s \mid s \in S \right\} \cup \left\{  \omega_0,\omega_1\right\}$.
	Principal's reward for outcome $\omega_1$ is $r_{\omega_1} = 1$, while all the other rewards are zero, namely $r_\omega = 0$ for every $\omega \in \Omega \setminus \left\{ \omega_1 \right\}$.
	The set of agent's actions is $A \coloneqq \left\{ a_{s,1}, a_{s,2} \mid s \in S \right\} \cup \left\{ a_{e,1}, a_{e,2} \mid e \in E \right\} \cup \{ \bar a \}$, where:
	\begin{itemize}
		\item The agent has two actions $a_{s,1}$ and $a_{s,2}$ for every $s \in S$. Each action $a_{s,1}$ induces outcomes $\omega_s$ and $\omega_1$, each with probability $\epsilon := \frac{1}{10}$, while it results in outcome $\omega_0$ with the remaining probability of $\frac{8}{10}$. Similarly, each action $a_{s,2}$ induces outcomes $\omega_s$ and $\omega_1$ with probability $\frac \epsilon 2$, while it results in outcome $\omega_0$ with probability $\frac{9}{10}$.
		%
		\item The agent has two actions $a_{e,1}$ and $a_{e,2}$ for every $e \in E$. Each action $a_{e,1}$ induces each outcome $\omega_s$ such that $e \in s$ with probability $\frac 1 4$, while it results in outcome $\omega_1$ with probability~$\frac{1}{4}$. Similarly, each action $a_{e,2}$ induces each outcome $\omega_s$ such that $e \in s$ with probability $\frac 1 8$, while it results in $\omega_1$ with probability $\frac{1}{8}$ and in $\omega_0$ with probability $\frac{1}{2}$.
		%
		\item Action $\bar a$ induces outcome $\omega_1$ with probability $1$.
	\end{itemize}
	Finally, the uncertainty set $\cC \coloneqq \left\{ \vc^e \mid e \in E \right\} \cup \left\{ \vc^s \mid s \in S \right\} \cup \left\{ \bar \vc \right\}$ includes the following cost vectors:
	\begin{itemize}
		\item For every $e \in E$, there is a cost vector $\vc^e \in [0,1]^n$ such that $c^e_{a_{e,1}}=\frac{1}{8}-\frac \epsilon 2$ and $c^e_{e,2}=\frac{1}{16}-\frac \epsilon 2$, while, for all the other actions $a \in A \setminus \{a_{e,1},a_{e,2}\}$, it holds $c^e_a=1$.  
		\item For every $s \in S$, there is a cost vector $\vc^s \in [0,1]^n$ such that $c^s_{a_{s,1}}=\frac \epsilon 4$, and $c^s_{a_{s,2}}=0$, while, for all the other actions $a \in A \setminus \{a_{s,1},a_{s,2}\}$, it holds $c^s_a=1$.
		\item There is a cost vector $\bar \vc \in [0,1]^n$ such that $\bar c_{\bar a}=0$ and $\bar c_a=1$ for all $a \in A \setminus \{ \bar a \}$.  
	\end{itemize}
	
	\subsection*{Part I: \textnormal{\emph{If there exists an exact cover of $E$ in the \textsf{X3C} instance, then its corresponding PAPU instance admits a deterministic contract that achieves zero regret.}}} 
	
	%
	Suppose that there exists an exact cover $S^\star \subseteq S$. Let us consider a contract $\vp^\star \in \mcP$ whose payments are such that $p^\star_{\omega_{s}} =\frac{1}{2}\mathbbm{1}\left\{ s \in S^\star \right\}$ for every $s \in S$, while $p^\star_{\omega_{1}}=p^\star_{\omega_{0}}=0$.
	Next, we show that such a contract achieves the minimum possible regret, namely $\reg (\vp^\star) =0$.
	
	Consider a cost vector $\vc^s$ with $s \in S$. As a first step, we show that $\OPT(\vc^s)\le\frac{\epsilon}{2}$. It is easy to see that, against an agent with cost vector $\vc^s$, action $a_{s,1}$ is the only one that can possibly achieve principal's expected utility greater than $\frac{\epsilon}{2}$. If action $a_{s,1}$ is IC for an agent with cost vector $\vc^s$ under a contract $\vp \in \mcP$, then it must be the case that $\Ua (\vp, \vc^s, a_{s,1}) \geq \Ua (\vp, \vc^s, a_{s,2})$, namely:
		\[    \left( p_{\omega_1}+ p_{\omega_s} \right) \epsilon - \frac \epsilon 4 \ge \left( p_{\omega_1}+ p_{\omega_s} \right) \frac \epsilon 2,   \]
	which implies that 
	$p_{\omega_1}+  p_{\omega_s} \ge \frac{1}{2}$.
	Thus, it must be the case that $\OPT(\vc^s) \le  \epsilon-  \frac{\epsilon}{2} =  \frac \epsilon 2$.

	Next, we show that contract $\vp^\star$ achieves principal's expected utility $\Up(\vp^\star, \vc^s) = \frac \epsilon 2$ against an agent with cost vector $\vc^s$. We consider two cases. In the first one, $s \in S^\star$ and, thus, it holds $p^\star_{\omega_s}=\frac{1}{2}$. Then, it is easy to see that an action $a_{e,1}$ with $e \in s$ is IC, resulting in principal's expected utility $\frac{\epsilon}{2}$. In the second case, $s \in S^\star$, and, thus, it holds $p^\star_{\omega_s}=0$. Then, an action $a_{e,2}$ with $e \in s$ is IC, resulting in principal's expected utility $\frac{\epsilon}{2}$.
	This shows that $\Up(\vp^\star, \vc^s) = \frac \epsilon 2$ for every $s \in S$.
	
	As a result, we can conclude that 
	\[\sup_{\vc \in \{\vc^s \mid s \in S \}} \left\{ \OPT(\vc)- \Up(\vp^\star,\vc) \right\}=0.\]

	Now, let us consider a cost vector $\vc^e$ with $e \in E$. As a first step, we show that $\OPT(\vc^e)\le \frac{1}{8}$. 
	It is easy to see that, against an agent with cost vector $\vc^e$, action $a_{e,1}$ is the only one that can possibly achieve principal's expected utility greater than $\frac{1}{8}$. If action $a_{e,1}$ is IC for an agent with cost vector $\vc^e$ under a contract $\vp \in \cP$, then it holds that
	\begin{equation}\label{eq:inc}
		\left( p_{\omega_1}+ \sum_{s\in S:e \in s} p_{\omega_s} \right) \frac{1}{4} - \left( \frac{1}{8}-\frac \epsilon 2 \right) \ge \left( p_{\omega_1}+ \sum_{s \in S:e \in s} p_{\omega_s} \right) \frac{1}{8}- \left( \frac{1}{16}- \frac \epsilon 2 \right),
	\end{equation}
	which implies that
	\[   \left( p_{\omega_1}+ \sum_{s \in S:e \in s} p_{\omega_s}  \right)\ge \frac{1}{2}.  \]
	Thus, it must be the case that $\OPT(\vc^e)\le \frac{1}{4}-\frac{1}{2} \frac{1}{4}=\frac{1}{8}$.
	Moreover, contract $\vp^\star$ achieves principal's expected utility $\Up(\vp^\star, \vc^e) = \frac 1 8$. It is easy to see that \cref{eq:inc} holds for $\vp^\star$ (recall that $e$ is covered by \emph{exactly one} subset $s \in S^\star$), and, thus, an agent with cost vector $\vc^e$ plays action $a_{e,1}$, resulting in principal's expected utility $\frac 1 8$.
	As a result, we have that 
		\[\sup_{\vc \in \{\vc^e \mid e \in E\}} \left\{ \OPT(\vc)- \Up(\vp^\star,\vc) \right\}=0.\]
	
	Finally, against an agent with cost vector $\bar \vc$, it is easy to see that $\OPT(\bar \vc)=\Up(\vp^\star,\bar \vc))=1$.
	This allows us to conclude the first part of the proof, since 
		\[\reg(\vp^\star)=\sup_{\vc \in\cC} \left\{ \OPT(\vc)- \Up(\vp^\star,\vc) \right\}=0.\]
	
	\subsection*{Part II: \textnormal{\emph{If the PAPU instance admits a deterministic contract that attains regret at most some constant $\eta > 0$, then there exists an exact cover of $E$ in the \textsf{X3C} instance.}}} 
	
	%
	We show that, given any deterministic contract $\vp \in \mcP$ such that $\reg (\vp) \le \eta := \frac{1}{200}$, then it is possible to recover an exact cover $S^\star \subseteq S$ of $E$. 
	%
	
	First, we prove that $p_{\omega_1}\le \eta$. Indeed, since $\reg (\vp) \le \eta$, it must hold that $\OPT(\bar \vc)-\Up(\vp,\bar \vc) \le \eta$, which implies that $\Up(\vp,\bar \vc)\ge 1-\eta$, and, in turn, that $p_{\omega_1}\le \eta$.

	Next, we show that either $p_{\omega_s}\le \frac {2\eta} \epsilon$ or $p_{\omega_s}\ge \frac{1}{2}$ for every $s \in S$.
	Given any $s \in S$, by $\reg (\vp) \le \eta$, we have that $\OPT(\vc^s)-\Up(\vp,\vc^s)\le \eta$.
	Moreover, as shown in the previous part of the proof, we have that $\OPT(\vc^s)=\frac \epsilon 2$, and, thus, $\Up(\vp,\vc^s)\ge \frac \epsilon 2- \eta$.
	Suppose by contradiction that $p_{\omega_s} \in \big( \frac {2\eta} \epsilon, \frac{1}{2} -\eta  \big)$. Then, it must be the case that an agent with cost vector $\vc^s$ plays action $a_{s,2}$ as best response, since
			\[     \left( p_{\omega_1}+ p_{\omega_s} \right) \frac \epsilon 2 > \left( p_{\omega_1}+ p_{\omega_s} \right) \epsilon - \frac \epsilon 4,  \]
	which implies $\Ua(\vp,\vc^s, a_{s,2}) > \Ua(\vp,\vc^s, a_{s,1})$.
	As a result, we get
	\[ \Up(\vp,\vc^s)   \leq  \frac \epsilon 2- \frac \epsilon 2 p_{\omega_s} < \frac \epsilon 2- \eta, \] which leads to a contradiction. 
	This shows that $p_{\omega_s}\in \big[ 0, \frac {2\eta} \epsilon \big] \cup \big[ \frac{1}{2}-\eta, +\infty \big)$ for every $s \in S$.

	Now, we use the fact that $\OPT(\vc^e)-U(\vp, \vc^e)\le \eta$ for every $e \in E$. Given any $e \in E$, since $\OPT(\vc^e)=\frac{1}{8}$ as shown in the previous part of the proof, we have that $\Up(\vp, \vc^e) \ge \frac{1}{8} - \eta$.
	Then, it is easy to see that an agent with cost vector $\vc^e$ plays ${a_{e,1}}$ as best response. 
	Hence,
	\begin{equation*}
		\left( p_{\omega_1}+ \sum_{s \in S:e \in s} p_{\omega_s} \right) \frac{1}{4} - \left( \frac{1}{8}-\frac \epsilon 2 \right) \ge \left( p_{\omega_1}+ \sum_{s \in S:e \in s} p_{\omega_s} \right) \frac{1}{8}- \left(\frac{1}{16}- \frac \epsilon 2\right).   
	\end{equation*}
	This implies that 
	\begin{equation*}
		\sum_{s \in S:e \in s} p_{\omega_s} \ge 8 \left( \frac{1}{16} - \frac{1}{8}p_{\omega_1} \right)\ge \frac{1}{2} -\eta.
	\end{equation*} 
	Moreover, from the bound on principal's expected utility, we have that $$\Up(\vp, \vc^e)=\frac{1}{4}-\frac{1}{4} \left( \sum_{s \in S:e \in s} p_{\omega_s}+p_{\omega_1} \right) \ge \frac{1}{8}-\eta,$$ which implies that $$\sum_{s \in S:e \in s} p_{\omega_s}\le \frac{1}{2}+4\eta.$$
	The two bounds on the sum $\sum_{s \in S:e \in s} p_{\omega_s}$ allow to prove that exactly one subset $s \in S$ with $e \in s$ is such that its corresponding outcome $\omega_s$ has payment greater than $\frac{1}{2}-\eta$.
	Indeed, if two subsets $s, s' \in S$ are such that $p_{\omega_{s}} > \frac{1}{2}-\eta$ and  $p_{\omega_{s'}}> \frac{1}{2}-\eta$, then it must be the case that $$\sum_{s \in S:e \in s} p_{\omega_s}> 1- 2 \eta \geq  \frac{1}{2}+4\eta,$$ since~$\eta = \frac 1 {200}$.
	Moreover, if for all $s \in S $ with $e \in s$ it holds that $p_{\omega_{s}}<  \frac {2\eta} \epsilon$, then $$\sum_{s \in S:e \in s} p_{\omega_s} <  \frac {6\eta} \epsilon\le \frac{1}{2}-\eta .$$
	As a result, the set $S^\star \subseteq S$ of all the subsets $s \in S$ such that $p_{\omega_s}\ge \frac{1}{2}-\eta$ is an exact cover of $E$. 
\end{proof}

\lemmaTemplateOne*

\begin{proof}
	In the following, we assume that $\Gamma \in \cX$ is expressed as menu of randomized contracts, namely $\Gamma = \{ \gvec^1, \ldots, \gvec^K \} $.
	This is w.l.o.g. since $\mcMr$ is the most general class of contracts.

	First, let $\zeta : \mcC \to \mcC_\epsilon$ be a function that maps every cost vector $\vc \in \mcC$ to an element $\zeta(\vc)$ of the $\epsilon$-cover of $\mcC$ such that $\|\vc-\zeta (\vc)\|_\infty\le \epsilon$.
	Notice that such an element $\zeta (\vc)$ is guaranteed to exist by definition of $\epsilon$-cover.
	Then, let the function $t : \mcC \to [K]$ be such that, for every $\vc \in \mcC$, the value $t(\vc)$ is equal to the index of the randomized contract selected by an agent with cost vector $\zeta(\vc)$.
	%
	%
	Formally, for every $\vc \in \mcC$, $t(\vc) = i$ so that $\gvec^i $ coincides with $\gvec^\star(\Gamma, \zeta(\vc))$.
	%
	%
	Moreover, let the function $g: \mcC \times \mcP \to A$ be such that  $g(\vc,\vp)=a^\star(\vp, \zeta(\vc))$ for every cost vector $\vc \in \mcC$ and deterministic contract $\vp\in \supp(\gvec^{t(\vc)})$.
	Notice that, for every $\vc \in \mcC$, it is sufficient to define the function $g$ only for contracts $\vp \in \mcP$ that belong to the support of $\gvec^{t(\vc)}$.
	%
	
	First, it is easy to see that, by definition of the functions $t$ and $g$, the following hods:
	\[ \sup_{\vc \in \cC} \left\{  \OPT(\vc)- \sum_{\vp \in \supp(\gvec^{t(\vc)})} \gamma^{t(\vc)}_{\vp} \left[ \sum_{\omega \in \Omega} F_{g(\vc,\vp),\omega} \, r_\omega  - \sum_{\omega \in \Omega} F_{g(\vc,\vp),\omega} \, p_\omega \right] \right\} =  \sup_{\vc \in \cC_\epsilon} \Big\{ \OPT(\vc)-\Up(\Gamma,\vc)\Big\} .   \]

	Thus, we are left to prove that, given how the two function $t$ and $g$ are defined, the triplet $(\Gamma, t, g)$ is $(2\epsilon)$-incentive compatible.
	%
	%
	For every $\vc \in \cC$ and $i \in [K]$, by letting $\tilde \vc := \zeta(\vc)$, it holds
	\begin{align*}
		\sum_{\vp \in \supp(\gvec^{t(\vc)})} \gamma^{t(\vc)}_{\vp} \left[\sum_{\omega \in \Omega} F_{g(\vc,\vp),\omega} \, p_\omega  - c_{g(\vc,\vp)} \right] & \ge \sum_{\vp \in \supp(\gvec^{t (\vc)})} \gamma^{t (\vc)}_{\vp} \left[\sum_{\omega \in \Omega} F_{g(\vc,\vp),\omega} \, p_\omega - \tilde c_{g(\vc,\vp)} -\epsilon\right] \\
		& = \sum_{\vp \in \supp(\gvec^{t (\vc)})} \gamma^{t (\vc)}_{\vp} \left[\sum_{\omega \in \Omega} F_{a^\star(\vp, \tilde \vc),\omega} \, p_\omega - \tilde c_{a^\star(\vp, \tilde \vc)} \right]-\epsilon  \\
		& \ge\sum_{\vp \in \supp(\gvec^{i})} \gamma^{i}_{\vp}   \left[\sum_{\omega \in \Omega} F_{a^\star(\vp, \vc),\omega} \, p_\omega - \tilde c_{a^\star(\vp, \vc)} \right]-\epsilon\\
		& \ge\sum_{\vp \in \supp(\gvec^{i})} \gamma^{i}_{\vp}   \left[\sum_{\omega \in \Omega} F_{a^\star(\vp, \vc),\omega} \, p_\omega -  c_{a^\star(\vp, \vc)}-\epsilon \right]-\epsilon\\
		&= \sum_{\vp \in \supp(\gvec^{i})} \gamma^{i}_{\vp}   \left[\sum_{\omega \in \Omega} F_{a^\star(\vp, \vc),\omega} \, p_\omega -  c_{a^\star(\vp, \vc)} \right]-2\epsilon \\
		& = \Ua (\Gamma, \vc) - 2 \epsilon,
	\end{align*}
	where the first inequality holds by definition of the function $\zeta$, the subsequent (first) equality holds by definition of the function $g$, the second inequality holds by definition of the function $t$ and the fact that the actions $a^\star(\vp,\tilde \vc)$ are best responses for an agent with cost vector $\tilde \vc$, while the third inequality holds again by definition of the function $\zeta$.
	
	This clearly shows that $(\Gamma, t, g)$ is $(2\epsilon)$-incentive compatible, concluding the proof.
	%
	%
\end{proof}

\lemmaTemplateTwo*

\begin{proof}
	In the following, we assume that $\Gamma \in \cX$ is expressed as menu of randomized contracts, namely $\Gamma = \{ \gvec^1, \ldots, \gvec^K \} $.
	Moreover, we assume that the same holds for the contract $\widetilde{\Gamma} \in \cX$ produced by the procedure $\texttt{Linearize}(\Gamma,\beta)$, namely $\widetilde{\Gamma} = \{ \tilde \gvec^1, \ldots, \tilde \gvec^K \} $.
	These assumptions are w.l.o.g. since $\mcMr$ is the most general class of contracts.

	Let us fix a cost vector $\vc \in \cC$.
	%
	%
	Furthermore, for every $\vp \in \mcP$, let $\ell(\vp) := \vp + \beta (\rvec - \vp) = (1- \beta) \vp+\beta \rvec$ be the deterministic contract obtained by applying the ``linearization'' operation performed by the procedure $\texttt{Linearize}(\Gamma,\beta)$ to the deterministic contract $\vp$.
	We also introduce $\ell^{-1}(\vp)$ to denote the result of ``inverse'' operation applied to a deterministic contract $\vp \in \mcP$.
	Formally, we define $\ell^{-1}(\vp) := \frac{1}{1-\beta} \vp - \frac{\beta}{1-\beta} \rvec$.
	It is easy to see that, for every $\vp \in \mcP$, it holds $\ell^{-1}(\ell(\vp)) = \vp$.
	
	For ease of notation, let $i \in [K] $ be the index of the randomized contract selected by an agent with cost vector $\vc$ from $\Gamma$, namely $ \gvec^i$ coincides with $\gvec^\star( \Gamma, \cvec)$.
	Then, the following holds:
	%
	%
	\begin{align*}
		\sum_{\vp \in \supp(\gvec^i)}  \gamma^{i}_{\vp}  \left[ \sum_{\omega \in \Omega} F_{a^\star( \ell(\vp),\cvec),\omega} \right. & \left. \Big( (1-\beta) \, p_\omega+\beta \, r_\omega \Big)-c_{a^\star(\ell(\vp),\vc)} \right]\\
		&=\sum_{\vp \in \supp(\tilde \gvec^i)} \tilde \gamma^{i}_{\vp} \left[ \sum_{\omega \in \Omega} F_{a^\star(\vp,\cvec),\omega} \, p_\omega-c_{a^\star(\vp,\vc)} \right]\\
		&\ge \sum_{\vp \in \supp(\tilde \gvec^{t(\vc)})} \tilde \gamma^{t(\vc)}_{\vp} \left[ \sum_{\omega \in \Omega} F_{a^\star( \vp,\cvec),\omega} \, p_\omega-c_{a^\star(\vp,\vc)} \right] \\
		&\ge \sum_{\vp \in \supp(\tilde \gvec^{t(\vc)})} \tilde \gamma^{t(\vc)}_{\vp} \left[ \sum_{\omega \in \Omega} F_{g( \cvec,\ell^{-1}(\vp)),\omega}  \, p_\omega-c_{g(\vc,\ell^{-1}(\vp))} \right] \\
		&=\sum_{\vp \in \supp(\gvec^{t(\vc)})}  \gamma^{t(\vc)}_{\vp} \left[ \sum_{\omega \in \Omega} F_{g( \cvec,\vp),\omega}  \Big( (1- \beta) \, p_\omega+\beta \, r_\omega \Big)-c_{g(\vc,\vp)} \right],
	\end{align*}
	where the first equality holds by definition of $\widetilde{\Gamma}$ (as it is produced by the $\texttt{Linearize}(\Gamma,\beta)$ procedure), the subsequent (first) inequality holds given how the index $i$ is defined, the second inequality holds by definition of best response $a^\star(\vp,\vc)$, while the last equality holds by definition of $\widetilde{\Gamma}$ and $\ell^{-1}$, specifically, the fact that $\ell^{-1}(\ell(\vp)) = \vp$.
	%
	%

	Moreover, since $(\Gamma,t,g)$ is $\eta$-incentive compatible, the following holds:
	\begin{align*}
		\sum_{\vp \in \supp(\gvec^{t(\vc)})} \gamma^{t(\vc)}_{\vp} \left[ \sum_{\omega \in \Omega} F_{g(\vc,\vp),\omega} \, p_\omega - c_{g(\vc,\vp)} \right]& \geq \Ua(\gvec^i,\vc) - \eta \\
		&=\sum_{\vp \in \supp(\gvec^{i})} \gamma^{i}_{\vp} \left[ \sum_{\omega \in \Omega} F_{a^\star(\vp,\vc),\omega} \, p_\omega - c_{a^\star(\vp,\vc)}\right]-\eta\\
		&\ge \sum_{\vp \in \supp(\gvec^{i})} \gamma^{i}_{\vp} \left[ \sum_{\omega \in \Omega} F_{a^\star(\ell(\vp),\vc),\omega} \, p_\omega - c_{a^\star(\ell(\vp),\vc)}\right]-\eta,
	\end{align*}
	where the equality holds by definition of $\Ua(\gvec^i,\vc)$ and the last inequality holds since $a^\star(\vp,\vc)$ is a best response under $\vp$ for an agent with cost vector $\vc$.
	%

	By summing the two relations obtained above, we get
	\begin{align*}
		\sum_{\vp \in \supp(\gvec^i)} \gamma^{i}_{\vp} \left[ \sum_{\omega \in \Omega} F_{a^\star( \ell(\vp),\cvec),\omega} \left( -\beta p_\omega+\beta r_\omega \right) \right]\ge \sum_{\vp \in \supp( \gvec^{t(\vc)})}  \gamma^{t(\vc)}_{\vp} \left[\sum_{\omega \in \Omega} F_{g( \cvec, \vp),\omega}  \left(-\beta p_\omega+\beta r_\omega \right) \right]-\eta,
	\end{align*}
	%
	%
	which implies that 
	\[
	\sum_{\vp \in \supp(\gvec^i)} \gamma^{i}_{\vp} \left[ \sum_{\omega \in \Omega} F_{a^\star( \ell(\vp),\cvec),\omega}  \left( r_\omega - p_\omega  \right) \right] \ge \sum_{\vp \in \supp(\gvec^{t(\vc)})} \gamma^{t(\vc)}_{\vp} \left[ \sum_{\omega \in \Omega} F_{g( \cvec,\vp),\omega}  \left( r_\omega - p_\omega  \right) \right]-\frac \eta \beta.
	\]

	Then, the principal's expected utility when the agent is of type is $\vc$ is equal to
	%
	\begin{align*}
		\sum_{\vp \in \supp( \tilde \gvec^i)} \tilde \gamma^{i}_{\vp} \left[ \sum_{\omega \in \Omega} F_{a^\star( \vp,\cvec),\omega}  \left( r_\omega - p_\omega  \right) \right] &=	\sum_{\vp \in \supp(\gvec^i)} \gamma^{i}_{\vp} \left[ \sum_{\omega \in \Omega} F_{a^\star(\ell( \vp),\cvec),\omega}  \left( r_\omega - (1-\beta)p_\omega -\beta r_\omega  \right) \right] \\
		&=(1- \beta)\sum_{\vp \in \supp(\gvec^i)} \gamma^{i}_{\vp} \left[ \sum_{\omega \in \Omega} F_{a^\star(\ell(\vp),\cvec),\omega}  \left( r_\omega -p_\omega   \right) \right] \\
		&\ge (1-\beta) \sum_{\vp \in \supp(\gvec^{t(\vc)})}  \gamma^{t(\vc)}_{\vp} \left[ \sum_{\omega \in \Omega} F_{g( \cvec,\vp),\omega}  \left( r_\omega - p_\omega  \right) \right]-\frac \eta \beta\\
		&\ge \sum_{\vp \in \supp( \gvec^{t(\vc)})}  \gamma^{t(\vc)}_{\vp} \left[ \sum_{\omega \in \Omega} F_{g( \cvec, \vp),\omega}  \left( r_\omega - p_\omega  \right) \right]-\beta -\frac \eta \beta,
	\end{align*}
	where the first equality holds by definition of $\widetilde{\Gamma}$, the second inequality is obtained by rearranging terms, the subsequent (first) inequality holds thanks to the relation proved above, while the last inequality holds since rewards are upper bounded by one and payments are lower bounded by zero.
	
	By recalling the definition of $\Up(\widetilde{\Gamma},\vc)$ and how we set the index $i$ at the beginning of the proof, the inequality above clearly gives the result, concluding the proof.
\end{proof}

\theoremTemplateOne*

\begin{proof}
	By \cref{def:apxAlgo}, we have that the contract $\Gamma \in \cX$ computed by means of the the procedure \texttt{Compute-RM-Contract}$(\cX,\cC_\epsilon)$ satisfies the following: 
	\begin{align}\label{eq:meta1}
		\sup_{\vc \in \cC_\epsilon} \Big\{ \OPT(\vc)-\Up( \Gamma,\vc) \Big\} &  \le   \inf_{\Gamma' \in \cX}  \sup_{\vc \in \cC_\epsilon} \Big\{ \OPT(\vc)-\Up(\Gamma',\vc) \Big\}  .
	\end{align}
	
	By \cref{lm:existsEpsIC}, there exist two functions $t$ and $g$ as in Definition~\ref{def:apxIC} such that $(\Gamma,t,g)$ is $(2\epsilon)$-incentive compatible and the following holds:
	\begin{align}\label{eq:meta2}
		\sup_{\vc \in \cC} \left\{ \OPT(\vc)- \sum_{\vp \in \supp(\gvec^{t(\vc)})} \gamma^{t(c)}_{\vp} \left[\sum_{\omega \in \Omega} F_{g(\vc,\vp),\omega} \, p_\omega - c_{g(\vc,\vp)} \right]\right\}=  \sup_{\vc \in \cC_\epsilon} \Big\{ \OPT(\vc)-\Up( \Gamma,\vc) \Big\} .   
	\end{align}
	Moreover, by \cref{lm:linearized}, the contract $\widetilde{\Gamma} \in \cX$ produced by $\texttt{Linearize}(\Gamma,\sqrt{2 \epsilon})$ satisfies:
	\begin{align}\label{eq:meta3}
		\sup_{\vc \in \cC} \Big\{  \OPT(\vc)-\Up(\widetilde \Gamma,\vc) \Big\}&\le \sup_{\vc \in \cC} \left\{ \OPT(\vc)-\sum_{\vp \in \supp(\gvec^{t(\vc)})} \hspace{-2mm}\gamma^{t(\vc)}_{\vp}   \left[\sum_{\omega \in \Omega} F_{g(\vc,\vp),\omega} (r_\omega - p_\omega) \right]+2\sqrt{2\epsilon} \right\}.
	\end{align}
	Finally, we have that:
	\begin{align*}
		\sup_{\vc \in \cC} \Big\{ \OPT(\vc)-\Up(\widetilde \Gamma,\vc) \Big\}&\le \sup_{\vc \in \cC} \left\{  \OPT(\vc)-\sum_{\vp \in \supp(\gvec^{t(\vc)})} \gamma^{t(\vc)}_{\vp}   \left[\sum_{\omega \in \Omega} F_{g(\vc,\vp),\omega} (r_\omega - p_\omega) \right]+2\sqrt{2\epsilon}   \right\}\\
		&=\sup_{\vc \in \cC} \left\{ \OPT(\vc)-\sum_{\vp \in \supp(\gvec^{t(\vc)})} \gamma^{t(\vc)}_{\vp}   \left[\sum_{\omega \in \Omega} F_{g(\vc,\vp),\omega} (r_\omega - p_\omega) \right]    \right\}+2\sqrt{2\epsilon}\\
		& = \sup_{\vc \in \cC_\epsilon} \Big\{ \OPT(\vc)-\Up(\Gamma,\vc) \Big\} + 2\sqrt{2\epsilon}\\
		&\le \inf_{\Gamma' \in \cX}  \sup_{\vc \in \cC_\epsilon} \Big\{ \OPT(\vc)-\Up(\Gamma',\vc) \Big\}  +2\sqrt{2\epsilon} \\
		&\le   \inf_{\Gamma' \in X}  \sup_{\vc \in \cC} \Big\{  \OPT(\vc)-\Up(\Gamma',\vc) \Big\} +2\sqrt{2\epsilon},
	\end{align*}
	where the first inequality comes from \cref{eq:meta3}, the second equality from \cref{eq:meta2}, the second inequality from \cref{eq:meta1}, while the last inequality follows from $\cC_\epsilon\subseteq\cC$.
\end{proof}

\section{Proofs Omitted from Section~\ref{sec:beyond}}\label{sec:app_template}

\lemmaCoverOne*

\begin{proof}
	Let $(\Omega,A,\rvec,\mcC,\Fvec)$ be a single-dimensional PAPU instance characterized by a ``central'' cost vector $\vc^\circ \in \mcC$ and a range parameter $\rho \in \mathbb{R}_{>0}$.
	Let $\mcC_\epsilon \subseteq \mcC$ be defined as follows:
	\[
	\mcC_\epsilon := \left\{ \big(1- \rho + 2 \rho \, q \big)\vc^\circ \mid q \in \mathcal{Q} \right\} \quad \text{where} \quad \mathcal{Q} := \left\{  0, \frac 1 {\lceil \nicefrac {2 \rho} \epsilon \rceil}, \frac 2 {\lceil \nicefrac {2 \rho} \epsilon \rceil}, \ldots, \frac {\lceil \nicefrac {2 \rho} \epsilon \rceil - 1} {\lceil \nicefrac {2 \rho} \epsilon \rceil}, 1 \right\} .
	\]
	First, we prove that, given any cost vector $\vc \in \mcC$, there exists a cost vector $\tilde \vc \in \mcC_\epsilon$ with $\| \vc - \tilde \vc \|_{\infty} \leq \epsilon$.
	Indeed, since $\vc = \lambda \vc^\circ$ for some scaling parameter $\lambda \in [1-\rho, 1+\rho]$, by letting $\tilde \vc := (1- \rho + 2 \rho \, q^\star ) \vc^\circ$ where $q^\star \in \arg\min_{q \in \mathcal{Q}} | (1- \rho + 2 \rho \, q ) - \lambda |$, it holds that:
	\begin{align*}
	\| \vc - \tilde \vc \|_{\infty} & = \| \lambda \vc^\circ - (1- \rho + 2 \rho \, q^\star ) \vc^\circ \|_{\infty} \\
	& \leq | \lambda - (1- \rho + 2 \rho \, q^\star ) | \\
	& = 2\rho  \left| \frac {\lambda - (1-\rho)} {2\rho} - q^\star \right| \\
	& \leq 2 \rho \frac 1 {\lceil \nicefrac {2 \rho} \epsilon \rceil} \leq \epsilon,
	\end{align*}
	where the second to last inequality holds by definition of $q^\star$.
	This shows that $\mcC_\epsilon$ is an $\epsilon$-cover.
	
	Moreover, the size of the set $\mcC_\epsilon$ is clearly $\lceil \nicefrac {2 \rho} \epsilon \rceil + 1$, which concludes the proof.
\end{proof}

\lemmaCoverTwo*

\begin{proof}
		The proof trivially follows by noticing that any $L_p$-norm ball centered in $\vc_0\in [0,1]^n$ and with radius $D \in [0,1]$ is included in the $L_\infty$-norm ball centered in $\vc_0$ with radius $D$, and that the latter admits an $\epsilon$-cover $\mcC_\epsilon$ of size $O((\nicefrac1\epsilon)^{n})$.
		This is obtained by considering a grid over the hypercube $[0,1]^n$ using $\lceil \nicefrac 1 \epsilon \rceil + 1$ points along each dimension.
\end{proof}

\theoremEfficiently*

\begin{proof}
	The algorithm proving the theorem adopts an approach similar to one originally introduced by~\citet{gan2022optimal} in the context of \emph{generalized} principal-agent problems, and later extended to \emph{multi-agent Bayesian} principal-agent problems by~\citet{castiglioni23multi}.
	Even if the approaches are similar, our algorithm employs techniques different form those in~\citep{gan2022optimal,castiglioni23multi}.
	Our techniques can only be applied in our setting since, as we show next, Problem~\eqref{pr:general} admits an optimal solution (a minimum), while the problems tacked by~\citet{gan2022optimal}~and~\citet{castiglioni23multi} may \emph{not} admit an optimal solution (a maximum for them).
	
	The algorithm works by solving a \emph{linear programming} relaxation of Problem~\eqref{pr:general}, defined as:
	%
	%
	\begin{subequations}\label{pr:relaxed}
		\begin{align}
		\min_{\gamma, z, u, v} \,\,& \quad u \quad\quad \textnormal{s.t.} \hspace{10cm}\,\\
		& \specialcell{u \geq \OPT(\vc)- \sum_{j \in [W]}  \sum_{\omega \in \omega} F_{g(\vc,j),\omega} \left( \gamma^{t(\vc)}_j r_{\omega}-z^{t(\vc)}_{j,\omega} \right) \hfill \forall \vc \in \widetilde{ \mcC}} \label{cons_obj}\\
		& \specialcell{\sum_{j \in [W]}\sum_{\omega \in \Omega}  F_{g(\vc,j), \omega} \left( z^{t(\vc)}_{j,\omega} - \gamma^{t(\vc)}_j c_{g(\vc,j)} \right)\ge \sum_{j \in [W]} v^{\vc, i}_{j} \hfill \forall \vc \in \widetilde{ \mcC},\forall i \in [K]} \label{cons_lin_first}\\
		& \specialcell{v^{\vc, i}_{j} \geq \sum_{\omega \in \Omega}  F_{a,\omega} \left(  z^{i}_{j,\omega} - \gamma^{i}_j   c_{a} \right) \hfill\forall \vc \in \widetilde{ \mcC},  \forall i \in [K], \forall j \in [W], \forall a \in A} \label{cons_lin}\\
		& \specialcell{\sum_{j \in [W]}\gamma^i_j=1 \hfill \forall i \in [K] } \\
		& \specialcell{\gamma^i_j \geq 0 \hfill \forall i \in [K], \forall j \in [W] } \\
		& \specialcell{z^i_{j,\omega} \geq 0 \hfill \forall i \in [K], \forall j \in [W], \forall \omega \in \Omega. }
		\end{align}
	\end{subequations}
	Problem~\eqref{pr:relaxed} is derived from Problem~\eqref{pr:general} by substituting each product $\gamma^i_{j} \, p_{j,\omega}^i$ with a variable~$z_{j,\omega}^i$.
	%
	Additionally, the $\max$ over $\widetilde{ \mcC}$ in the objective of Problem~\eqref{pr:general} is encoded by the set of linear Constraints~\eqref{cons_obj}, by introducing an auxiliary variable $u$, while the $\max$ in the right-hand side of Constraints~\eqref{cons} is encoded by the set of linear  Constraints~\eqref{cons_lin}, by introducing the auxiliary variables $v^{\vc, i}_{j}$.
	Notice that Problem~\eqref{pr:relaxed} is a linear program since $\widetilde{\cC}$ is finite, and it can be solved in time polynomial in the size of the PAPU instance, the size of $\widetilde{\cC}$, $K$, and $W$.

	The crucial result that is exploited by the algorithm is that, given an optimal solution to the linear programming relaxation in Problem~\eqref{pr:relaxed}, it is possible to efficiently recover an optimal solution to Problem~\eqref{pr:general}.
	%
	We split the proof of this result into three parts, which are addressed in the following.
	Notice that, as a byproduct, we also get that Problem~\eqref{pr:general} always admits an optimal solution (and, thus, the $\inf$ can be safely replaced with a $\min$ in its definition).
	
	%
	%
	%
	
	\subsection*{Part I: \textnormal{\emph{The value of Problem~\eqref{pr:relaxed} is at least the value of Problem~\eqref{pr:general}}.}}
	
	We show that, given any feasible solution to Problem~\eqref{pr:general}, it is possible to recover a feasible solution to Problem~\eqref{pr:relaxed} such that the objective value of the latter is equal to the objective value of the former.
	
	Let $(\gamma, p)$ be a feasible solution to Problem~\eqref{pr:general}.
	Then, we define a solution $(\gamma,z,u,v)$ to Problem~\eqref{pr:relaxed} by letting $z_{j,\omega}^i = \gamma^i_j \, p_{j,\omega}^i$ for every $i \in [K]$, $j \in [W]$, and $\omega \in \Omega$.
	Moreover, we let $u$ be equal to the objective function value of the feasible solution to Problem~\eqref{pr:general} (\emph{i.e.}, the value of the $\max_{\vc \in \widetilde{\cC}}$ operator in Objective~\eqref{obj}), while we let all the variables $v^{\vc , i}_j$, for every $\vc \in \widetilde{\cC}$, $i \in [K]$, and $j \in [W]$, be equal to the values of their corresponding $\max_{a \in A}$ operators, appearing inside the summations in the right-hand sides of Constraints~\eqref{cons} in Problem~\eqref{pr:general}.

	It is immediate to see that all the constraints in Problem~\eqref{pr:relaxed} except for Constraints~\eqref{cons_obj} are satisfied, given that their counterparts in Problem~\eqref{pr:general} are satisfied as well.
	Moreover, by definition of $u$, Constraints~\eqref{cons_obj} are clearly satisfied, and, since the objective minimized by Problem~\eqref{pr:relaxed} is $u$, the solution $(\gamma,z,u,v)$ built above achieves the same objective value as the given solution $(\gamma, p)$.

	\subsection*{Part II: \textnormal{\emph{Given a solution to Problem~\eqref{pr:relaxed}, it is possible to efficiently recover a solution to Problem~\eqref{pr:relaxed} having the same objective function value and such that, if $z^{i}_{j,\omega} > 0$, then $\gamma_{j}^i > 0$}.}}
	
	Let $(\gamma,z,u,v)$ be a solution to Problem~\eqref{pr:relaxed}.
	For ease of notation, let us define
	\[
		M := \Big\{ (i,j) \mid i \in [K] \wedge j \in [W] \wedge \gamma^i_j = 0 \wedge \exists \omega \in \Omega : z^i_{j,\omega} > 0 \Big\}
	\]
	as the set of all the pairs of indexes $(i,j) \in [K] \times [W]$ such that the desired condition is \emph{not} satisfied for at least one $\omega \in \Omega$.
	Clearly, such a set can be identified in time polynomial in $|\Omega|$, $K$, and $W$.
	
	We build a new solution $(\gamma, \bar z, u, \bar v)$ to Problem~\eqref{pr:relaxed} as described in the following.
	Notice that such a solution can be built in time polynomial in the size of the PAPU instance, the size of $\widetilde{\cC}$, $K$, and $W$.
	
	For every pair of indexes $(i,j) \in M$, we set $\bar z^i_{j,\omega} := 0$ for all $\omega \in \Omega$.
	Moreover, for every $i \in [K]$, we let $j^\diamond(i) \in [W]$ be any index such that $(i,j^\diamond(i)) \notin M$.
	Notice that such an index always exists, otherwise the variables $\gamma^i_j$ would \emph{not} encode a probability distribution.
	Then, for every $i \in [K]$ and $\omega \in \Omega$, we set 
	$$
		\bar z^i_{j^\diamond(i), \omega} := z^i_{j^\diamond(i),\omega} + \sum_{j \in [W] : (i,j) \in M} \max_{a \in A} \sum_{\omega' \in \Omega} F_{a,\omega'} z^i_{j,\omega'},
	$$ 
	while, for every $(i,j) \notin M$ with $(i,j) \neq (i,j^\diamond(i))$, we let $\bar z^i_{j,\omega} := z^i_{j,\omega}$ for all $\omega \in \Omega$.
	Intuitively, the new variables are defined in such a way that positive values are ``moved'' from pairs $(i,j) \in M$ to a single pair $(i,j^\diamond(i)) \notin M$, in order to satisfy the desired condition.
	Indeed, it immediate to check that, if $\bar z^i_{j,\omega}>0$ for some $i \in [K]$, $j \in [W]$, and $\omega \in \Omega$, then it is the case that $\gamma^i_{j}> 0$.
	
	Additionally, for every $\vc \in \widetilde{\cC}$, $i \in [K]$, and $j \in [W]$, we set
	\[
		\bar v^{\vc,i}_j := v^{\vc,i}_{j} \cdot \mathbf{1} \{ (i,j) \notin M \} + \left( \sum_{j' \in [W]: (i,j') \in M} v^{\vc,i}_{j'} \right) \cdot \mathbf{1} \{ (i,j) = (i, j^\diamond(i)) \}.
	\]
	Notice that, by definition, it holds that $\bar v_{j}^{\vc,i} = v^{\vc,i}_j$ whenever $(i,j) \notin M$ and $(i,j) \neq (i, j^\diamond(i)) $, while $\bar v_{j}^{\vc,i} = 0$ whenever $(i,j) \in M$.
	Moreover, $\sum_{j \in [W]} \bar v_{j}^{\vc,i} = \sum_{j \in [W]}  v_{j}^{\vc,i} $ for all $\vc \in \widetilde{\cC}$ and $i \in [K]$.

	First, we prove the following crucial result:
	\[
		\sum_{j \in [W]}  \sum_{\omega \in \Omega} F_{g(\vc,j),\omega} z^{t(\vc)}_{j,\omega}  = \sum_{\substack{j \in [W] : \\(t(\vc),j) \notin M}}  \sum_{\omega \in \Omega} F_{g(\vc,j),\omega} \bar z^{t(\vc)}_{j,\omega} \quad \forall \vc \in \widetilde{\cC}.
	\]
	Indeed, we can prove that, for every $\vc \in \widetilde{\cC}$, it holds
	\begin{align*}
		\sum_{j \in [W]}  \sum_{\omega \in \Omega} F_{g(\vc,j),\omega} z^{t(\vc)}_{j,\omega}  & =
		\sum_{\substack{j \in [W]: \\(t(\vc),j) \in M}}  \sum_{\omega \in \Omega} F_{g(\vc,j),\omega} z^{t(\vc)}_{j,\omega} + \sum_{\substack{j \in [W] : \\(t(\vc),j) \notin M}}  \sum_{\omega \in \Omega} F_{g(\vc,j),\omega} z^{t(\vc)}_{j,\omega}  \\
		& = \sum_{\substack{j \in [W]:\\ (t(\vc),j) \in M}}  \sum_{\omega \in \Omega} F_{g(\vc,j),\omega} z^{t(\vc)}_{j,\omega} +\sum_{\substack{j \in [W] :\\ (t(\vc),j) \notin M}}  \sum_{\omega \in \Omega} F_{g(\vc,j),\omega} \bar z^{t(\vc)}_{j,\omega} \\
		& \quad\quad  - \sum_{\omega \in \Omega} F_{g(\vc,j^\diamond(t(\vc))), \omega} \sum_{\substack{j \in [W] : \\(t(\vc),j) \in M}} \max_{a \in A} \sum_{\omega' \in \Omega} F_{a,\omega'} z^{t(\vc)}_{j,\omega'}\\
		& = \sum_{\substack{j \in [W]: \\(t(\vc),j) \in M}}  \max_{a \in A}\sum_{\omega \in \Omega} F_{a,\omega} z^{t(\vc)}_{j,\omega} +\sum_{\substack{j \in [W] :\\ (t(\vc),j) \notin M}}  \sum_{\omega \in \Omega} F_{g(\vc,j),\omega} \bar z^{t(\vc)}_{j,\omega} \\
		& \quad\quad  - \sum_{\omega \in \Omega} F_{g(\vc,j^\diamond(t(\vc))), \omega} \sum_{\substack{j \in [W] :\\ (t(\vc),j) \in M}} \max_{a \in A} \sum_{\omega' \in \Omega} F_{a,\omega'} z^{t(\vc)}_{j,\omega'}\\
		& = \sum_{\substack{j \in [W]: \\(t(\vc),j) \in M}}  \max_{a \in A}\sum_{\omega \in \Omega} F_{a,\omega} z^{t(\vc)}_{j,\omega} +\sum_{\substack{j \in [W] : \\(t(\vc),j) \notin M}}  \sum_{\omega \in \Omega} F_{g(\vc,j),\omega} \bar z^{t(\vc)}_{j,\omega} \\
		& \quad\quad  - \sum_{\substack{j \in [W] :\\ (t(\vc),j) \in M}} \max_{a \in A} \sum_{\omega' \in \Omega} F_{a,\omega} z^{t(\vc)}_{j,\omega}\\
		& = \sum_{\substack{j \in [W] : \\(t(\vc),j) \notin M}}  \sum_{\omega \in \Omega} F_{g(\vc,j),\omega} \bar z^{t(\vc)}_{j,\omega} ,
	\end{align*}
	where the second equality hods by definition of $z^{t(\vc)}_{j,\omega}$ for all $j \in [W]$ with $ (t(\vc),j) \notin M$, while the third equality holds since, if $(t(\vc),j) \in M$, then $g(\vc,j) \in \arg\max_{a \in A} \sum_{\omega \in \Omega} F_{a,\omega} z_{j,\omega}^{t(\vc)}$.
	This latter fact can be easily proved by using the fact that the original solution satisfies the constraints of Problem~\eqref{pr:relaxed}, and in particular those in which $i = t(\vc)$. Intuitively, $g(\vc,j)$ is a ‘‘best response'' for the agent to the ‘‘contract'' with payments $z^{t(\vc)}_{j, \omega}$, when action costs are neglected.
	
	Now, we prove that the new solution $(\gamma, \bar z, u, \bar v)$ satisfies all the constraints of Problem~\eqref{pr:relaxed}.
	First, we prove that, for every $\vc \in \widetilde{\cC}$, it holds:
	\begin{align*}
		u & \geq \OPT(\vc)- \sum_{j \in [W]}  \sum_{\omega \in \Omega} F_{g(\vc,j),\omega} \left( \gamma^{t(\vc)}_j r_{\omega}-z^{t(\vc)}_{j,\omega} \right) \\
		& = \sum_{j \in [W]}  \sum_{\omega \in \Omega} F_{g(\vc,j),\omega} \left( \gamma^{t(\vc)}_j r_{\omega}-\bar z^{t(\vc)}_{j,\omega} \right),
	\end{align*}
	where the equality holds by our first crucial result.
	This shows that Constraints~\eqref{cons_obj} are satisfied.
	Moreover, for every $\vc \in \widetilde{\cC}$ and $i \in [K]$, it holds:
	\begin{align*}
		\sum_{j \in [W]}\sum_{\omega \in \Omega}  F_{g(\vc,j), \omega} \left( \bar z^{t(\vc)}_{j,\omega} - \gamma^{t(\vc)}_j c_{g(\vc,j)} \right) & =  \sum_{j \in [W]}\sum_{\omega \in \Omega}  F_{g(\vc,j), \omega} \left(  z^{t(\vc)}_{j,\omega} - \gamma^{t(\vc)}_j c_{g(\vc,j)} \right)\\
		& \geq  \sum_{j \in [W]}  v^{\vc, i}_{j} \\
		& = \sum_{j \in [W]} \bar v^{\vc, i}_{j},
	\end{align*}
	where the last equality holds by definition of $\bar v_{j}^{\vc,i}$.
	This shows that Constraints~\eqref{cons_lin_first} are satisfied.

	Finally, we look at Constraints~\eqref{cons_lin}.
	We consider three cases.
	The first case is when $(i,j) \notin M$ and $(i,j) \neq (i,j^\diamond(i))$, in which the following holds for every $\vc \in \widetilde{\cC}$ and $a \in A$:
	\begin{align*}
		\bar v^{\vc, i}_{j} =  v^{\vc, i}_{j} \geq \sum_{\omega \in \Omega}  F_{a,\omega} \left(  z^{i}_{j,\omega} - \gamma^{i}_j   c_{a} \right) = \sum_{\omega \in \Omega}  F_{a,\omega} \left(  \bar z^{i}_{j,\omega} - \gamma^{i}_j   c_{a} \right) ,
	\end{align*}
	where the first and the last equalities hold by definition of $\bar v^{\vc, i}_{j}$ and $\bar z^{i}_{j,\omega} $, respectively, since $\bar v^{\vc, i}_{j} =  v^{\vc, i}_{j}$ and $\bar z^{i}_{j,\omega} =  z^{i}_{j,\omega} $ whenever $(i,j) \notin M$ and $(i,j) \neq (i,j^\diamond(i))$.
	
	The second case is when $(i,j) \in M$, where, for every $\vc \in \widetilde{\cC}$ and $a \in A$, the relation
	\begin{align*}
		\bar v^{\vc, i}_{j} \geq \sum_{\omega \in \Omega}  F_{a,\omega} \left(  \bar z^{i}_{j,\omega} - \gamma^{i}_j   c_{a} \right) 
	\end{align*}
	trivially holds since $\bar v^{\vc, i}_{j} = 0$, $\bar z^{i}_{j,\omega} = 0$ for all $\omega \in \Omega$, and $\gamma^i_j = 0$.
	
	The third case is when $(i,j) = (i,j^\diamond(i))$, where, for every $\vc \in \widetilde{\cC}$ and $a \in A$, it holds:
	\begin{align*}
		\bar v^{\vc, i}_{j} & = v^{\vc,i}_j + \sum_{j' \in [W]: (i,j') \in M} v^{\vc,i}_{j'}  \\
		& \geq \sum_{\omega \in \Omega}  F_{a,\omega} \left(  z^{i}_{j,\omega} - \gamma^{i}_j   c_{a} \right) + \sum_{j' \in [W]: (i,j') \in M}\max_{a' \in A} \sum_{\omega \in \Omega}   F_{a',\omega}   z^{i}_{j',\omega}  \\
		& = \sum_{\omega \in \Omega}  F_{a,\omega} \left(  \bar z^{i}_{j,\omega} - \gamma^{i}_j   c_{a} \right)  - \sum_{\omega \in \Omega}  F_{a,\omega} \left(  \sum_{j' \in [W] : (i,j') \in M} \max_{a' \in A} \sum_{\omega' \in \Omega} F_{a',\omega'} z^i_{j',\omega'}\right) \\
		& \quad + \sum_{j' \in [W]: (i,j') \in M}\max_{a' \in A} \sum_{\omega \in \Omega}   F_{a',\omega}   z^{i}_{j',\omega} \\
		& = \sum_{\omega \in \Omega}  F_{a,\omega} \left(  \bar z^{i}_{j,\omega} - \gamma^{i}_j   c_{a} \right) ,
	\end{align*}
	where the first inequality holds by the fact that the original solution satisfies the constraints, while the second equality follows from the definition of $\bar z^i_{j,\omega}$.
	This shows that also the last missing set of constraints is satisfied, proving that $(\gamma, \bar z, u, \bar v)$ is feasible for Problem~\eqref{pr:relaxed}.
	
	In conclusion, it is immediate to see that the two solutions achieve the same objective value $u$.

	\subsection*{Part III: \textnormal{\emph{Given an optimal solution to the linear programming relaxation in Problem~\eqref{pr:relaxed}, it is possible to efficiently recover an optimal solution to Problem~\eqref{pr:general}.}}}

	By the second part of the proof, given any optimal solution $(\gamma,z,u,v)$ to Problem~\eqref{pr:relaxed}, we can recover another solution $(\gamma, \bar z, u, \bar v)$ having the same objective function value and such that, if $z_{j,\omega}^i > 0$ for some $i \in [K]$, $j \in [W]$, and $\omega \in \Omega$, then $\gamma_i^j > 0$.
	Moreover, this recovering step be done in time polynomial in the size of the PAPU instance, the size of $\widetilde{\cC}$, $K$, and $W$.

	From $(\gamma, \bar z, u, \bar v)$, we can recover a solution $(\gamma,p)$ to Problem~\eqref{pr:general} by setting $$p_{j,\omega}^i = \frac{z_{j,\omega}^i}{\gamma^i_j} \quad \text{whenever} \quad \gamma_j^i > 0,$$ while $p_{j,\omega}^i$ can be set to any non-negative value otherwise.
	It is immediate to check that the obtained solution is feasible for Problem~\eqref{pr:general} and achieves objective function value equal to the one achieved by $(\gamma, \bar z, u, \bar v)$ in Problem~\eqref{pr:relaxed}.
	The solution $(\gamma,p)$ can be clearly obtained in polynomial time.
	
	Since in the first part of the proof we have shown that the value of Problem~\eqref{pr:relaxed} is at least the value of Problem~\eqref{pr:general} and $(\gamma, \bar z, u, \bar v)$ is optimal for Problem~\eqref{pr:relaxed}, we have that the solution $(\gamma,p)$ built above is optimal for Problem~\eqref{pr:general}.
	This concludes the proof.
	%
	%
	%
	%
	%
	%
\end{proof}

\thmRandomizedMenus*

\begin{proof}
By classical revelation-principle-style arguments~\citep{gan2022optimal}, there exists a regret-minimizing menu of randomized contracts $\Gamma = \{ \gvec^1, \ldots \gvec^K \} \in \mcMr$ of size $K = |\widetilde{\cC}|$ such that:
\begin{enumerate}
	\item $\Gamma$ is \emph{direct}, \emph{i.e.}, its elements correspond one-to-one to cost vectors in $\widetilde{\mcC} := \left\{ \vc^1, \ldots, \vc^K \right\}$.
	\item $\Gamma$ is \emph{truthful}, \emph{i.e.}, an agent with cost vector $\vc^i$ is incentivized to select the $i$-th element of the menu, namely, the randomized contract $\gvec^i$.
	\item $\Gamma$ is \emph{succinct}, \emph{i.e.}, every $\gvec^i$ has at most $W = |A|$ deterministic contracts in its support, and each of them induces a different best response for an agent with cost vector $\vc^i$.
\end{enumerate}
%

As a result, we can apply the algorithm introduced in Theorem~\ref{thm:computeGeneral} with $K = |\widetilde{C}|$, $W = |A|$, and functions $t$ and $g$ defined as follow.
In particular, $t$ is such that each cost vector $\vc^i \in \widetilde{\cC}$ is assigned to the index $i$.
Moreover, by letting $A := \{ a_1, \ldots, a_n \}$, the function $g$ is such that $g(\vc^i, j) = a_j$. 
%
\end{proof}

\thmOtherCntracts*

\begin{proof}
	%
	For each class among $\mcP$, $\mcPr$, and $\mcM$, we exploit the following arguments:
	\begin{enumerate}
		\item[(i)] It is possible to efficiently enumerate over all the possible functions $t$ and $g$ when the size of the set $\widetilde{\cC}$ is constant.
		\item[(ii)] Suitable functions $t$ and $g$ can be identified \emph{a priori}. 
	\end{enumerate}
	%
	In both cases, a direct application of Theorem~\ref{thm:computeGeneral} gives the result.
	
	\paragraph{Case $\cX = \mcP$}
	For the class of deterministic contracts, there is only one possible function $t$, which is one that maps each cost vector to the index $1$ (the only available contract).
	Moreover, the function $g$ can be specified by using the best responses of all agents with cost vectors in $\widetilde{\cC}$ to the only available contract.
	Thus, there are $|A|^{|\widetilde{\cC}|}$ possible functions $g$, which can be enumerated efficiently when the size of the set $\widetilde{\cC}$ is constant.

	\paragraph{Case $\cX = \mcPr$}
	For the class of randomized contracts, by a revelation-principle-style argument, we know that there exists a regret-minimizing randomized contract that does \emph{not} include in the support two contracts that induces the same tuple of a best response per agent's cost vector. Hence, as for deterministic contracts, we only have one possible function $t$, which is one that maps each cost vector to the index $1$ (the only available contract). Moreover, we let the support of the randomized contract to be of size $|A|^{|\widetilde{\cC}|}$, and assign to each contract in the support a different action profile through $g$.

	\paragraph{Case $\cX = \mcM$}
	For the class of menus of deterministic contracts, by a revelation-principle-style argument, we know that there exists a regret-minimizing direct menu of deterministic contracts that includes a contract for each cost vector  in $\widetilde{\cC}$. Moreover, the function $t$ is such that each contract is assign to a different cost vector.
	Hence, we are left to define the function $g$. There are $|A|^{|\widetilde{\cC}|}$ possible functions that specify a best response for each cost vector.

	This concludes the proof.
\end{proof}

\end{document}